\def\OPTIONArxiv{1}
\def\OPTIONAppendix{1}
\def\OPTIONTalk{0}
\def\OPTIONConf{1}
\def\OPTIONArxiv{1}
\def\OPTIONAppendix{1}
\sffamily\fontsize{14pt}{15pt}\selectfont}
\sffamily\fontsize{12pt}{13pt}\selectfont}
\sffamily\fontsize{10pt}{11pt}\selectfont}
         \renewcommand{\abstractnamefont}%
           {\normalfont\bfseries\sffamily\fontsize{10pt}{11pt}\selectfont}
\let\MathRightArrow\Rightarrow %
\def\Rightarrow{\MathRightArrow}
\declaretheoremstyle[
  bodyfont=\rm
]{mytheoremstyle}
\declaretheorem[style=mytheoremstyle]{theorem}
\declaretheorem[style=mytheoremstyle, sibling=theorem]{lemma}
\declaretheorem[style=mytheoremstyle, sibling=lemma]{corollary}
\declaretheorem{example}
\declaretheorem[style=mytheoremstyle]{definition}
    \DeclareRobustCommand{\mksubsection}[1]{#1}
    \DeclareRobustCommand{\mksubsection}[1]{\@element#1\@nil}
    \def\@element#1#2\@nil{%
      #1%
      \if\relax#2\relax\else\MakeLowercase{#2}\fi}
\newcommand{\addmytocentry}[2]{%
   \addtocontents{toc}{\protect\contentsline {subsubsection}{\protect\numberline {#1}#2}{\thepage}{xx}}%
}
\newcommand{\MyThmRestateHook}[3]{%
  \relax
}
\newcommand{\lxbel}[1]{\label{#1}}
\newcommand{\XLabel}[1]{%
  \ifcsname IDEMPOTFLAG#1\endcsname%
      \lxbel{PROOF#1}%
      \LoudLabel{#1}%
      \global\edef\CurrentProofName{#1}%
      \addmytocentry{\ref{PROOF#1}}{\hyperref[PROOF#1]{Proof of \thmt@thmname~(\thmt@optarg)}}%
  \else%
       \@ifundefined{r@PROOF#1}{%
          \textcolor{red}{\bf NO PROOF}
       }{%
          \hyperref[PROOF#1]{\textcolor{blue}{Go to proof}}%
       }%
       \Label{#1}%
       \addmytocentry{\ref{#1}}{\hyperref[#1]{\thmt@thmname~(\thmt@optarg)}%
      }%
      \expandafter\gdef\csname IDEMPOTFLAG#1\endcsname{d}%
   \fi}
\def\OPTIONLoudLabels{0}
\def\OPTIONLoudLabels{1}
\newcommand{\newrulecommand}[2]{%
    \expandafter\newcommand\csname NoLink#1\endcsname{#2}%
    \expandafter\newcommand\csname #1\endcsname{%
        \ifcsname ZZRULE#1\endcsname%
            \hyperlink{rule:#1}{\expandafter\csname NoLink#1\endcsname}\xspace%
        \else%
             \expandafter\gdef\csname ZZRULE#1\endcsname{}%
             \hypertarget{rule:#1}{\expandafter\csname NoLink#1\endcsname}\xspace%
        \fi%
    }%
}
\newcommand{\newrulecommandONE}[2]{%
    \expandafter\newcommand\csname NoLink#1\endcsname[1]{#2}%
    \expandafter\newcommand\csname #1\endcsname[1]{%
        \ifcsname ZZRULE#1\endcsname%
            \hyperlink{rule:#1}{\expandafter\csname NoLink#1\endcsname{##1}}%
        \else%
             \hypertarget{rule:#1}{\expandafter\csname NoLink#1\endcsname{##1}}%
             \expandafter\gdef\csname ZZRULE#1\endcsname{}%
        \fi%
    }
}
\newcommand{\N}{\ensuremath{\mathsf N}}
\newcommand{\V}{\ensuremath{\mathsf V}}
\newcommand{\ditto}{\ensuremath{''}}
\newcommand{\textvtt}[1]{{\normalfont\fontfamily{cmvtt}\selectfont {#1}}}
\newcommand{\textfn}[1]{\textit{#1}}
\newcommand{\datacon}[1]{\ensuremath{\mathsf{#1}}}
\newcommand{\tyname}[1]{\ensuremath{\mathsf{#1}}}
\newcommand{\keyword}[1]{\text{\textvtt{#1}}}
\newcommand{\textkw}[1]{\keyword{#1}}
\newcommand{\xlam}{\lambda}
\newcommand{\Lam}[1]{\xlam{#1}.\,}
\newcommand{\xThunk}{\keyword{thunk}}
\newcommand{\Thunk}{\xThunk\,}
\newcommand{\thunkty}{\U}
\newcommand{\xForce}{\keyword{force}}
\renewcommand{\Force}{\xForce\,}
\newcommand{\Anno}[2]{\textt{(}#1\textt{:}#2\textt{)}}
\newcommand{\Pair}[2]{\textt{(}#1\textt{,}\;#2\textt{)}}
\newcommand{\xFix}{\ensuremath{\keyword{fix}}}
\newcommand{\Fix}[1]{\xFix~#1.\:}
\newcommand{\arr}{\rightarrow}
\newcommand{\entailssym}{\vdash}         %
\newcommand{\entails}{\,\entailssym\,}         %
\newcommand{\entailssub}[2]{\;\mathcolor{#1}{\entailssym_{#2}}\;}         %
\newcommand{\subtype}{\mathrel{\leq}}       %
\newcommand{\sectty}{\mathrel{\mathcolor{dBlue}{\land}}}          %
\newcommand{\unty}{\mathrel{\mathcolor{dDkRed}{\lor}}}          %
\newcommand{\ExisSym}{\exists}
\newcommand{\step}{\mapsto}
\newcommand{\steps}{\step^*}
\newcommand{\stepR}{\step_{\normalfont\textsf{R}}}
\newcommand{\dstep}{\rightsquigarrow}
\newcommand{\srcstep}{\rightsquigarrow}
\newcommand{\srcsteps}{\dstep^*}
\newcommand{\srcredV}{\srcstep_{\normalfont\textsf{R\V}}}
\newcommand{\srcredN}{\srcstep_{\normalfont\textsf{R\N}}}
\newcommand{\xCase}{\ensuremath{\keyword{case}}}
\newcommand{\Case}[1]{\xCase~#1~\keyword{of}~}
\newcommand{\Casesumy}[2]{\matchor \Match{\Inr{#1}}{#2}}
\newcommand{\LongCasesum}[5]{\Case{#1}{\Match{\Inl{#2}}{#3} \Casesumy{#4}{#5}}}
\newcommand{\Casesumstart}[3]{\keyword{case}(%
               {#1}%
               \textt,\,%
               {#2}%
               .%
               {#3}%
               \textt,\,%
}
\newcommand{\Casesumend}[2]{%
               {#1}%
               .%
               {#2}
               )%
}
\newcommand{\Casesum}[5]{%
               \Casesumstart{#1}{#2}{#3}\Casesumend{#4}{#5}}
\newcommand{\matchor}{\ensuremath{\normalfont\,\textt{|}\hspace{-5.35pt}\textt{|}\,}}
\newcommand{\Match}[1]{{#1}\Rightarrow}
\newcommand{\xLet}[2]{{\xxLet\;#1\;{\normalfont\textt{=}}\;#2\;\keyword{in}\;}}
\newcommand{\Int}{\tyname{int}}
\newcommand{\Nil}{\datacon{Nil}}
\newcommand{\Cons}{\datacon{Cons}}
\newcommand{\Deref}[1]{\ensuremath{{\textt{!}}{#1}}}
\newcommand{\Gets}{\ensuremath{\mathrel{\textt{:=}}}}
\newcommand{\arrayenvc}[1]{\renewcommand{\arraystretch}{1}\ensuremath{\begin{array}[c]{@{}c@{}}#1\end{array}}}
\newcommand{\arrayenvcl}[1]{\renewcommand{\arraystretch}{1}\ensuremath{\begin{array}[c]{@{}l@{}}#1\end{array}}}
\newcommand{\arrayenvl}[1]{\renewcommand{\arraystretch}{1}\ensuremath{\begin{array}[t]{@{}l@{}}#1\end{array}}}
\newcommand{\arrayenvbl}[1]{\renewcommand{\arraystretch}{1}\ensuremath{\begin{array}[b]{@{}l@{}}#1\end{array}}}
\newcommand{\tabularenvc}[1]{\renewcommand{\arraystretch}{1} \begin{tabular}[c]{@{}c@{}}#1\end{tabular}}
\newcommand{\tabularenvcl}[1]{\renewcommand{\arraystretch}{1} \begin{tabular}[c]{@{}l@{}}#1\end{tabular}}
\newcommand{\tabularenvl}[1]{\renewcommand{\arraystretch}{1} \begin{tabular}[t]{@{}l@{}}#1\end{tabular}}
\newcommand{\unitty}{\textbf{1}}
\newcommand{\unit}{\textt{()}}
\newcommand{\synabovechk}{\text{\raisebox{0.7ex}{$\chk$}\backupwidth{\chk}\hspace{0.2ex}\raisebox{-0.7ex}{$\syn$}}}
\newcommand{\chkcolor}{dBlue}
\newcommand{\syncolor}{dDkGreen}
\newcommand{\appcolor}{dDkRed}
\newcommand{\chk}{\mathrel{\mathcolor{\chkcolor}{\Leftarrow}}}
\newcommand{\uncoloredsyn}{{\Rightarrow}}
\newcommand{\syn}{\mathrel{\mathcolor{\syncolor}{\uncoloredsyn}}}
\newcommand{\xProj}[1]{\keyword{proj$_{#1}$}}
\newcommand{\Proj}[1]{\xProj{#1}\,}
\newcommand{\Inj}[2]{\keyword{inj$_{#1}$}\,#2}
\newcommand{\Inl}[1]{\Inj{1}{#1}}
\newcommand{\Inr}[1]{\Inj{2}{#1}}
\newcommand{\troll}{\keyword{roll}\;}
\newcommand{\tunroll}{\keyword{unroll}\;}
\newcommand{\D}{\mathcal{D}}
\newcommand{\Dee}{\D}
\newcounter{codeLineCntr}
\newcommand{\out}[1]{}
\newcommand{\floor}[1]{\lfloor{#1}\rfloor}
\renewcommand{\phi}{\varphi}
\newcommand{\Appendixref}[1]{Appendix \ref{#1}}
\newcommand{\Figureref}[1]{Figure \ref{#1}}
\newcommand{\Sectionref}[1]{Section \ref{#1}}
\newcommand{\Theoremref}[1]{Theorem \ref{#1}}
\newcommand{\Lemmaref}[1]{Lemma \ref{#1}}
\newcommand{\Definitionref}[1]{Definition \ref{#1}}
\newcommand{\val}{\textsl{\,value}}
\newcommand{\AND}{\textrm{~and~}}
\newcommand{\hole}{\ensuremath{[\,]}}
\newcommand{\Refexp}[1]{\keyword{ref}~{#1}}
\newcommand{\Refty}[1]{\textsf{ref}~{#1}}
\definecolor{lred}{rgb}{1.0, 0.3, 0.3}
\newlength\zzskipwidthlen
\newcommand{\backupwidth}[1]{\settowidth{\zzskipwidthlen}{\ensuremath{#1}}\hspace{-\zzskipwidthlen}}
\newenvironment{bnfarray}{\!\!\begin{tabular}[t]{c}\begin{array}[t]{@{}l@{~~}l@{~~}c@{~~}ll@{}}}{\end{array}\end{tabular}}
\newcommand{\bnfas}{\mathrel{::=}}
\newcommand{\bnfalt}{\mathrel{\mid}}
\newcommand{\xbnfaltBRK}{\!\!\!\bnfalt}
\newcommand{\bnfaltBRK}{ \\ & & & \xbnfaltBRK}
\gdef\bnfaltBR[#1]{ \\[#1] & & & \xbnfaltBRK}
\newcommand{\textt}[1]{\text{\normalfont\tt #1}}
\newdimen{\zzzpbox}
\newdimen\zzfontsz
\newcommand{\fontsz}[2]{\zzfontsz=#1%
{\fontsize{\zzfontsz}{1.2\zzfontsz}\selectfont{#2}}}
\newcommand{\mathsz}[2]{\text{\fontsz{#1}{$#2$}}}
\newlength{\zzsplatboldwidth}
\newcommand{\xsplatbold}[2]{\settowidth{\zzsplatboldwidth}{{#2}}{#2}\addtolength{\zzsplatboldwidth}{-#1}\hspace{-\zzsplatboldwidth}\raisebox{#1}{{#2}}}
\newcommand{\splatbold}[1]{\xsplatbold{-0.04mm}{#1}}
\def\url@MGstyle{%
\def\UrlFont{\tiny\ttfamily}%
\def\do@url@hyp{\do\-}%
\Url@do
}
\newcommand{\marginPseudoURL}{\begingroup \urlstyle{MG}\Url}
  \newcommand{\marginnote}[1]{\marginpar{\raggedright\scriptsize{#1}}}
\newcommand{\LoudLabel}[1]{\label{#1}%
\marginnote{\marginPseudoURL{#1}}}
\newcommand{\Label}[1]{\LoudLabel{#1}}%
\newcommand{\FLabel}[1]{\label{#1}%
{\tt\scriptsize{#1}}}%
\newcommand{\Label}[1]{\label{#1}}%
\newcommand{\FLabel}[1]{\label{#1}}%
\newcommand{\derives}{\ensuremath{\mathrel{::}}}
\newcommand{\ProofCaseRule}[1]{\item \textbf{Case }\textrm{{#1}}: ~ }
\newcommand{\ProofCasesRules}[1]{\item \textbf{Cases }\textrm{{#1}}: ~ }
\gdef\xxDerivationProofCaseColor{N}
\newcommand{\DerivationProofCase}[3]{%
     \smallskip
     \item %
       \parbox[t]{100ex}{%
       \textbf{Case } \\[-0.5em]
       $~$\hspace{5ex}
       \if\xxDerivationProofCaseColor N%
           \ensuremath{%
              \Infer{#1}{#2}{#3}%
            }
       \else%
           \colorbox{\xxDerivationProofCaseColor}{%
              \ensuremath{%
                \Infer{#1}{#2}{#3}%
              }%
           }%
        \fi%
     }%
     \nopagebreak \\[-0.4ex]
  }
\newcommand{\DoubleDerivationProofCase}[6]{%
     \smallskip
     \item %
       \parbox[t]{100ex}{%
       \textbf{Case } \\[-0.5em]
       $~$\hspace{5ex}
       \if\xxDerivationProofCaseColor N%
           \ensuremath{%
              \Infer{#1}{#2}{#3}%
              ~~~~~
              \Infer{#4}{#5}{#6}%
            }
       \else%
           \colorbox{\xxDerivationProofCaseColor}{%
              \ensuremath{%
                \Infer{#1}{#2}{#3}%
                ~~~~~
                \Infer{#4}{#5}{#6}%
              }%
           }%
        \fi%
     }%
     \nopagebreak \\[-0.4ex]
  }
\newcommand{\Infer}[3]{\ensuremath{\inferrule*[right={\text{\strut#1}}]{{}#2\mathstrut}{{}#3\mathstrut}}}
\newcommand{\BeginProof}{\renewcommand{\arraystretch}{1.1} \smallskip \begin{tabular}[b]{r@{}r @{} l  l}}
\newcommand{\EndProof}{\end{tabular} \smallskip \renewcommand{\arraystretch}{\mydefaultarraystretch}}
\newcommand{\Pf}[4] {&$#1$ $#2$\, & $#3$ & #4 \\}
\newcommand{\stepPf}[3] {\Pf{#1}{\,\step\,}{#2}{#3}}
\newcommand{\stepRPf}[3] {\Pf{#1}{\,\stepR\,}{#2}{#3}}
\newcommand{\srcstepPf}[3] {\Pf{#1}{\,\srcstep\,}{#2}{#3}}
\newcommand{\srcstepsPf}[3] {\Pf{#1}{\,\srcsteps\,}{#2}{#3}}
\newcommand{\srcredNPf}[3] {\Pf{#1}{\,\srcredN\,}{#2}{#3}}
\newcommand{\FreePf}[2]{& \multicolumn{2}{l}{#1} & {#2} \\}
\newcommand{\FreePfC}[2]{&  & {#1} & {#2} \\}
\newcommand{\LetPf}[3] {\Pf{\text{Let}\,~{#1}}{=\,}{#2\text{.}}{#3}}
\newcommand{\inPf}[3] {\Pf{#1}{\in}{#2}{#3}}
\newcommand{\ePf}[3] {\Pf{#1}{\entails\,}{#2}{#3}}
\newcommand{\eqPf}[3] {\Pf{#1}{\,=\;}{#2}{#3}}
\newcommand{\valleqPf}[3] {\Pf{#1}{\valleq}{#2}{#3}}
\newcommand{\proofsep}{\,\\[-0.5em]}
\newenvironment{llproof}{\BeginProof}{\EndProof}
\newcommand{\decolumnizePf}{\end{llproof} ~\\ \begin{llproof}}
\newcommand{\proofheading}[1]{}  %
\newcommand{\er}[1]{\ensuremath{\textsl{er}(#1)}}
\newcommand{\xxLet}{\keyword{let}}
  \newcommand{\judgboxfontsize}[1]{\mathsz{10pt}{#1}}
  \newcommand{\judgboxfontsize}[1]{\mathsz{11pt}{#1}}
\newcommand{\judgbox}[2]{%
      {\raggedright \textgraybox{\judgboxfontsize{#1}}\!\begin{tabular}[c]{l} #2 \end{tabular} %
      \ifnum\OPTIONConf=1 \\[3pt] \else \\[5pt] \fi%
}}
\newcommand{\judgboxtwelf}[3]{%
      {\raggedright \textgraybox{\judgboxfontsize{\ensuremath{#1}}}\!\begin{tabular}[c]{l} #2 \end{tabular} %
      \!\!\textgraybox{\fontsz{8pt}{#3}}%
      \ifnum\OPTIONConf=1 \\[3pt] \else \\[6pt] \fi%
}}
\newcommand{\Hand}{\text{\Pointinghand~~~~}}
\newcommand{\mathcolor}[2]{\textcolor{#1}{\ensuremath{#2}}}
\newcommand{\E}{\mathcal C}
\newcommand{\EV}{\E_\V}
\newcommand{\EN}{\E_\N}
\newcommand{\mydefaultarraystretch}{1.2}
\newcommand{\wildcard}{\text{\_\!\_}}
\newcommand{\LAMSYM}{\Lambda}
\newcommand{\LAM}[1]{\LAMSYM {#1}.~}
\newcommand{\xtyapp}[1]{\text{\normalfont\tt{[}}{#1}\text{\normalfont\tt{]}}}
\newcommand{\tyapp}[2]{{#1}\xtyapp{#2}}
\newcommand{\tylam}[1]{\LAM{#1}}
\newcommand{\ttylam}{\tylam{\wildcard}}
\newcommand{\ttyapp}[1]{\tyapp{#1}{\wildcard}}
\newenvironment{mathdispl}{\ifnum\OPTIONConf=1\vspace{-15pt}\else\vspace{-16pt}\fi \begin{center}\begin{mathpar} ~\!\!}{\end{mathpar}\end{center}}
\def\quofile{15}
\newcommand{\startMquotes}{\immediate\openout \quofile=\jobname.quo}
\newcommand{\MquoteM}[4]{{\small \label{#3} \begin{quotation} #1 \vspace{-0.5em} \flushright{---#2} \end{quotation}} { \immediate\write\quofile{\expandafter\csname Mquoteentry\endcsname{}{#2}{#3}{\expandafter\csname #4\endcsname}}}}
\newcommand{\Mquote}[4]{{\small \label{#3} \begin{quotation} #1 \vspace{-0.5em} \flushright{---#2} \end{quotation}} { \immediate\write\quofile{\expandafter\csname Mquoteentry\endcsname{}{#2}{#3}{#4}}}}
\newcommand{\AllSym}{\forall}
\newcommand{\xAll}[1]{\AllSym#1}
\newcommand{\All}[1]{\xAll{#1}.\:}
\newcounter{zzInOinkComment}
\newcommand{\oinkkw}[1]{\ifnum\value{zzInOinkComment}=0{\usefont{T1}{cmtt}{m}{it}{\splatbold{#1}}}\else{\normalfont\textsl{#1}}\fi}
\newcommand{\lladdconj}{\mathrel{\binampersand}}
\newcommand{\RZZaddconjR}[1]{\ensuremath{{\lladdconj}\text{R}}\xspace}
\newdimen\zzlistingsize
\newdimen\zzlistingsizedefault
\global\def\CommentCopter{0}
\newcommand{\Lstbasicstyle}{\fontsize{\zzlistingsize}{1.05\zzlistingsize}\ttfamily}
\newcommand{\keywordcopter}{\fontsize{1.0\zzlistingsize}{1.0\zzlistingsize}\bf}
\newcommand{\fontcopter}{\if0\CommentCopter\keywordcopter\fi}
\newcommand{\commentcopter}{\def\CommentCopter{1}\fontsize{0.95\zzlistingsize}{1.0\zzlistingsize}\rmfamily\slshape}
\newlength{\zzlstwidth}
\newcommand{\setlistingsize}[1]{\zzlistingsize=#1%
\settowidth{\zzlstwidth}{{\Lstbasicstyle~}}}
\definecolor{dHilite}{rgb}{0.9, 0.9, 0.6}
\definecolor{dLighterRed}{rgb}{0.9, 0.2, 0.2}
\definecolor{erratumcolor}{rgb}{0.9, 0.2, 0.2}
\definecolor{dRed}{rgb}{0.65, 0.0, 0.0}
\definecolor{dDkRed}{rgb}{0.35, 0.0, 0.0}
\definecolor{DRED}{rgb}{0.65, 0.0, 0.0}
\definecolor{dGreen}{rgb}{0.0, 0.65, 0.0}
\definecolor{dDkGreen}{rgb}{0.0, 0.35, 0.0}
\definecolor{dBlue}{rgb}{0.0, 0.0, 0.65}
\definecolor{dDkBlue}{rgb}{0.0, 0.0, 0.45}
\definecolor{dLightPurple}{rgb}{0.9, 0.5, 0.9}
\definecolor{dLighterPurple}{rgb}{1.0, 0.7, 1.0}
\definecolor{dPurple}{rgb}{0.65, 0.0, 0.65}
\definecolor{dDigPurple}{rgb}{0.5, 0.0, 0.5}
\definecolor{DDIGPURPLE}{rgb}{0.5, 0.0, 0.5}  %
\definecolor{dFaint}{rgb}{0.7, 0.7, 0.7}
\definecolor{dGray}{rgb}{0.5, 0.5, 0.5}
\definecolor{dDark}{rgb}{0.2, 0.2, 0.2}
\definecolor{dAlmostBlack}{rgb}{0.1, 0.1, 0.1}
\newcommand{\tytrans}[1]{|{#1}|}
\newcommand{\ctxtrans}[1]{\tytrans{#1}}
\newcommand{\econtrans}[1]{\floor{#1}}
\newcommand{\expecontrans}[1]{\econtrans{#1}}
\newcommand{\ctxecontrans}[1]{\econtrans{#1}}
\newcommand{\doublecomma}{{,\hspace{-0.2em},\hspace{0.17em}}}
\newcommand{\Merge}[2]{{#1} \doublecomma {#2}}
\newcommand{\suspsymbol}{\blacktriangleright}
\newcommand{\suspsym}[1]{{#1}{\suspsymbol}}
\newcommand{\susp}[1]{\suspsym{#1}}
\newcommand{\recsymbol}{\mu}
\newcommand{\recsym}[1]{{\recsymbol}^{#1}}
\newcommand{\rec}[2]{\recsym{#1}{#2}.\,}
\newcommand{\Rec}[1]{\recsymbol{#1}.\,}
\definecolor{impartialcolor}{rgb}{0.1, 0.1, 0.9}  %
\definecolor{econcolor}{rgb}{0.4, 0.0, 0.0}  %
\definecolor{targetcolor}{rgb}{0.2, 0.2, 0.2}  %
\newcommand{\Intro}{\textcolor{\chkcolor}{Intro}}
\newcommand{\Elim}{\textcolor{\syncolor}{Elim}}
\newcommand{\WeakElim}{\textcolor{\appcolor}{Elim}}
\newcommand{\rulename}[1]{\text{\normalfont\textsf{#1}}}
\newcommand{\Irulename}[1]{\rulename{\textcolor{impartialcolor}{$\Iglyph$#1}}\xspace}
\newcommand{\Iglyph}{\bf I}
\newcommand{\Eglyph}{\bf E}
\newcommand{\Tglyph}{\bf T}
\newcommand{\Rrulename}[1]{\text{\rulename{\textcolor{econcolor}{$\Eglyph$}#1}}}
\newcommand{\Subname}[1]{\rulename{\textcolor{econcolor}{$\subtype$#1}}\xspace}
\newcommand{\elabrulename}[1]{\rulename{\textcolor{econcolor}{$\textsl{elab}$#1}}}
\newcommand{\Mrulename}[1]{\rulename{\textcolor{targetcolor}{$\Tglyph$}#1}\xspace}
\newcommand{\Srcredrulename}[1]{\rulename{{{#1}$\mathsf{reduce}$}}}
\newcommand{\Srcsteprulename}[1]{\rulename{SrcStep#1}}
\newcommand{\Redrulename}[1]{\rulename{\textcolor{targetcolor}{{#1}$\mathsf{Reduce}$}}}
\newcommand{\Steprulename}[1]{\rulename{\textcolor{targetcolor}{Step#1}}}
\newcommand{\textcyss}[1]{{\fontencoding{OT2}\fontfamily{wncyss}\selectfont #1}}
\newcommand{\mathcyss}[1]{\text{\textcyss{#1}}}
\newcommand{\alleosym}{\mathcyss{D}}
\newcommand{\alleo}[1]{\alleosym {#1}.\,}
\newcommand{\eovar}{\mathcyss{a}}
\newcommand{\eo}{\textsl{\,evalorder}}
\newcommand{\garr}[1]{\stackrel{#1}{\rightarrow}}
\newcommand{\narr}{\garr{\N}}
\newcommand{\varr}{\garr{\V}}
\newcommand{\cbpvLetter}[1]{\mboldsf{#1}}
\newcommand{\xU}{\cbpvLetter{U}}
\newcommand{\xF}{\cbpvLetter{F}}
\renewcommand{\U}{\xU\;}
\newcommand{\mboldsf}[1]{\text{\normalfont\bfseries\sffamily\selectfont{#1}}}
\newcommand{\elabsymbol}{\hookrightarrow}
\newcommand{\elab}[4]{\xeta{#1}{#2}{#3} \elabsymbol {#4}}
\newcommand{\elabPf}[6]{\ePf{#1}{\elab{#2}{#3}{#4}{#5}}{#6}}
\newcommand{\explam}[1]{\lambda{#1}.\,}
\newcommand{\expapp}[2]{{#1} \mathrel{\appsymbol} {#2}}
\newcommand{\appsymbol}{\text{\sf @}}
\newcommand{\appsym}[1]{{\appsymbol}^{#1}}
\newcommand{\app}[1]{\,\appsym{#1}\,}
\newcommand{\vappsymbol}{\app{\V}}
\newcommand{\nappsymbol}{\app{\N}}
\newcommand{\vapp}{\mathrel{\appsymbol}}
\newcommand{\tunit}{\unit}
\newcommand{\tfix}[1]{\Fix{#1}}
\newcommand{\tpair}[2]{\textt({#1}{\hspace{-0.2ex}\textt{,}}\ {#2}\textt)}
\newcommand{\tinj}[1]{\textkw{inj}_{#1}\,}
\newcommand{\tproj}[1]{\Proj{#1}}
\newcommand{\tlam}[1]{\Lam{#1}}
\newcommand{\xxtcase}{\textkw{case}}
\newcommand{\tcase}[5]{\xxtcase\textt(%
               {#1}%
               \textt,\,%
               {#2}%
               .%
               {#3}%
               \textt,\,%
               {#4}%
               .%
               {#5}
               \textt)%
               }
\newcommand{\valof}[1]{\text{\normalfont\sf valueness}(#1)}
\newcommand{\etypeoft}{\entailssub{targetcolor}{\Tglyph}}
\newcommand{\xtypeoft}[2]{{#1} : {#2}}
\newcommand{\typeoft}[3]{{#1} \etypeoft \xtypeoft{#2}{#3}}
\newcommand{\typeoftPf}[4]{\Pf{#1}{\etypeoft}{\xtypeoft{#2}{#3}}{#4}}
\newcommand{\etypeof}{\entailssub{impartialcolor}{\Iglyph}}
\newcommand{\xtypeof}[2]{{#1} : {#2}}
\newcommand{\chksym}[1]{{{}_{#1}{\chk}}}
\newcommand{\synsym}[1]{{{}_{#1}{\syn}}}
\newcommand{\xchk}[2]{{#1} \mathrel{\chksym{#2}}}
\newcommand{\xsyn}[2]{{#1} \mathrel{\synsym{#2}}}
\newcommand{\chktype}[4]{{#1} \etypeof \xchk{#2}{#3}{#4}}
\newcommand{\syntype}[4]{{#1} \etypeof \xsyn{#2}{#3}{#4}}
\newcommand{\chktypePf}[5]{\Pf{#1}{\etypeof}{\xchk{#2}{#3}{#4}}{#5}}
\newcommand{\syntypePf}[5]{\Pf{#1}{\etypeof}{\xsyn{#2}{#3}{#4}}{#5}}
\newcommand{\etypeofe}{\entailssub{econcolor}{\Eglyph}}
\newcommand{\xtypeofe}[2]{{#1} : {#2}}
\newcommand{\valuenesscolor}{black}
\newcommand{\VAL}{\textcolor{\valuenesscolor}{\normalfont\sf val}}
\newcommand{\NONVAL}{\mathcolor{\valuenesscolor}{\top}}
\newcommand{\VVAR}{\mathcolor{\valuenesscolor}{\varphi}}
\newcommand{\valleq}{\sqsubseteq}
\newcommand{\xechk}[3]{{#1} \mathrel{\chksym{#2}} {#3}}
\newcommand{\xesyn}[3]{{#1} \mathrel{\synsym{#2}} {#3}}
\newcommand{\xetasym}[1]{\mathrel{{}_{#1}{:}}}
\newcommand{\xeta}[3]{{#1} \xetasym{#2} {#3}}
\newcommand{\chktypee}[4]{{#1}~\etypeofe \xechk{#2}{#3}{#4}}
\newcommand{\chktypeePf}[5]{\Pf{#1}{\etypeofe}{\xechk{#2}{#3}{#4}}{#5}}
\newcommand{\syntypee}[4]{{#1}~\etypeofe \xesyn{#2}{#3}{#4}}
\newcommand{\syntypeePf}[5]{\Pf{#1}{\etypeofe}{\xesyn{#2}{#3}{#4}}{#5}}
\newcommand{\varsymbol}{:}
\newcommand{\var}[2]{{#1} \varsymbol {#2}}
\newcommand{\RuleHead}[1]{\text{\raisebox{0.4em}[0pt]{\ensuremath{\mathsz{\ifnum\OPTIONConf=1 16pt\else 24pt \fi}{#1}}}}~~~~~}
\newcommand{\type}{\textsl{\,type}}
\newcommand{\Vable}{\tilde{V}}
\newcommand{\textgraybox}[1]{\boxed{#1}}
\newcommand{\tightcolorbox}[2]{\setlength{\fboxsep}{1pt}\colorbox{#1}{#2}}
\newcommand{\mathcolorbox}[2]{\text{\tightcolorbox{#1}{$\displaystyle {#2}$}}}
\newcommand{\hilite}[2]{\mathcolorbox{#1}{{#2}\mathstrut}}
\newcommand{\fighi}[1]{\hilite{yellow!45}{#1}}
\newcommand{\joinsym}{\sqcup}
\newcommand{\join}{\mathrel{\joinsym}}
\newcommand{\byinv}[1]{By inversion on rule #1}
\newlength{\erratumrule}
\long\def\OpenFBox#1#2#3{\fboxsep\FrameSep
   \CustomFBox{}{}{#1}{#2}\FrameRule\FrameRule{\hspace{5pt}#3\hspace{5pt}}}
\long\def\CustomFBox#1#2#3#4#5#6#7{%
  \leavevmode\begingroup
  \setbox\@tempboxa\hbox{%
    \color@begingroup
      \kern\fboxsep{#7}\kern\fboxsep
    \color@endgroup}%
  \hbox{%
    \begingroup
      \@tempdima#4\relax
      \advance\@tempdima\fboxsep
      \advance\@tempdima\dp\@tempboxa
    \expandafter\endgroup\expandafter
    \lower\the\@tempdima\hbox{%
      \vbox{%
        {\color{erratumcolor}\hrule\@height#3}%
        #1%
        \hbox{%
          {\color{erratumcolor}\vrule\@width#5}%
          \vbox{%
            \vskip\fboxsep %
            \copy\@tempboxa
            \vskip\fboxsep}%
          {\color{erratumcolor}\vrule\@width#6\relax}}%
        #2%
        {\color{erratumcolor}\hrule\@height#4}\relax}%
    }%
  }%
  \endgroup
}
\newcommand{\erratumnotifycore}{
        \textcolor{erratumcolor}{\!\fontsz{16pt}{!}} \\
        {\fontsz{6pt}{cf.\ Erratum}} \\
        \fontsz{7pt}{page \pageref{erratum}}
}
\newcommand{\erratumnotify}[1]{%
  \marginpar{~\\[#1]
        \erratumnotifycore}
}
\newcommand{\erratumnotifyfloatleft}[1]{%
  \text{%
     \rput[tr]{0}(-6pt,0){%
         \parbox[s]{30pt}{%
             \vspace{#1}%
             \erratumnotifycore}}}%
  \hfill%
}
\newenvironment{erratumbox}{%
  \vspace*{4pt}%
  \def\FrameRule{\erratumrule}%
  \def\FrameSep{0pt}%
  \def\FirstFrameCommand{\OpenFBox\FrameRule\z@}%
  \def\MidFrameCommand{\OpenFBox\z@\z@}%
  \def\LastFrameCommand{\fboxsep30pt
         \OpenFBox\z@\FrameRule}%
  \erratumnotify{3pt}
  \MakeFramed {\advance\hsize0pt \FrameRestore}%
  \vspace*{3pt}
  }{\vspace*{3pt}\endMakeFramed\vspace*{3pt}}
\newcommand{\errorhi}[1]{%
        \psframebox[framesep=1pt,linecolor=erratumcolor,linewidth=\erratumrule]{#1}}
\newcommand{\matherrorhi}[1]{\errorhi{\ensuremath{#1}}}
\newcommand{\ErratumSection}[1]{\section*{\protect\errorhi{Erratum:} #1}}
\newcommand{\SectionHiliteStrong}[1]{\colorbox{dLightPurple}{#1}}
\newcommand{\SectionHilite}[1]{\colorbox{dLighterPurple}{#1}}
\newcommand{\spamfuscatedemail}{jd~\!\hspace{-1.7pt}169@queensu.ca}
\title{Elaborating Evaluation-Order Polymorphism}
\begin{document}
\maketitle

\begin{abstract}
  We classify programming languages according to
  evaluation order:
  each language fixes one evaluation order
  as the default, making it transparent to program in that evaluation order,
  and troublesome to program in the other.
  This paper develops a type system that is impartial with respect to evaluation order.
  Evaluation order is implicit in terms, and explicit in types, with by-value and by-name
  versions of type connectives.
  A form of intersection type quantifies over evaluation orders,
  describing code that is agnostic over (that is, polymorphic in) evaluation
  order.  By allowing such generic code, programs can express the by-value and
  by-name versions of a computation without code duplication.

  We also formulate a type system that only has by-value connectives, plus a
  type that generalizes the difference between by-value and by-name connectives:
  it is either a suspension (by name) or a ``no-op'' (by value).  We show a straightforward
  encoding of the impartial type system into the more economical one.
  Then we define an elaboration from the economical language to a
  call-by-value semantics, and prove that elaborating
  a well-typed source program, where evaluation order is implicit, produces a well-typed
  target program where evaluation order is explicit.
  We also prove a simulation between evaluation of the target program and
  reductions (either by-value or by-name) in the source program.

  Finally, we prove that typing, elaboration, and evaluation are faithful
  to the type annotations given in the source program: if the programmer
  only writes by-value types, no by-name reductions can occur at run time.
\end{abstract}

\ifnum\OPTIONConf=1
    {\fontsize{8pt}{9pt}\selectfont%
    {\raggedright
     \category{F.3.3}{Mathematical Logic and Formal Languages}{Studies of Program Constructs---Type structure}}
     ~\\ ~\\[-23pt]
     \keywords evaluation order, intersection types, polymorphism
    }
\else
\fi
\setcounter{footnote}{0}

\section{Introduction}

It is customary to distinguish languages according to how they pass function arguments.
We tend to treat this as a basic taxonomic distinction: for example, OCaml is a call-by-value
language, while Haskell is call-by-need.  
Yet this taxonomy has been dubious from the start:
Algol-60, in which arguments were call-by-name by default, also
supported call-by-value.
For the $\lambda$-calculus, \citet{Plotkin75} showed how to use
\emph{administrative reductions}
to translate a cbv program into one that behaves equivalently under cbn evaluation,
and vice versa.  Thus, one can write a call-by-name program in a call-by-value language,
and a call-by-value program
in a call-by-name language, but at the price of administrative burdens:
creating and forcing thunks (to simulate call-by-name), or using special
strict forms of function application, binding, etc.\ (to simulate call-by-value).

But programmers rarely want to encode an entire program into a different
evaluation order.  Rather, the issue is how to use
the other evaluation order in \emph{part} of a program.  For example, game search
can be expressed elegantly using a lazy tree, but in an ordinary call-by-value language
one must explicitly create and force thunks.  Conversely, a big advantage
of call-by-value semantics is the relative ease of reasoning about cost (time and space);
to recover some of this ease of reasoning, languages that are not call-by-value often
have strict versions of function application and strictness annotations on types.

\paragraph{An impartial type system.}
For any given language, the language designers' favourite evaluation order is the
linguistically \emph{unmarked} case.  Programmers are not forced to use that order,
but must do extra work to use another,
even in languages with mechanisms specifically designed to mitigate these burdens, such as a
\textit{lazy} keyword~\citep{Wadler98}.

The first step we'll take in this paper is to stop playing favourites:
our source language allows each evaluation order to be used as easily
as the other.
Our \emph{impartial type system} includes by-value and by-name versions of function types
($\varr$, $\narr$), product types ($*^\V$, $*^\N$), sum types
($+^\V$, $+^\N$) and recursive types ($\recsym{\V}$, $\recsym{\N}$).
Using bidirectional typing, which distinguishes checking and
inference, we can use information found in the types of functions
to determine whether an unmarked $\xlam$ or application should be
interpreted as call-by-name or call-by-value.

What if we want to define the same operation over both 
evaluation orders, say \textfn{compose}, or
\textfn{append} (that is, for strict and lazy lists)?
Must we write two identical versions, with
nearly-identical type annotations?  No:  We can use polymorphism based on
intersection types.  The abstruse reputation of intersection types
is belied by a straightforward formulation as implicit products~\citep{Dunfield14},
a notion also used by \citet{Chen14} to express polymorphism
over a finite set of levels (though without using the word ``intersection'').
In these papers' type systems, elaboration takes a polymorphic source program
and produces a target program explicitly specifying 
necessary, but tedious, constructs.  For
\citet{Dunfield14}, the extra constructs introduce and eliminate the products that
were implicit in the source language; for \citet{Chen14}, the extra constructs
support a dynamic dependency graph for efficient incremental computation.

In this paper, we express the intersection type $\sectty$ as a universal
quantifier over evaluation orders.  For example,
the type $\alleo{\eovar} \Int \garr{\eovar} \Int$
corresponds to $(\Int \varr \Int) \sectty (\Int \narr \Int)$.
Thus, we can type code that is generic over evaluation
orders. %
Datatype definitions, expressed as recursive/sum types, can also be polymorphic
in evaluation order; for example, operations on binary search trees can be written
just once.
Much of the theory in this paper follows smoothly from existing work on intersection
types, particularly \citet{Dunfield14}.  However, since we only consider intersections
equivalent to the 
quantified type $\alleo{\eovar} A$, our intersected types have parametric structure:
they differ only in the evaluation orders decorating the connectives.
This limitation, a cousin of the \emph{refinement restriction} in
datasort refinement systems~\citep{Freeman91,DaviesThesis},
avoids the need for a merge construct~\citep{Reynolds96:Forsythe,Dunfield14}
and the issues that arise from it.

\begin{figure}[t]
$\hspace{-10pt}$\begin{tikzpicture}
  [auto, node distance=2cm, >=stealth, %
   descr/.style= {fill=white, inner sep=2.5pt, anchor=center}
  ]
  \node (lsrc) {%
  };
  \node [right of=lsrc,
          node distance=1.3cm,
          label={above:{
           $\begin{array}[t]{l@{~}l@{~}l@{~}l@{~}l}
             \varr & *^\V & +^\V & \recsym{\V}
             \\
             \narr & *^\N & +^\N & \recsym{\N}
             \\
             &~\AllSym
             &\!\!\fighi{\alleosym}
           \end{array}$              
          }}
          ] (upperleft) {$e \synabovechk \tau\mathstrut$};
  \node [above of=upperleft, node distance=2.1cm] (upperleftabove)
          {\tabularenvl{Impartial \\ ~type system}};
  \node [above of=upperleftabove, right=0.5cm, node distance=0.6cm] (upperleftabove) {\bf Source language ($e$)};
  \node [right of=upperleft, node distance=2.0cm,
           label={above:{
                  \arrayenvc{
                     {\arr}
                     ~{*}
                     ~{+}
                     ~{\recsymbol}
                    \\
                    \fighi{\susp{\V}}~\fighi{\susp{\N}}
                    \\
                    \AllSym
                    ~~~
                      \fighi{\alleosym}%
                   }
           }}
        ] (uppercentre) {$e \synabovechk S\mathstrut$};
     \node [above of=uppercentre, node distance=1.9cm]
              (uppercentreabove)
              {\tabularenvl{Economical \\ ~type system}};
     \draw [->] (upperleft)  -- node[below] {\tabularenvc{encode \\ types}} (uppercentre);

   \node[right of=uppercentre, node distance=1.5cm,
           label={above:{\textsl{er}ase types}}
   ] (uppererase) {$\er{e} : S$};

  \node [below of=uppererase, node distance=2.6cm] (lowererase) {$e' : S$};

  \draw [->,
         decorate,decoration=
               {zigzag,segment length=4pt,amplitude=0.8pt,post=lineto,post length=3pt}
      ]
      ($(uppererase.south)+(-7pt,1pt)$)
            -- node[left] {%
                   \!\!\begin{tabular}{l}
                     cbv + cbn \\ evaluation \\
                     ($\geq 0$ steps)
                   \end{tabular}\!\!\!\!}
      ($(lowererase.north)+(-7pt,1pt)$);   
   
   \node[right of=uppererase, node distance=3.0cm,
           label={above:{
                  \arrayenvc{
                     {\arr}
                     ~{*}
                     ~{+}
                     ~{\recsymbol}
                    \\
                    \U\text{(thunk)}
                    \\
                    \AllSym
                   }
           }}
      ] (upperright) {$M : A$};
   \node [below of=upperright, node distance=2.6cm] (lowerright) {$W : A$};

   \node [above of=upperright, node distance=1.6cm, left=-0.6cm] (upperrightabove) {Cbv type system};

   \node [above of=upperrightabove, node distance=0.4cm, left=-1.2cm] (upperrightaboveabove) {\bf Target language ($M$)};

 \draw [right hook->] (uppererase)  -- node[below] {elaborate} (upperright);

 \draw [right hook->] (lowererase)  -- node[above] {elaborate} (lowerright);

  \draw [|->]
      ($(upperright.south)+(-7pt,1pt)$)
            -- node[left] {%
                   \!\!\begin{tabular}{l}
                     standard cbv
                     \\
                     evaluation
                     \\
                     ($\geq 0$ steps)
                   \end{tabular}\!\!\!\!\!}
      ($(lowerright.north)+(-7pt,1pt)$);   
\end{tikzpicture}
\caption{Encoding and elaboration}
\FLabel{fig:diagram}
\end{figure}

\paragraph{A simple, fine-grained type system.}
The source language just described meets our goal of impartiality, but
the large number of connectives yields a slightly unwieldy type system.
Fortunately, we can refine this system by 
abstracting out the differences between the by-name and by-value versions of each connective.
That is, each by-name connective corresponds to a by-value connective with suspensions
(thunks) added:
the by-name function type $S_1 \narr S_2$ corresponds to $(\thunkty S_1) \arr S_2$
where $\arr$ is by-value, whereas $S_1 \varr S_2$ is simply $S_1 \arr S_2$.  Here,
$\thunkty S_1$ is a \emph{thunk type}---essentially $\unitty \arr S_1$.
We realize this difference through a connective $\susp{\epsilon} S$,
read ``$\epsilon$ suspend $S$'', where $\susp{\N} S$
corresponds to $\thunkty S$ and $\susp{\V} S$ is equivalent to $S$.
This gives an economical type system with call-by-value versions of the usual connectives
($\arr$, $*$, $+$, $\mu$), plus $\susp{\epsilon} S$.
This type system is biased towards call-by-value (with call-by-name being
``marked''), but we can easily encode the impartial connectives:
$S_1 \garr{\epsilon} S_2$ becomes $(\susp{\epsilon} S_1) \arr S_2$,
the sum type $S_1 +^\epsilon S_2$ becomes $\susp{\epsilon} (S_1 + S_2)$, etc.

Another advantage of this type system is that, in combination with
polymorphism, it is simple to define variants of data structures that
mix different evaluation orders.  For example, a single list definition can encompass
lists with strict ``next pointers'' (so that ``walking'' the list is guaranteed
linear time) and lazy elements (so that examining the element may not be constant
time), as well as lists with lazy ``next pointers'' and strict contents (so that
``walking'' the list is not guaranteed linear---but once a cons cell has been
produced, its element can be accessed in constant time).

Having arrived at this economical type system for source programs, in which evaluation
order is implicit in terms, we develop an elaboration that produces a target program in which
evaluation order is explicit: thunks are explicitly created and forced, and multiple versions
of functions---by-value and by-name---are generated and selected explicitly.

\paragraph{Contributions.}
This paper makes the following contributions:

\begin{itemize}
\item[(\S\ref{sec:source})]
  We define an \emph{impartial} source language and type system that are equally suited
  to call-by-value and call-by-name.
  Using a type $\alleo{\eovar} \tau$ that quantifies over evaluation orders $\eovar$,
  programmers can define data structures and functions that are generic
  over evaluation order.  The type system is bidirectional, alternating
  between checking an expression against a known type (derived from
  a type annotation) and synthesizing a type from an expression.

\item[(\S\ref{sec:econ})]
  Shifting to a call-by-value perspective, we abstract out the 
  suspensions implicit in the by-name connectives,
  yielding a smaller \emph{economical type system},
  also suitable for a (non-impartial) source language.
  We show that programs well-typed in the impartial type system
  remain well-typed in the economical type system.
  Evaluation order remains implicit in terms, and is specified only in type annotations,
  using the \emph{suspension point} $\susp{\epsilon} S$.

\item[(\S\ref{sec:elab})]
  We give \emph{elaboration typing} rules from the economical type system
  into target programs with fully explicit evaluation order.  We prove that,
  given a well-typed source program, the result of the translation is well-typed in
  a call-by-value target language (\Sectionref{sec:target}).

\item[(\S\ref{sec:consistency})]
  We prove that the target program behaves like the source program:
  when the target takes a step from $M$ to $M'$, the source program that elaborated to $M$
  takes some number of steps, yielding an expression that elaborates to $M'$.
  We also prove that if a program is typed (in the economical type system) without
  by-name suspensions, the source program can take only ``by-value steps''
  possible in a cbv semantics.  This result exploits a kind of subformula property of
  the bidirectional type system.  Finally, we prove that if a program is impartially typed
  without using by-value connectives, it can be economically typed without by-name
  suspensions.

\end{itemize}

{\noindent Figure} \ref{fig:diagram} shows the structure of our
approach.

\paragraph{Extended version with appendices.}
Proofs omitted from the main paper for space reasons can
be found in \citet{Dunfield15arxiv}.

\section{Source Language and Impartial Type System}
\Label{sec:source}

\begin{figure}[htbp]
  \centering

  \begin{bnfarray}
    \text{Program variables}
    & x
    &&&
    \\
    \text{Source expressions}
    & e
    &\bnfas&
        \unit
        \bnfalt
        x
        \bnfalt
        u
        \bnfalt
        \explam{x} e
        \bnfalt
        \expapp{e_1}{e_2}
        \bnfalt
        \Fix{u} e
    \bnfaltBRK
        \tylam{\alpha} e
        \bnfalt
        \tyapp{e}{\tau}
    \bnfalt
        \Anno{e}{\tau}
    \bnfaltBRK
        \Pair{e_1}{e_2}        \bnfalt        \Proj{k}{e}
    \bnfaltBRK
        \Inj{k} e
        \bnfalt
        \Casesum{e}{x_1}{e_1}{x_2}{e_2}
  \end{bnfarray}

  \caption{Impartial source language syntax}
  \FLabel{fig:source}
\end{figure}

\begin{figure}[thbp]
  
  \centering
  \begin{bnfarray}
    \text{Evaluation order vars.}
    & \eovar
    &&&
    \\
    \text{Evaluation orders}
    & \epsilon
    &\bnfas&
       \V
       \bnfalt
       \N
       \bnfalt
       \eovar
    \\
    \text{Type variables}
    & \alpha
    &&&
    \\
    \text{Valuenesses}
    & \VVAR
    &\bnfas&
       \VAL
       \bnfalt
       \NONVAL
    \\
    \text{Source types}
    & \tau
    &\bnfas&
       \unitty
       \bnfalt
       \alpha
       \bnfalt
       \All{\alpha} \tau
       \bnfalt
       \alleo{\eovar} \tau
       \bnfalt%
       \tau_1 \garr{\epsilon} \tau_2
       \bnfaltBRK
       \tau_1 *^\epsilon \tau_2
       \bnfalt
       \tau_1 +^\epsilon \tau_2
       \bnfalt
       \rec{\epsilon}{\alpha} \tau
    \\[0.3ex]
    \text{\small Source typing contexts}
    & \gamma
    &\bnfas&
       \cdot
       \bnfalt
       \gamma, \xsyn x \VVAR \tau
       \bnfalt
       \gamma, \xsyn u \NONVAL \tau
       \bnfaltBRK
       \gamma, \eovar \eo
       \bnfalt
       \gamma, \alpha \type
  \end{bnfarray}
  \caption{Impartial types for the source language}
  \FLabel{fig:impartial-types}
\end{figure}

In our source language (\Figureref{fig:source}),
expressions $e$ are the unit value $\unit$, variables $x$, abstraction
$\explam{x} e$, application $\expapp{e_1}{e_2}$, fixed points $\Fix{u} e$
with fixed point variables $u$,
pairs and projections,
and sums $\Inj{k} e$ with conditionals
$\Casesum{e}{x_1}{e_1}{x_2}{e_2}$ (shorthand for
$\LongCasesum{e}{x_1}{e_1}{x_2}{e_2}$).
Both of our type systems for this source language---the impartial type system in this
section, and the economical type system of \Sectionref{sec:econ}---have features not
evident from the source syntax: polymorphism over evaluation orders, and recursive types.

\subsection{\mksubsection{Values}}
\Label{sec:values}
If we wanted a standard call-by-value language, we would give a grammar for
values, and use values to define the operational semantics (and to impose
a value restriction on polymorphism introduction).
But we want an impartial
language, which means that a function argument $x$ is a value \emph{only} if
the function is being typed under call-by-value.  That is, when checking
$(\Lam{x} e)$ against type $(\tau \varr \tau)$, the variable $x$ should be considered
a value (it will be replaced with a value at run time), but when checking against
$(\tau \narr \tau)$, it should not be considered a value (it could be replaced with
a non-value at run time).  Since ``valueness'' depends on typing,
our typing judgments will have to carry information about whether an expression
should be considered a value.

We will also use valueness to impose a value restriction on polymorphism
over evaluation orders, as well as polymorphism over types; see \Sectionref{sec:value-restriction}.
In contrast, our operational semantics for the source language (\Sectionref{sec:source-opsem}),
which permits two flavours (by-value and by-name) of reductions, will use
a standard syntactic definition of values in the by-value reductions.

\subsection{\mksubsection{An Impartial Type System}}

In terms of evaluation order, the expressions in \Figureref{fig:source} are
a blank slate.  You can imagine them as having whichever evaluation order
you prefer.   You can write down the typing rules for functions,
pairs and sums, and you will get the same rules regardless of which
evaluation order you chose.
This is the conceptual foundation for many
functional languages: start with the simply-typed $\lambda$-calculus, choose
an evaluation order, and build up the language from there.\footnote{%
  The choice need not be easy.  The first call-by-name language, Algol 60,
  also supported call-by-value.  It seems that call-by-value was
  the language committee's preferred default, but
  Peter Naur, the editor of the Algol 60 report, independently
  reversed that decision---which he said was merely one of a ``few matters
  of detail''~\citep[p.\ 112]{Wexelblat81}.
  A committee member, F.L.\ Bauer, said this showed that Naur ``had 
  absorbed the Holy Ghost after the Paris meeting\dots there was nothing one could
  do\dots it was to be swallowed for the sake of loyalty.''~\citep[p.\ 130]{Wexelblat81}.
}
Our goal here is to allow different evaluation orders to be mixed.
As a first approximation, we can try to put evaluation orders in the type system
simply by decorating all the connectives.  For example, in place of the
standard $\arr$-introduction rule
\[
   \Infer{}
        {\gamma, x : \tau_1 \entails e : \tau_2}
        {\gamma \entails (\Lam{x}e) : (\tau_1 \arr \tau_2)}
\]
we can decorate $\arr$ with an evaluation order $\epsilon$ (either $\V$ or $\N$):
\[
   \Infer{}
        {\gamma, x : \tau_1 \entails e : \tau_2}
        {\gamma \entails (\Lam{x}e) : (\tau_1 \garr{\epsilon} \tau_2)}
\]
Products $*$, sums $+$, and recursive types $\mu$ follow similarly.

We add a universal quantifier
$\alleo{\eovar} \tau$
over evaluation orders\footnote{%
  The Cyrillic letter $\alleosym$, transliterated
  into English as \emph{D}, bears some
  resemblance to an $\textsf{A}$ (and thus to $\AllSym$);
  more interestingly, it is the first letter of the Russian word
  \mathcyss{da} (\emph{da}).  Many non-Russian speakers know that this word
  means ``yes'', but another meaning is ``and'', connecting it to intersection types.
}.  %
Its rules follow the usual type-assignment rules for $\AllSym$:
the introduction rule is parametric over 
an arbitrary evaluation order $\eovar$, and the elimination rule
replaces $\eovar$ with a particular evaluation order $\epsilon$:
\[
\Infer{}%
      {{\gamma, \eovar \eo}\entails{e}:{\tau}
      }
      {{\gamma}\entails{e}:{\alleo{\eovar} \tau}}
~~~~~~
\Infer{}%
     {{\gamma}\entails{e}:{\alleo{\eovar} \tau}
       \\
       \gamma \entails \epsilon \eo}
     {{\gamma}\entails{e}:{[\epsilon / \eovar]\tau}}
\]
These straightforward rules have a couple of issues:

\begin{itemize}
\item Whether a program diverges can depend on whether it is run under
  call-by-value, or call-by-name.  The simply-typed $\lambda$-calculus has the
  same typing rules for call-by-value and call-by-name, because those rules cannot
  distinguish programs that return something from programs that diverge.
  Since we want to elaborate to call-by-value or call-by-name depending on
  which type appeared, evaluation depends on the particular typing derivation.
  Suppose that evaluation of $e_2$ diverges, and that $f$ is bound to $(\Lam{x} e_1)$.
  Then whether
  $\expapp{f}{e_2}$ diverges depends on whether the type
  of $f$ has $\varr$ or $\narr$.
  The above rules allow a compiler to make either choice.  Polymorphism in the form
  of $\alleosym$
  aggravates the problem: it is tempting to
  infer for $f$ the principal type $\alleo{\eovar} \cdots \garr{\eovar} \cdots$;
  the compiler can then choose how to instantiate $\eovar$ at each of $f$'s call sites.
  Allowing such code is one of this paper's goals, but only when the programmer
  knows that either evaluation order is sensible and has written an appropriate
  type annotation or module signature.
  
  We resolve this through bidirectional typing, which ensures that quantifiers
  are introduced only via type annotation (a kind of subformula property).
  Internal details of the typing derivation still affect elaboration, and thus evaluation,
  but the internal details will be consistent with programmers' expressed intent.

\item If we extend the language with effects, we may need a value restriction
  in certain rules.  For example, mutable references will break type safety
  unless we add a value restriction to the introduction rules for $\AllSym$
  and $\alleosym$.
  
  A traditional value restriction~\citep{Wright95} would simply require
  changing $e$ to $v$ in the introduction rules, where $v$ is a class of
  syntactic values.  In our setting,
  whether a variable $x$ is a value depends on typing,
  so a value restriction is less straightforward.
  We resolve this by extending the typing judgment with information
  about whether the expression is a value.
\end{itemize}

\paragraph{Bidirectional typing.}
We can refine the traditional typing judgment into \emph{checking}
and \emph{synthesis} judgments.  In the checking judgment $e \chk \tau$,
we already know that $e$ should have type $\tau$, and are checking that $e$ is
consistent with this knowledge.  In the synthesis judgment $e \syn \tau$,
we extract $\tau$ from $e$ itself (perhaps directly from a type annotation),
or from assumptions available in a typing context.

The use of bidirectional typing~\citep{Pierce00,Dunfield13} is
often motivated by the need to typecheck programs that use features
Damas-Milner inference cannot handle,
such as indexed and refinement types~\citep{XiThesis,Davies00icfpIntersectionEffects,Dunfield04:Tridirectional}
and higher-rank polymorphism.
But decidability is not our motivation for using bidirectional typing.
Rather, we want typing to remain
predictable even though evaluation order is implicit.
By following the approach of \citet{Dunfield04:Tridirectional}, in which
``introduction forms check, elimination forms synthesize'', we ensure that
the evaluation orders in typing match what programmers intended: a type
connective with a $\V$ or $\N$ evaluation order can be introduced \emph{only}
by a checking judgment.  Since the types in checking judgments are derived
from type annotations, they match the programmer's expressed intent.

Programmers must write annotations on expressions
that are redexes: in $\expapp{(\Lam{x} e)}{e_2}$,
the $\lambda$ needs an annotation, because $\Lam{x} e$ is an introduction form
in an elimination position: $\expapp{\hole}{e_2}$.
In contrast, $\expapp{f}{(\Lam{x} e_2)}$ needs no annotation,
though the type of $f$ must be derived (if indirectly) from an annotation.
Recursive functions $\Fix{u} \Lam{x} e$ ``reduce'' to their unfolding,
so they also need annotations.

\paragraph{Valueness.}
Whether an expression is a value may depend on typing, so we put a \emph{valueness}
in the typing judgments: $\xsyn e \VAL S$ (or $\xchk e \VAL S$) means that
$e$ at type $S$ is definitely a value,
while $\xsyn e \NONVAL S$
(or $\xchk e \NONVAL S$)
means that $e$ at type $S$ is not known to be a value.
In the style of abstract interpretation,
we have a partial order $\valleq$ such that $\VAL \valleq \NONVAL$.
Then the \emph{join} $\VVAR_1 \join \VVAR_2$ is $\VAL$
when $\VVAR_1 = \VVAR_2 = \VAL$, and $\NONVAL$ otherwise.
g
Since valueness is just a projection of $\epsilon$, we could formulate the
system without it, using $\epsilon$ to mark judgments as
denoting values ($\V$) or possible nonvalues ($\N$).  But that seems
prone to confusion: is $\chksym{\N}$ saying the expression is
``by name'' in some sense?

\paragraph{Types and typing contexts.}
In \Figureref{fig:impartial-types} we show the grammar for
evaluation orders $\epsilon$, which are either by-value ($\V$), by-name ($\N$),
or an evaluation order variable $\eovar$.  We have the unit type $\unitty$,
type variables $\alpha$, ordinary parametric polymorphism $\All{\alpha} \tau$,
evaluation order polymorphism $\alleo{\eovar} \tau$,
functions $\tau_1 \garr{\epsilon} \tau_2$,
products $\tau_1 *^{\epsilon} \tau_2$,
sums $\tau_1 +^{\epsilon} \tau_2$,
and recursive types $\rec{\epsilon}{\alpha} \tau$.

A source typing context $\gamma$ consists of variable declarations $\xsyn x \VVAR \tau$
denoting that $x$ has type $\tau$ with valueness $\VVAR$, fixed-point variable declarations
$\xsyn u \NONVAL \tau$ (fixed-point variables are never values),
evaluation-order variable declarations $\eovar \eo$, and type variable declarations
$\alpha \type$.

\begin{figure*}[htbp]
  \centering
  \judgbox{\valof{\epsilon} = \VVAR}
               {Evaluation order $\epsilon$ maps to valueness $\VVAR$}
  \vspace{-4ex}
  $~$\hspace{33ex}\begin{tabular}[t]{r@{~~~}c@{~~~}lll}
    $\valof{\V}$ &$=$& \VAL
  \\
    $\valof{\N}$ &$=$& \NONVAL
  \\
    $\valof{\eovar}$ &$=$& \NONVAL
  \end{tabular}

  \vspace{-1.0ex}

      \judgbox{\arrayenvcl{\chktype{\gamma}{e}{\VVAR}{\tau}
                      \\
                      \syntype{\gamma}{e}{\VVAR}{\tau}}}
                   {Source expression $e$ checks against impartial type $\tau$
                    \\[0.7ex]
                    Source expression $e$ synthesizes impartial type $\tau$}
  \vspace{-1.0ex}
  \begin{mathpar}
      \Infer{\Ivar}
            {(\xsyn x \VVAR \tau) \in \gamma}
            {\syntype \gamma x \VVAR \tau}
      \and
      \Infer{\Ifixvar}
            {(\xsyn u \NONVAL \tau) \in \gamma}
            {\syntype \gamma u \NONVAL \tau}
      ~~~~
      \Infer{\Ifix}
            {\chktype{\gamma, \xsyn u \NONVAL \tau}{e}{\VVAR}{\tau}}
            {\chktype{\gamma}{(\Fix{u} e)}{\NONVAL}{\tau}}
    ~~~~~~~~
      \Infer{\Isub}
            {\syntype{\gamma}{e}{\VVAR}{\tau}}
            {\chktype{\gamma}{e}{\VVAR}{\tau}}
    ~~~~~~
      \Infer{\Ianno}
            {\chktype{\gamma}{e}{\VVAR}{\tau}}
            {\syntype{\gamma}{\Anno{e}{\tau}}{\VVAR}{\tau}}
\\
\RuleHead{\AllSym}
\Infer{\Iallintro}
      {\chktype{\gamma, \alpha \type}{e}{\VAL}{\tau}
      }
      {\chktype{\gamma}{\tylam{\alpha} e}{\VAL}{\All{\alpha} \tau}}
\and
\Infer{\Iallelim}
     {\syntype{\gamma}{e}{\VVAR}{\All{\alpha} \tau}
      \\
      \gamma \entails \tau' \type}
     {\syntype{\gamma}{\tyapp{e}{\tau'}}{\VVAR}{[\tau' / \alpha]\tau}}
\and
\RuleHead{\unitty}
   \Infer{\Iunitintro}
             {}
             {\chktype{\gamma}{\unit}{\VAL}{\unitty}}
   \\
\RuleHead{\alleosym}
\Infer{\Ialleointro}
      {\chktype{\gamma, \eovar \eo}{e}{\VAL}{\tau}
      }
      {\chktype{\gamma}{e}{\VAL}{\alleo{\eovar} \tau}}
\and
\Infer{\Ialleoelim}
     {\syntype{\gamma}{e}{\VVAR}{\alleo{\eovar} \tau}
       \\
       \gamma \entails \epsilon \eo}
     {\syntype{\gamma}{e}{\VVAR}{[\epsilon / \eovar]\tau}}
   \\
    \RuleHead{\garr{\epsilon}}
        \Infer{\Iarrintro}
              {\chktype{\gamma, (\xsyn x {\valof\epsilon} \tau_1)}{e}{\VVAR}{\tau_2}
              }
              {\chktype{\gamma}{(\explam{x} e)}{\VAL}{(\tau_1 \garr{\epsilon} \tau_2)}}
       \and
       \Infer{\Iarrelim}
              {\syntype{\gamma}{e_1}{\VVAR_1}{(\tau_1 \garr{\epsilon} \tau_2)}
               \\
               \chktype{\gamma}{e_2}{\VVAR_2}{\tau_1}
              }
              {\syntype{\gamma}{(\expapp{e_1}{e_2})}{\NONVAL}{\tau_2}
              }
    \\
    \RuleHead{*^\epsilon}
          \Infer{\Iprodintro}
                 {\chktype{\gamma}{e_1}{\VVAR_1}{\tau_1}
                   \\
                   \chktype{\gamma}{e_2}{\VVAR_2}{\tau_2}
                  }
                  {\chktype{\gamma}{\Pair{e_1}{e_2}}{\VVAR_1 \join \VVAR_2}{(\tau_1 *^\epsilon \tau_2)}}
           ~~~~~
           \Infer{\Iprodelim{k}}
                 {
                   \syntype{\gamma}
                           {e}
                           {\VVAR}
                           {(\tau_1 *^\epsilon \tau_2)}
                 }
                 {
                   \syntype{\gamma}
                        {(\Proj{k} e)}
                        {\NONVAL}
                        {\tau_k}
                 }
          \hspace{42ex}
    \vspace{-2.0ex}
    \\
    \RuleHead{+^\epsilon}
       \Infer{\Isumintro{k}}
             {\chktype{\gamma}{e}{\VVAR}{\tau_k}
             }
             {\chktype{\gamma}{(\Inj{k} e)}{\VVAR}{(\tau_1 +^\epsilon \tau_2)}}
       \and
       \Infer{\Isumelim}
              {   \syntype{\gamma}{e}{\VVAR_0}{(\tau_1 +^\epsilon \tau_2)}
                  \\
                  \arrayenvbl{
                      \chktype{\gamma, (\xsyn{x_1}{\VAL}{\tau_1})}{e_1}{\VVAR_1}{\tau}
                      \\
                      \chktype{\gamma, (\xsyn{x_2}{\VAL}{\tau_2})}{e_2}{\VVAR_2}{\tau}
                  }
              }
              {\chktype{\gamma}{\Casesum{e}{x_1}{e_1}{x_2}{e_2}}{\NONVAL}{\tau}
              }
    \\
    \RuleHead{\recsym{\epsilon}}
       \Infer{\Irecintro}
             {\chktype{\gamma}{e}{\VVAR}{\big[(\rec{\epsilon}{\alpha}\tau)\big/\alpha\big]\tau}
             }
             {\chktype{\gamma}{e}{\VVAR}{\rec{\epsilon}{\alpha} \tau}}
       \and
       \Infer{\Irecelim}
              {   \syntype{\gamma}{e}{\VVAR}{\rec{\epsilon}{\alpha} \tau}
              }
              {
                \syntype{\gamma}{e}{\NONVAL}{\big[(\rec{\epsilon}{\alpha}\tau)\big/\alpha\big]\tau}
              }
    \end{mathpar}

  \caption{Impartial bidirectional typing for the source language}
  \FLabel{fig:impartial-typing}
\end{figure*}

\paragraph{Impartial typing judgments.}
\Figureref{fig:impartial-typing} shows the bidirectional rules for impartial typing.
The judgment forms are
$\chktype{\gamma}{e}{\VVAR}{\tau}$, meaning that $e$ checks against $\tau$
(with valueness $\VVAR$), and $\syntype{\gamma}{e}{\VVAR}{\tau}$, meaning
that $e$ synthesizes type $\tau$.  The ``I'' on the turnstile stands for ``impartial''.

\paragraph{Connective-independent rules.}
Rules \Ivar and \Ifixvar simply use assumptions stored in $\gamma$.
Rule \Ifix checks a fixed point $\Fix{u} e$ against type $\tau$ by introducing the assumption
$\xsyn u \NONVAL \tau$ and checking $e$ against $\tau$; its premise has valueness $\VVAR$
because even if $e$ is a value, $\Fix{u} e$ is not ($\NONVAL$ in the conclusion).

Rule \Isub says that if $e$ synthesizes $\tau$ then $e$ checks against $\tau$.  For example,
in the (ill-advised) fixed point expression $\Fix{u} u$, the premise of \Ifix tries to check
$u$ against $\tau$, but \Ifixvar derives a synthesis judgment, not a checking judgment;
\Isub bridges the gap.

Rule \Ianno also mediates between synthesis and checking, in the opposite direction:
if we can check an expression $e$ against an annotated type $\tau$,
then $\Anno{e}{\tau}$ synthesizes $\tau$.

\paragraph{Introductions and eliminations.}
The rest of the rules are linked to type connectives.  For easy reference, the figure
shows each connective to the left of its introduction and elimination rules.
We follow the recipe of \citet{Dunfield04:Tridirectional}:
introduction rules check, and elimination rules synthesize.  This recipe
yields the smallest sensible set of rules, omitting some rules that are not
absolutely necessary but can be useful in practice.  For example, our rules never
synthesize a type for an unannotated pair, because the pair is an introduction form.

Rule \Isumelim follows the recipe, despite having a checking judgment in its
conclusion: the connective being eliminated, $+^\epsilon$, \emph{is} synthesized
(in the first premise).

\paragraph{Functions.}
Rule \Iarrintro introduces the type $\tau_1 \garr{\epsilon} \tau_2$.
Its premise adds an assumption $\xsyn x {\valof\epsilon} {\tau_1}$, where
$\valof\epsilon$ is $\VAL$ if $\epsilon = \V$, and $\NONVAL$ if $\epsilon$ is $\N$
or is an evaluation-order variable $\eovar$.  This rule thereby encompasses both
variables that will be substituted with values ($\valof\epsilon = \VAL$) and variables that
might be substituted with non-values ($\valof\epsilon = \NONVAL$).
Applying a function of type $\tau_1 \garr{\epsilon} \tau_2$ yields
something of type $\tau_2$ regardless of $\epsilon$, so \Iarrelim ignores $\epsilon$.

Consistent with the usual definition of syntactic values, \Iarrintro's conclusion has $\VAL$,
while \Iarrelim's conclusion has $\NONVAL$.

In rule \Iarrelim, the first premise has the connective to eliminate, so the first premise
synthesizes $(\tau_1 +^\epsilon \tau_2)$.  This provides the type $\tau_1$, so
the second premise is a checking judgment; it also provides $\tau_2$, so the conclusion
synthesizes.

\paragraph{Products.}
Rule \Iprodintro types a value if and only if both $e_1$ and $e_2$ are typed as values,
so its conclusion has valueness $\VVAR_1 \join \VVAR_2$.

\paragraph{Sums.}
Rule \Isumintro{k} is straightforward.
In rule \Isumelim, the assumptions added to $\gamma$ in the branches
say that $x_1$ and $x_2$ are values ($\VAL$), because our by-name sum type
is ``by-name'' on the \emph{outside}.  This point should become more clear
when we see the translation of types into the economical system.

\paragraph{Recursive types.}
Rules \Irecintro and \Irecelim have the same $e$ in the premise and conclusion,
without explicit ``roll'' and ``unroll'' constructs.
In a non-bidirectional type inference system, this would be awkward
since the expression doesn't give direct clues about when to apply these rules.
In this bidirectional system, the type tells us to apply \Irecintro (since its conclusion
is a checking judgment).  Knowing when to apply \Irecelim is more subtle: we should
try to apply it whenever we need to synthesize some \emph{other} type connective.  For instance,
the first premise of \Isumelim needs a $+$, so if we synthesize a $\mu$-type
we should apply \Irecelim in the hope of exposing a $+$.

The lack of explicit [un]rolls suggests that these are not iso-recursive but equi-recursive
types~\citep[chapter 20]{Pierce02:TAPL}.  However, we don't semantically equate a recursive type with its unfolding,
so perhaps they should be called \emph{implicitly} iso-recursive.

Note that an implementation would need to check that the type under the
$\mu$ is guarded by a type connective that does have explicit constructs,
to rule out types like $\rec{\epsilon}{\alpha} \alpha$, which is its own unfolding
and could make the typechecker run in circles.

\paragraph{Explicit type polymorphism.}
In contrast to recursive types, we explicitly introduce and eliminate
type polymorphism via the expressions $\tylam{\alpha} e$ and $\tyapp{M}{\tau}$.
This guarantees that a $\forall$ can be instantiated with a type
containing a particular evaluation order if and only if such a type
appears in the source program.

\paragraph{Principality.}
Suppose $\syntype \gamma {e_1} \phi {\alleo{\eovar} \tau_1 \arr \tau_2}$.
Then, for any $\epsilon$, we can derive
$\syntype \gamma {\expapp{e_1}{e_2}} \NONVAL {[\epsilon/\eovar]\tau_2}$.
But we can't use \Ialleointro to derive the type $\alleo{\eovar'} [\eovar'/\eovar]\tau_2$,
because $\expapp{e_1}{e_2}$.  The only sense in which this expression
has a principal type is if we have an evaluation-order variable in $\gamma$
that we can substitute for $\eovar$.

\subsection{\mksubsection{Programming with Polymorphic Evaluation Order}}
\Label{sec:impartial-examples}

\paragraph{Lists and streams.}
The impartial type system can express lists and
(potentially terminating) streams in a single declaration:
\[
   \textkw{type}~\tyname{List}\;\eovar\;\alpha
   ~=~
         \rec{\eovar}{\beta}
         \big(
             \unitty
          +^\eovar
             (
                \alpha *^\eovar \beta
             )
         \big)
\]
Choosing $\eovar = \V$ yields 
$
         \rec{\V}{\beta}
         \big(
             \unitty
          +^\V
             (
                \alpha *^\V \beta
             )
         \big)
$, which is the type of lists of elements $\alpha$.
Choosing $\eovar = \N$ yields
$
         \rec{\N}{\beta}
         \big(
             \unitty
          +^\N
             (
                \alpha *^\N \beta
             )
         \big)
$, which is the type of streams that may end---essentially, lazy lists.
Since evaluation order is implicit in source expressions, we can write
operations on $\tyname{List}\;\eovar\;\alpha$ that work for lists
\emph{and} streams:

\begin{mathdispl}
  \arrayenvbl{
      \textfn{map}
      :
      \alleo{\eovar}
         \All{\alpha}
             (\alpha \garr{\V} \beta)
             \garr{\V}
             (\tyname{List}\;\eovar\;\alpha)
             \garr{\V}
             (\tyname{List}\;\eovar\;\beta)
      \\
      =~
          \tylam{\alpha}
              \Fix{map}
                  \Lam{f}
                      \Lam{xs}
      \\
      ~~~~~~~~
      \Casesumstart{xs}{x_1}{\Inj{1} \unit}
      \\~~~~~~~~~~~~~~~~~~~~~~~~
       \Casesumend{x_2}{%
            \Inj{2} \Pair{\expapp{f}{(\Proj{1} x_2)}}
                                   {\expapp{\expapp{map}{f}}{(\Proj{2} x_2)}}
      }
  }
\end{mathdispl}
This sugar-free syntax bristles; in an implementation
with conveniences like pattern-matching on tuples and named constructors,
we could write
\begin{mathdispl}
  \arrayenvbl{
      \textfn{map}\;f\;xs
      :
      \alleo{\eovar}
         \All{\alpha}
             (\alpha \garr{\V} \beta)
             \garr{\V}
             (\tyname{List}\;\eovar\;\alpha)
             \garr{\V}
             (\tyname{List}\;\eovar\;\beta)
      \\
      =~
      \Case{xs} \Match{\Nil} \Nil
      \\~~~~~~~~~~~~~~~~~~~
       \matchor
            \Match{\Cons\Pair{hd}{tl}}
                   \Cons\Pair{f\;hd}{\textfn{map}\;f\;tl}
  }
\end{mathdispl}
Note that, except for the type, this is standard code for \textfn{map}.

Even this small example raises interesting questions:

\begin{itemize}
\item Must \emph{all} the connectives in \tyname{List}
   have $\eovar$?  No.  Putting $\eovar$ on either the $\mu$ or the
   $+$ and writing $\V$ on the other connectives is enough to get
   stream behaviour when
   $\eovar$ is instantiated with $\N$:
   the only
   reason to eliminate (unroll) the $\mu$ is to eliminate (case on) the $+$;
   marking either connective will suspend the underlying computation.
   Marking both $\mu$ and $+$ induces a suspension of a suspension,
   where forcing the outer suspension immediately forces the inner one;
   one of the suspensions is superfluous.
   
   Note that marking only $*$ with
   $\eovar$, that is,
        $\rec{\V}{\beta}
         \big(
             \unitty
          +^\V
             (
                \alpha *^\eovar \beta
             )
         \big)$, yields an ``odd'' data structure~\citep{Wadler98},
         one that is not entirely lazy:
         we know immediately---without forcing a thunk---which
         injection we have (\ie whether we have $\Nil$ or $\Cons$).
 
\item What evaluation orders should we use in the type of \textfn{map}?
   We used by-value ($\garr{\V}$), but we could
   use the same evaluation order as the list:
   $
      \alleo{\eovar}
         \All{\alpha}
             (\alpha \garr{\eovar} \beta)
             \garr{\eovar}
             (\tyname{List}\;\eovar\;\alpha)
             \garr{\eovar}
             (\tyname{List}\;\eovar\;\beta)
   $.
   This essentially gives ``ML-ish'' behaviour when $\eovar = \V$,
   and ``Haskell-ish'' behaviour when $\eovar = \N$.  The
   type system, however, permits other variants---even the
   outlandishly generic
   \[
      \alleo{\eovar_1,
              \eovar_2,
              \eovar_3,
              \eovar_4,
              \eovar_5}
         \All{\alpha}
             (\alpha \garr{\eovar_1} \beta)
             \garr{\eovar_2}
             (\tyname{List}\;\eovar_3\;\alpha)
             \garr{\eovar_4}
             (\tyname{List}\;\eovar_5\;\beta)
   \]
\end{itemize}
We leave deeper investigation of these questions to future work:
our purpose, in this paper, is to develop the type systems that make
such questions matter.

\paragraph{Variations in being odd and even.}
The Standard ML type of ``streams in odd style''~\citep[Fig.\ 1]{Wadler98},
given by
\[
\mathsz{8pt}{\texttt{\textkw{datatype} $\alpha$ stream = Nil | Cons of $\alpha$ * $\alpha$ stream susp}}
\]
where \texttt{$\alpha$ stream susp} is the type of a thunk that yields an \texttt{$\alpha$ stream},
can be represented as the impartial type
        $\rec{\V}{\beta}
         \big(
             \unitty
          +^\V
             (
                \alpha *^\V (\rec{\N}{\gamma} \beta)
             )
         \big)$.
Note the slightly awkward $(\rec{\N}{\gamma} \beta)$, in which $\gamma$
doesn't occur; we can't simply write $\rec{N}{\beta}$ on the outside, because
that would suspend the entire sum.  (In the economical type system in
\Sectionref{sec:econ}, it's easy to put the suspension in either position.)
This type differs subtly from another ``odd'' stream type,
        $\rec{\V}{\beta}
         \big(
             \unitty
          +^\V
             (
                \alpha *^\eovar \beta
             )
         \big)$,
which corresponds to the SML type
\[
\mathsz{8pt}{\texttt{\textkw{datatype} $\alpha$ stream = Nil | Cons of \fighi{\texttt(}$\alpha$ * $\alpha$ stream\fighi{\texttt)} susp}}
\]
Here, the contents $\alpha$ are under the suspension; given a value
of this type, we immediately know whether we have $\Nil$ or
$\Cons$, but we must force a thunk to see what the value is, which
will also reveal whether the tail is $\Nil$ or $\Cons$.

We can also encode ``streams in even style''~\citep[Fig.\ 2]{Wadler98}:
The SML declarations
\[
\arrayenvl{
\mathsz{8pt}{\texttt{\textkw{datatype} $\alpha$ stream\_ = Nil\_ | Cons\_ of $\alpha$ * $\alpha$ stream}}
\\
\mathsz{8pt}{\texttt{\textkw{withtype} $\alpha$ stream~~= $\alpha$ stream\_ susp}}
}
\]
correspond to
        $\rec{\N}{\beta}
         \big(
             \unitty
          +^\V
             (
                \alpha *^\V \beta)
             )
         \big)$, with the $\N$ on $\recsymbol$ playing the role of the
\texttt{withtype} declaration.

\citet{Wadler98} note that ``streams in odd style'' can be encoded with ease
in SML, while ``streams in even style'' can be encoded with difficulty
(see their Figure 2).  In the impartial type system, both encodings are straightforward,
and we would only need to write one (polymorphic) version of each of their functions
over streams.

\paragraph{Binary trees.}
As with lists, we can define evaluation-order-polymorphic trees:
\[
   \textkw{type}~\tyname{Tree}\;\eovar\;\alpha
   ~=~
         \rec{\eovar}{\beta}
         \big(
             \unitty
          +^\V
             (
                \alpha *^\V \beta *^\V \beta
             )
         \big)
\]
Here, only $\mu$ is polymorphic in $\eovar$, to suppress
redundant thunks.

\subsection{\mksubsection{Operational Semantics for the Source Language}}
\Label{sec:source-opsem}

\begin{figure}[t]
  \centering

  $~$\!\!\!\!\!\!\begin{bnfarray}
    \text{Source values}
    & v
    &\bnfas&
        \unit
        \bnfalt
        \explam{x} e
        \bnfalt        \Pair{v_1}{v_2}
        \bnfalt
        \Inj{k} v
\\[0.6ex]
     \text{\small By-value eval.\ contexts}
            & \EV
            &\bnfas&
              \hole
              \bnfaltBRK
              \EV \vapp e_2
              \bnfalt
              v_1 \vapp \EV
              \bnfaltBRK
              \Pair{\EV}{e_2}
              \bnfalt
              \Pair{v_1}{\EV}
              \bnfalt
              \Proj{k}{\EV}
              \bnfaltBRK
              \tinj{k}{\EV}
              \bnfalt
              \Casesum{\EV}{x_1}{e_1}{x_2}{e_2}
     \\[0.6ex]
     \text{\small By-name eval.\ contexts}
            & \EN
            &\bnfas&
              \hole
              \bnfaltBRK
              \EN \vapp e_2
              \bnfalt
              \matherrorhi{e_1 \vapp \EN}
              \bnfaltBRK
              \matherrorhi{\Pair{\EN}{e_2}}
              \bnfalt
              \matherrorhi{\Pair{{e_1}}{\EN}}
              \bnfalt
              \Proj{k}{\EN}
              \bnfaltBRK
              \tinj{k}{\EN}
              \bnfalt
              \Casesum{\EN}{x_1}{e_1}{x_2}{e_2}
  \end{bnfarray}

  {\erratumnotifyfloatleft{-40pt}\hfill}

  \judgbox{e \srcstep e'}
        {Source expression $e$ steps to $e'$}
  \vspace{-3ex}
  \begin{mathpar}
     \Infer{\SrcStepContextV}
           {e \srcredV e'}
           {\EV[e] \srcstep \EV[e']}
~~~~~
     \Infer{\SrcStepContextN}
           {e \srcredN e'}
           {\EN[e] \srcstep \EN[e']}
  \end{mathpar}

  \judgbox{\arrayenvcl{e \srcredV e' \\ e \srcredN e'}}{%
     \tabularenvcl{$e$ reduces to $e'$ by value
       \\[0.3ex]
       $e$ reduces to $e'$ by name}}
  \begin{array}[t]{r@{~~}c@{~~}ll}
    \expapp{(\Lam{x} e_1)}{v_2}
           &\srcredV&{}
          [v_2/x]e_1
    & \SrcRedBetaV
\\[0.2ex]
          \expapp{(\Lam{x} e_1)}{e_2}
          &\srcredN&{}
          [e_2/x]e_1
    & \SrcRedBetaN
\\[0.5ex]
          (\Fix{u} e)
          &\srcredV&
            \big[(\Fix{u} e) \big/ u\big]e          
    & \SrcRedFixV
\\[0.2ex]
          (\Fix{u} e)
    &\srcredN&
            \big[(\Fix{u} e) \big/ u\big]e
    & \SrcRedFixN
\\[0.5ex]
        \Proj{k}{\Pair{v_1}{v_2}}
         &\srcredV&
         v_k
    & \SrcRedProjV
\\[0.2ex]
        \Proj{k}{\Pair{e_1}{e_2}}
         &\srcredN&
         e_k
    & \SrcRedProjN
\\[0.5ex]
          \multicolumn{3}{l}{\Casesum{\tinj{k} v}{x_1}{e_1}{x_2}{e_2}
           ~~\srcredV~~
           [v/x_k]e_k}
    & \SrcRedCaseV
\\[0.2ex]
                    \multicolumn{3}{l}{\Casesum{\tinj{k} e}{x_1}{e_1}{x_2}{e_2}
           ~~\srcredN~~
           [e/x_k]e_k}
    & \SrcRedCaseN
  \end{array}
  
  \caption{Source reduction}
  \FLabel{fig:src-step}
\end{figure}

\begin{figure}[t]
  \centering

  \judgbox{\er{e} = e'}
               {Source expression $e$ erases to $e'$}
  \begin{tabular}[t]{r@{~~}c@{~~}lll}
    $\er{\fighi{\tylam{\alpha} e}}$ &$=$& $\er{e}$
  \\[0.2ex]
    $\er{\fighi{\tyapp{e}{S}}}$ &$=$& $\er{e}$
  \\[0.2ex]
    $\er{\fighi{\Anno{e}{S}}}$ &$=$& $\er{e}$
  \end{tabular}
  ~
  \begin{tabular}[t]{r@{~~}c@{~~}lll}
    $\er{\unit}$ &$=$& $\unit$
  \\
    $\er{x}$ &$=$& $x$
  \\
    $\er{\expapp{e_1}{e_2}}$ &$=$& $\expapp{\er{e_1}}{\er{e_2}}$
  \\
  & etc.
  \end{tabular}

  \caption{Erasing types from source expressions}
  \FLabel{fig:er}
\end{figure}

A source expression takes a step
if a subterm in evaluation position can be reduced.
We want to model by-value computation \emph{and}
by-name computation, so we define the source stepping relation
$\srcstep$
usings two notions of evaluation position
and two notions of reduction.  A \emph{by-value evaluation context}
$\EV$ is an expression with a hole $\hole$, where $\EV[e]$
is the expression with $e$ in place of the $\hole$.
If $e$ reduces by value to $e'$, written $e \srcredV e'$,
then $\EV[e] \srcstep \EV[e']$.
For example, if $e_2 \srcredV e_2'$ then
$\expapp{v_1}{e_2} \srcstep \expapp{v_1}{e_2'}$,
because $\expapp{v_1}{\hole}$ is a by-value evaluation context.

\definecolor{TFFrameColor}{rgb}{0.9, 0.2, 0.2}
\definecolor{TFTitleColor}{rgb}{1.0, 1.0, 0.0}

\begin{erratumbox}
Dually, $\EN[e] \srcstep \EN[e']$
if $e \srcredN e'$.  Every by-value context is a by-name context,
and every pair related by $\srcredV$ is also related by $\srcredN$,
but the converses do not hold. For instance,
$\expapp{e_1}{\hole}$ is a $\EN$ but not a $\EV$,
and $\Proj{2} \Pair{e_1}{e_2} \srcredN e_2$,
but $\Proj{2} \Pair{e_1}{e_2}$ reduces by value only
when $e_1$ and $e_2$ are values.

Values, by-value evaluation contexts $\EV$, by-name evaluation contexts $\EN$,
and the relations $\srcstep$, $\srcredV$ and $\srcredN$ are
defined in \Figureref{fig:src-step}.  The definitions of $v$, $\EV$ and $\srcredV$,
taken together, are standard for call-by-value;
the definitions of $\EN$ and $\srcredN$ are standard for call-by-name.
The peculiarity is that $\srcstep$ can behave either by value
(rule \SrcStepContextV) or by name (rule \SrcStepContextN).
\end{erratumbox}

We assume that the expressions being reduced have been erased (\Figureref{fig:er}),
so we omit a rule for reducing annotations.  Alternatives are discussed in
\Sectionref{sec:source-side-consistency}.

\subsection{\mksubsection{Value Restriction}}
\Label{sec:value-restriction}

Our calculus excludes effects such as mutable references; however, to allow it to serve
as a basis for larger languages, we impose a value restriction on
certain introduction rules.  Without this restriction, the system would be unsound
in the presence of mutable references.
Following \citet{Wright95},
the rule \Iallintro requires that its subject be a value, as in Standard ML~\citep{RevisedDefinitionOfStandardML}.
A similar value restriction is needed for intersection types~\citep{Davies00icfpIntersectionEffects}.  The following example shows the need for the restriction on $\alleosym$:
\vspace{-2pt}
\[
    \arrayenvbl{
        \xLet{r : \Refty{(\alleo{\eovar} \tau \garr{\eovar} \tau)}}%
             {\Refexp{f}}
        \\
        ~~~r \Gets g;
        ~~~h (\Deref{r})
    }
\vspace{-4pt}
\]
Assume we have $f : \alleo{\eovar} \tau{\garr{\eovar}}\tau$ and $g : \tau \narr \tau$
and $h : (\tau{\varr}\tau) \varr \tau$.
By a version of \Ialleointro that doesn't require its subject to be a value, we have
$r : \alleo{\eovar} \Refty{(\tau \garr{\eovar} \tau)}$.
By \Ialleoelim with $\N$ for $\eovar$, we have $r : \Refty{(\tau \narr \tau)}$, making the assignment $r \Gets g$ well-typed.
However, by \Ialleoelim with $\V$ for $\eovar$, we have
$r : \Refty{(\tau \varr \tau)}$.  It follows that the dereference $\Deref{r}$ has type $\tau \varr \tau$,
so $\Deref{r}$ can be passed to $h$.  But $\Deref{r} = g$ is actually call-by-name.
If $h = \Lam{x} x(e_2)$, we should be able to assume that $e_2$ will be evaluated
exactly once, but $x = g$ is call-by-name, violating this assumption.

If we think of $\alleosym$ as an intersection type, so that $r$ has type
$(\tau \varr \tau)
\sectty
(\tau \narr \tau)
$, the example and argument closely follow \citet{Davies00icfpIntersectionEffects}
and, in turn, \citet{Wright95}.  (For union types, a similar problem arises, which can be
solved by a dual solution---restricting the union-elimination rule to evaluation contexts~\citep{Dunfield03:IntersectionsUnionsCBV}.)

\subsection{\mksubsection{Subtyping and $\eta$-Expansion}}

Systems with intersection types often include subtyping.
The strength of subtyping in intersection type systems varies,
from syntactic approaches
that emphasize simplicity (\eg \citet{Dunfield03:IntersectionsUnionsCBV})
to semantic approaches
that emphasize completeness (\eg \citet{Frisch02}).
Generally, subtyping---at minimum---allows intersections to be transparently eliminated
even at higher rank (that is, to the left of an arrow), so that the following
function application is well-typed:
\[
   f : \big((\tau_1 \sectty \tau_1') \arr \tau_2\big) \arr \tau_3,
   \;
   g : (\tau_1 \arr \tau_2)
   \entails
   f\;g : \tau_3
\]
Through a subsumption rule, $g : (\tau_1 \arr \tau_2)$ checks against type
$(\tau_1 \sectty \tau_1') \arr \tau_2$, because a function that accepts all values of
type $\tau_1$ should also accept all values that have type $\tau_1$ \emph{and} type
$\tau_1'$.

Using the analogy between intersection and $\alleosym$, in our impartial type system,
we might expect to derive
\vspace{-1pt}
\[
   f : \big((\alleo{\eovar} \tau_1{\garr{\eovar}}\tau_1){\varr}\tau_2\big) \varr \tau_3,
   \;
   g : (\tau_1{\garr{\N}}\tau_1) \varr \tau_2
   \entails
   f\;g : \tau_3
\vspace{-2pt}
\]
Here, $f$ asks for a function of type
$\big(\alleo{\eovar} \tau_1{\garr{\eovar}}\tau_1) \varr \tau_2\big)$,
which works on all evaluation orders; but $g$'s type $(\tau_1 \garr{\N} \tau_1) \varr \tau_2$
says that $g$ calls its argument only by name.

For simplicity, this paper excludes subtyping:
our type system does not permit this derivation.  But it would be possible
to define a subtyping system, and incorporate subtyping into the subsumption
rule \Isub---either by treating $\alleosym$ similarly to $\AllSym$ \citep{Dunfield13},
or by treating $\alleosym$ as an intersection type \citep{Dunfield03:IntersectionsUnionsCBV}.
A simple subtyping system could be derived from the typing rules that are \emph{stationary}---where the premises
type the same expression as the conclusion~\citep{Leivant86}.
For example, \Ialleoelim corresponds to
\[
   \Infer{${\subtype}{\alleosym}$-Left}
        {\Gamma \entails \epsilon \eo}
        {\Gamma \entails (\alleo{\eovar} \tau) \subtype [\epsilon/\eovar]\tau}
\]
Alternatively, $\eta$-expansion can substitute for subtyping: even without subtyping
and a subsumption rule, we can derive
\[
\arrayenvl{
   f : \big((\alleo{\eovar} \tau_1{\garr{\eovar}}\tau_1) \arr \tau_2\big) \arr \tau_3,
   \\
   ~~~g : (\tau_1 \garr{\N} \tau_1) \arr \tau_2
   ~\entails~
   f\;(\Lam{x} g\;x) : \tau_3
}
\]
This idea, developed by \citet{Barendregt83}, can be
automated; see, for example, \citet{Dunfield14}.

\section{Economical Type System}
\Label{sec:econ}

\begin{figure}[thbp]
  \centering

  \judgbox{\econtrans{\tau} = S}
        {Impartial type $\tau$ translates to economical type $S$}
  \hspace*{-6.5ex}
  \begin{tabular}[t]{r@{~~}c@{~~}l}
      $\econtrans{\unitty}$  &$=$&   $\unitty$
      \\
      $\econtrans{\tau_1 \garr{\epsilon} \tau_2}$
         &$=$&
         $\left(\fighi{\susp{\epsilon}} \econtrans{\tau_1}\right) \arr \econtrans{\tau_2}$
         \hspace{-2ex}
      \\
      $\econtrans{\tau_1 +^\epsilon \tau_2}$
         &$=$&
         $\fighi{\susp{\epsilon}} \left(\econtrans{\tau_1} + \econtrans{\tau_2}\right)$
         \hspace{-2ex}
      \\
      $\econtrans{\tau_1 *^\epsilon \tau_2}$
         &$=$&
         $\left(\fighi{\susp{\epsilon}} \econtrans{\tau_1}\right) * \left(\fighi{\susp{\epsilon}} \econtrans{\tau_2}\right)$
         \hspace{-4ex}
  \end{tabular}
  ~\hspace{0ex}
  \begin{tabular}[t]{r@{~~}c@{~~}l}
      $\econtrans{\alleo{\eovar} \tau}$
          &$=$&
          $\alleo{\eovar} \econtrans{\tau}$
      \\
      $\econtrans{\rec{\epsilon}{\alpha} \tau}$  &$=$&   $\Rec{\alpha} \fighi{\susp{\epsilon}} \econtrans{\tau}$
      \\
      $\econtrans{\All{\alpha} \tau}$  &$=$&   $\All{\alpha} \econtrans{\tau}$
      \\
      $\econtrans{\alpha}$  &$=$&   $\alpha$
  \end{tabular}\hspace{-6ex}

  \medskip

  \judgbox{\ctxecontrans{\gamma} = \Gamma}
          {Impartial context $\gamma$ translates to economical context $\Gamma$}
  \vspace{-0.5ex}
  \hspace*{-7ex}
  \begin{tabular}[t]{r@{~}c@{~}lll}
      $\ctxecontrans{\cdot}$  &$=$&   $\cdot$
      \\
      $\ctxecontrans{\gamma, \alpha \type}$  &$=$&   $\ctxecontrans{\gamma}, \alpha \type$
      \\
      $\ctxecontrans{\gamma, \xsyn u \NONVAL \tau}$
          &$=$&
          $\ctxecontrans{\gamma}, \xtypeoft{u}{\fighi{\econtrans \tau}}$
  \end{tabular}\hspace{-2ex}
  \begin{tabular}[t]{r@{~}c@{~}lll}
      $\ctxecontrans{\gamma, \eovar \eo}$  &$=$& $\ctxecontrans{\gamma}, \eovar \eo$
     \\%
      $\ctxecontrans{\gamma, \xsyn x \VAL \tau}$
          &$=$&
          $\ctxecontrans{\gamma}, \xtypeoft{x}{\fighi{\susp{\V} \econtrans{\tau}}}$
      \\
      $\ctxecontrans{\gamma, \xsyn x \NONVAL \tau}$
          &$=$&
          $\ctxecontrans{\gamma}, \xtypeoft{x}{\fighi{\susp{\N} \econtrans{\tau}}}$
  \end{tabular}\hspace{-6ex}

  \medskip

  \judgbox{\expecontrans{e} = e'}
          {Expression $e$ with $\tau$-annotations
            \\
            ~~translates to expression $e'$ with $S$-annotations}
  \begin{tabular}[t]{r@{~~}c@{~~}lll}
      $\expecontrans{\Anno{e}{\tau}}$  &$=$&   $\Anno{\expecontrans{e}}{\fighi{\econtrans{\tau}}}$
      \\
      $\expecontrans{\tyapp{e}{\tau}}$  &$=$&   $\tyapp{\expecontrans{e}}{\fighi{\econtrans{\tau}}}$
      \\
      $\expecontrans{\expapp{e_1}{e_2}}$  &$=$&   $\expapp{\expecontrans{e_1}}{\expecontrans{e_2}}$ 
      \\
      etc.
  \end{tabular}

  \caption{Type translation into the economical language}
  \FLabel{fig:econtrans}
\end{figure}

\begin{figure*}[htbp]
  $~$\!\!\!\!\!\begin{bnfarray}
    \text{\small Economical types}
    & S
    &\bnfas&
       \unitty
       \bnfalt
       \alpha
       \bnfalt
       \All{\alpha} S
       \bnfalt
       \alleo{\eovar} S
       \bnfalt
       \fighi{\susp{\epsilon} S}
       \bnfaltBRK
       S_1 \arr S_2
       \bnfalt
       S_1 * S_2
       \bnfalt
       S_1 + S_2
       \bnfalt
       \Rec{\alpha} S
  \end{bnfarray}
  \!\!\!\!\!
  \begin{bnfarray}
    \text{\small Econ.\ typing contexts}
    & \Gamma
    &\bnfas&
       \cdot
       \bnfalt
       \Gamma, \xtypeof{x}{\fighi{S}}
       \bnfalt
       \Gamma, \xtypeof{u}{\fighi{S}}
       \bnfalt
       \Gamma, \eovar \eo
       \bnfalt
       \Gamma, \alpha \type
    \\
    \text{\small Econ.\ source expressions}
    & e
    &\bnfas&
        \dots
    \bnfalt
        \tylam{\alpha} e
        \bnfalt
        \tyapp{e}{\fighi{S}}
    \bnfalt
        \Anno{e}{\fighi{S}}
  \end{bnfarray}

  \smallskip

  \centering
      \judgbox{\arrayenvcl{\chktypee{\Gamma}{e}{\VVAR}{S}
                      \\
                      \syntypee{\Gamma}{e}{\VVAR}{S}}}
                   {Source expression $e$ checks against economical type $S$
                    \\[0.5ex]
                    Source expression $e$ synthesizes economical type $S$}
      \vspace{-1.5ex}
      \begin{mathpar}
      \Infer{\Rvar}
            {(\xtypeofe{x}{S}) \in \Gamma}
            {\syntypee{\Gamma}{x}{\VAL}{S}}
      ~~~~
      \Infer{\Rfixvar}
            {(\xtypeofe{u}{S}) \in \Gamma}
            {\syntypee{\Gamma}{u}{\NONVAL}{S}}
      \and
      \Infer{\Rfix}
            {\chktypee{\Gamma, \xtypeofe{u}{S}}{e}{\VVAR}{S}}
            {\chktypee{\Gamma}{(\Fix{u} e)}{\NONVAL}{S}}
   \and
      \Infer{\Rsub}
            {\syntypee{\Gamma}{e}{\VVAR}{S}}
            {\chktypee{\Gamma}{e}{\VVAR}{S}}
      ~~~~
      \Infer{\Ranno}
            {\chktypee{\Gamma}{e}{\VVAR}{S}}
            {\syntypee{\Gamma}{\Anno{e}{S}}{\VVAR}{S}}
    \\
\RuleHead{\AllSym}
\Infer{\Rallintro}
      { 
        \chktypee{\Gamma, \alpha \type}{e}{\VAL}{S}
      }
      {\chktypee{\Gamma}{\tylam{\alpha} e}{\VAL}{\All{\alpha} S}}
\and
\Infer{\Rallelim}
     {\syntypee{\Gamma}{e}{\VVAR}{\All{\alpha} S}
      \\
      \Gamma \entails S' \type}
     {\syntypee{\Gamma}{\tyapp{e}{S'}}{\VVAR}{[S' / \alpha]S}}
\and
\RuleHead{\unitty}\hspace*{-1.5ex}
   \Infer{\Runitintro}
             {}
             {\chktypee{\Gamma}{\unit}{\VAL}{\unitty}}
   \\
\RuleHead{\alleosym}
\Infer{\Ralleointro}
      {
        \chktypee{\Gamma, \eovar \eo}{e}{\VAL}{S}
      }
      {\chktypee{\Gamma}{e}{\VAL}{\alleo{\eovar} S}}
\and
\Infer{\Ralleoelim}
     {\syntypee{\Gamma}{e}{\VVAR}{\alleo{\eovar} S}
       \\
       \Gamma \entails \epsilon \eo}
     {\syntypee{\Gamma}{e}{\VVAR}{[\epsilon / \eovar]S}}
\\
\RuleHead{\susp{\epsilon}}
\Infer{\Rsuspintro}
      {\chktypee{\Gamma}{e}{\VVAR}{S}
      }
      {\arrayenvbl{
        \chktypee{\Gamma}{e}{\VVAR}{\susp{\epsilon} S}
        \\
        \chktypee{\Gamma}{e}{\VAL}{\susp{\N} S}
      }}
~~~~~~~~~~~~~
\Infer{\Rsuspelim{\V}}
     {\syntypee{\Gamma}{e}{\VVAR}{\susp{\V} S}}
     {\syntypee{\Gamma}{e}{\VVAR}{S}}
~~~~~~~
\Infer{\Rsuspelim{\epsilon}}
     {\syntypee{\Gamma}{e}{\VVAR}{\susp{\epsilon} S}}
     {\syntypee{\Gamma}{e}{\NONVAL}{S}}
   \\
    \RuleHead{\arr}
       \Infer{\Rarrintro}
             {\chktypee{\Gamma, x:S_1}{e}{\VVAR}{S_2}
             }
             {\chktypee{\Gamma}{(\explam{x} e)}{\VAL}{(S_1 \arr S_2)}}
       \and
       \Infer{\Rarrelim}
              {\syntypee{\Gamma}{e_1}{\VVAR_1}{(S_1 \arr S_2)}
               \\
               \chktypee{\Gamma}{e_2}{\VVAR_2}{S_1}
              }
              {\syntypee{\Gamma}{(\expapp{e_1}{e_2})}{\NONVAL}{S_2}
              }
    \\
    \RuleHead{*}
          \Infer{\Rprodintro}
                 {\chktypee{\Gamma}{e_1}{\VVAR_1}{S_1}
                   \\
                   \chktypee{\Gamma}{e_2}{\VVAR_2}{S_2}
                  }
                  {\chktypee{\Gamma}{\Pair{e_1}{e_2}}{\VVAR_1 \join \VVAR_2}{(S_1 * S_2)}}
           \and
           \Infer{\Rprodelim{k}}
                 {
                   \syntypee{\Gamma}
                           {e}
                           {\VVAR}
                           {(S_1 * S_2)}
                 }
                 {
                   \syntypee{\Gamma}
                        {(\Proj{k} e)}
                        {\NONVAL}
                        {S_k}
                 }
    \\
    \RuleHead{+}
       \Infer{\Rsumintro{k}}
             {\chktypee{\Gamma}{e}{\VVAR}{S_k}
             }
             {\chktypee{\Gamma}{(\Inj{k} e)}{\VVAR}{(S_1 + S_2)}}
       \and
       \Infer{\Rsumelim}
              {   \syntypee{\Gamma}{e}{\VVAR_0}{(S_1 + S_2)}
                  \\
                  \arrayenvbl{
                      \chktypee{\Gamma, x_1 : S_1}{e_1}{\VVAR_1}{S}
                      \\
                      \chktypee{\Gamma, x_2 : S_2}{e_2}{\VVAR_2}{S}
                  }
              }
              {\chktypee{\Gamma}{\Casesum{e}{x_1}{e_1}{x_2}{e_2}}{\NONVAL}{S}
              }
    \\
    \RuleHead{\recsymbol}
       \Infer{\Rrecintro}
             {\chktypee{\Gamma}{e}{\VVAR}{\big[(\Rec{\alpha}S)\big/\alpha\big]S}
             }
             {\chktypee{\Gamma}{e}{\VVAR}{\Rec{\alpha} S}}
       \and
       \Infer{\Rrecelim}
              {   \syntypee{\Gamma}{e}{\VVAR}{\Rec{\alpha} S}
              }
              {
                \syntypee{\Gamma}{e}{\NONVAL}{\big[(\Rec{\alpha}S)\big/\alpha\big]S}
              }
    \end{mathpar}

  \caption{Economical bidirectional typing}
  \FLabel{fig:econ-bi}
\end{figure*}

The impartial type system directly generalizes a call-by-value system and
a call-by-name system, but the profusion of connectives is unwieldy, and
impartiality doesn't fit a standard operational semantics.
Instead of elaborating the impartial system into our target language,
we pause to develop an \emph{economical} type system whose
standard connectives ($\arr$, $*$, $+$, $\mu$) are by-value,
but with a \emph{suspension point}
$\susp{\epsilon} S$ to provide by-name behaviour.
This intermediate system yields a straightforward elaboration.
It also constitutes an alternative source language that, while biased
towards call-by-value, conveniently allows call-by-name and evaluation-order
polymorphism.

In the grammar in \Figureref{fig:econ-bi}, the economical types $S$ are
obtained from the impartial types $\tau$
by dropping all the $\epsilon$ decorations and adding a connective
$\susp{\epsilon} S$ (read ``$\epsilon$ suspend $S$'').  When $\epsilon$ is $\V$,
this connective is a no-op: elaborating $e$ at type $\susp{\V} S$
and at type $S$ yield the same term.
But when $\epsilon$ is $\N$, elaborating $e$ at type $\susp{\N} S$
is like elaborating $e$ at type $\unitty \arr S$.

In economical typing contexts $\Gamma$, variables $x$ denote
values, so we replace the assumption form $\xsyn x \VVAR \tau$
with $x : S$.  Similarly, we replace $\xsyn u \NONVAL \tau$ with $u : S$.

Dropping $\epsilon$ decorations means that---apart from the valueness annotations---most
of the economical rules in \Figureref{fig:econ-bi} look fairly standard.
The only new rules are for suspension points $\susp{\epsilon}$, halfway
down \Figureref{fig:econ-bi}.
It would be nice to have only two rules (an introduction and an elimination),
but we need to track whether $e$ is a value, which depends on the $\epsilon$
in $\susp{\epsilon} S$: if we introduce the type $\susp{\N} S$, then $e$ will be elaborated
to a thunk, which is a value; if we are eliminating $\susp{\N} S$, the elaboration of $e$ will
have the form $\Force \cdots$, which (like function application) is not a value.

\subsection{\mksubsection{Translating to Economical Types}}

To relate economical types to impartial types, we define a type translation
$\econtrans{\tau} = S$ that inserts suspension points (\Figureref{fig:econtrans}).
Given an impartially-typed source program $e$ of type $\tau$, we can show
that $\expecontrans{e}$ has the economical type $\econtrans{\tau}$
(\Theoremref{thm:econ}).

Some parts of the translation are straightforward.
Functions  $\tau_1 \garr{\epsilon} \tau_2$
are translated to
$(\susp{\epsilon} \econtrans{\tau_1}) \arr \econtrans{\tau_2}$
because when $\epsilon = \N$, we get the expected type
$(\susp{\N} \econtrans{\tau_1}) \arr \econtrans{\tau_2}$
of a call-by-name function.

We are less constrained in how to translate other connectives:

\begin{itemize}
\item 
    We could translate $\tau_1 +^\epsilon \tau_2$ to
    $(\susp{\epsilon} \econtrans{\tau_1}) + (\susp{\epsilon} \econtrans{\tau_2})$.
    But then $\unitty +^\N \unitty$---presumably intended as
    a non-strict boolean type---would be translated to
    $(\susp{\N} \unitty) + (\susp{\N} \unitty)$, which exposes
    which injection was used (whether the boolean is true or false)
    without forcing the (spurious) thunk around the unit value.
    Thus, we instead place the thunk around the entire sum, so that
    $\unitty +^\N \unitty$ translates to $\susp{\N} (\unitty + \unitty)$.

\item 
    We could translate $\tau_1 *^\epsilon \tau_2$ to
    $\susp{\epsilon} (\econtrans{\tau_1} * \econtrans{\tau_2})$---which
    corresponds to how we decided to translate sum types.
    Instead, we translate it to
    $ (\susp{\epsilon} \econtrans{\tau_1})
    * (\susp{\epsilon} \econtrans{\tau_2})$,
    so that, when $\epsilon = \N$, we get a pair of thunks;
    accessing one component of the pair
    (by forcing its thunk) won't cause the other component to be forced.

\item
    Finally, in translating $\rec{\epsilon}{\alpha} \tau$,
    we could put a suspension on each occurrence of $\alpha$ in $\tau$,
    rather than a single suspension on the outside of $\tau$.
    Since $\tau$ is often a sum type, writing $+^\epsilon$ already
    puts a thunk on $\tau$; we don't need a thunk around a thunk.
    But by the same token, suspensions around the occurrences of $\alpha$
    can also lead to double thunks:
    translating the type of lazy natural numbers
    $\rec{\N}{\alpha} (\unitty +^\N \alpha)$
    would give $\Rec{\alpha} \big(\susp{\N} (\unitty + \susp{\N}\alpha)\big)$,
    which expands to $\susp{\N} \big(\unitty + \susp{\N} \susp{\N} (\unitty + \dots)\big)$.
\end{itemize}

The rationales for our translation of products and recursive types are
less clear than the rationale for sum types; it's possible
that different encodings would be preferred in practice.

The above translation does allow programmers to use the
alternative encodings, though awkwardly.  For example,
a two-thunk variant of $\tau_1 *^\epsilon \tau_2$ can be obtained
by writing $(\rec{\epsilon}{\beta}\tau_1) *^\V (\rec{\epsilon}{\beta}\tau_2)$,
where $\beta$ doesn't occur; the only purpose of $\mu$ here is to
insert a suspension.  (This suggests a kind of ill-founded argument for
our chosen translation of $\recsymbol$: it enables us to insert suspensions,
albeit awkwardly.)

\subsection{\mksubsection{Programming with Economical Types}}
\Label{sec:econ-examples}

We can translate the list/stream example from \Sectionref{sec:impartial-examples}
to the economical system:
\[
   \textkw{type}~\tyname{List}\;\eovar\;\alpha
   ~=~
         \Rec{\beta}
             \,\susp{\eovar}
                 \big(
                     \unitty
                  +
                     (
                        \alpha * \beta
                     )
                 \big)
\]
The body of \textfn{map} is the same; only the type annotation is different.
\begin{mathdispl}
  \arrayenvbl{
      \textfn{map}
      :
      \alleo{\eovar}
         \All{\alpha}
             (\alpha \arr \beta)
             \arr
             (\tyname{List}\;\eovar\;\alpha)
             \arr
             (\tyname{List}\;\eovar\;\beta)
      \\
      =~
          \tylam{\alpha}
              \Fix{map}
                  \Lam{f}
                      \Lam{xs}
      \\
      ~~~~~~~~
      \Casesumstart{xs}{x_1}{\Inj{1} \unit}
      \\~~~~~~~~~~~~~~~~~~~~~~~~
       \Casesumend{x_2}{%
            \Inj{2} \Pair{\expapp{f}{(\Proj{1} x_2)}}
                                   {\expapp{\expapp{map}{f}}{(\Proj{2} x_2)}}
      }
  }
\end{mathdispl}
The above type for \textfn{map} corresponds to the impartial type with $\garr{\V}$.
At the end of \Sectionref{sec:impartial-examples}, we gave
a very generic type for \textfn{map}, which we can translate
to the economical system:
\begin{mathdispl}
  \arrayenvcl{
      \alleo{\eovar_1,
              \eovar_2,
              \eovar_3,
              \eovar_4,
              \eovar_5}
        \\
        ~~ \All{\alpha}
           \Big(
             \susp{\eovar_2}
             \big((\susp{\eovar_1} \alpha) \arr \beta\big)
           \Big)
           \arr
           \big (\susp{\eovar_4} (\tyname{List}\;\eovar_3\;\alpha) \big)
           \arr
           (\tyname{List}\;\eovar_5\;\beta)
   }
\end{mathdispl}
This type might not look economical, but makes redundant
suspensions more evident: $\tyname{List}\;\eovar_3\;\alpha$ is
$\Rec{\cdots}\,\susp{\eovar_3} \cdots$, so the suspension
controlled by $\eovar_4$ is never useful, showing that $\eovar_4$ is unnecessary.

\subsection{\mksubsection{Economizing}}

The main result of this section is that impartial typing derivations can be transformed
into economical typing derivations.
The proof~\citep[Appendix B.3]{Dunfield15arxiv} relies on a lemma
that converts typing assumptions with $\susp{\V} S'$
to assumptions with $S'$.

\begin{restatable}[Economizing]{theorem}{thmecon}
\Label{thm:econ}
~
\begin{enumerate}[(1)]
  \item If $\syntype{\gamma}{e}{\VVAR}{\tau}$
     then $\syntypee{\ctxecontrans{\gamma}}{\expecontrans{e}}{\VVAR}{\econtrans{\tau}}$.

  \item If $\chktype{\gamma}{e}{\VVAR}{\tau}$
     then $\chktypee{\ctxecontrans{\gamma}}{\expecontrans{e}}{\VVAR}{\econtrans{\tau}}$.
\end{enumerate}
\end{restatable}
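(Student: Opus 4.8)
The plan is to prove (1) and (2) simultaneously by induction on the given impartial typing derivation, with a case analysis on its concluding rule. Since $\expecontrans{\cdot}$ changes a term only by replacing each $\tau$-annotation and type argument with its $\econtrans{\cdot}$-image, the economical derivation I build shadows the impartial one rule-for-rule; the real work is interposing uses of \Rsuspintro and \Rsuspelim that account for the suspension points $\econtrans{\cdot}$ adds to the connectives.

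First I would dispatch a few routine auxiliary facts. (i) $\econtrans{\cdot}$ commutes with type substitution and with evaluation-order substitution, i.e.\ $\econtrans{[\tau'/\alpha]\tau} = [\econtrans{\tau'}/\alpha]\econtrans{\tau}$ and $\econtrans{[\epsilon/\eovar]\tau} = [\epsilon/\eovar]\econtrans{\tau}$, both by induction on $\tau$; the second is precisely what makes the inserted $\susp{\epsilon}$ markers track evaluation orders. (ii) $\econtrans{\cdot}$ and $\ctxecontrans{\cdot}$ preserve well-formedness, so the side conditions $\gamma \entails \tau' \type$ and $\gamma \entails \epsilon \eo$ carry over. (iii) The lemma noted in the text: a derivation using a context assumption $x : \susp{\V} S'$ stays derivable with $x : S'$ in its place, the conclusion's valueness only moving from $\NONVAL$ toward $\VAL$, which never hurts since no economical rule demands a $\NONVAL$ premise; this is itself a short induction on the subderivation, resting on the monotonicity of every economical rule in the valueness of its premises (a context variable is consumed only by \Rvar possibly followed by \Rsuspelim, so a $\NONVAL$ use of $x$ can be replaced by a $\VAL$ use and the change propagated upward). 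A companion observation, used in one arrow case below, is that $x : \susp{\N} S'$ and $x : \susp{\eovar} S'$ are interchangeable as assumptions, since each yields only $x \syn \NONVAL S'$.

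The main induction then goes by cases. The connective-independent rules transfer almost verbatim: \Ivar needs a single \Rsuspelim to strip the suspension that $\ctxecontrans{\cdot}$ placed on $x$ --- \Rsuspelim{\V} when the assumption recorded $\VAL$ and \Rsuspelim{\epsilon} when it recorded $\NONVAL$, which is exactly how those two rules recover the two valuenesses --- while \Ifixvar, \Ifix, \Isub, \Ianno match \Rfixvar, \Rfix, \Rsub, \Ranno directly (note $\ctxecontrans{\cdot}$ puts no suspension on fixed-point variables). The introduction rules \Iprodintro, \Isumintro{k}, \Irecintro, \Iarrintro each apply the corresponding economical introduction after wrapping a subterm's translated type in $\susp{\epsilon}$ with the valueness-preserving conclusion of \Rsuspintro; in \Iarrintro the premise's assumption for $x$ (valueness $\valof{\epsilon}$) translates to $x : \susp{\V}\econtrans{\tau_1}$ or $x : \susp{\N}\econtrans{\tau_1}$, matching the domain $\susp{\epsilon}\econtrans{\tau_1}$ of $\econtrans{\tau_1 \garr{\epsilon}\tau_2}$ outright for $\epsilon \in \{\V,\N\}$ and, for $\epsilon = \eovar$, after the $\susp{\N}/\susp{\eovar}$ interchange. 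Dually, the elimination rules \Iarrelim, \Iprodelim{k}, \Isumelim, \Irecelim first invoke the economical elimination on the suspended translated type and then peel the resulting $\susp{\epsilon}$ with \Rsuspelim; \Iarrelim additionally re-wraps its argument with \Rsuspintro (its domain is suspended), and \Isumelim uses (iii) to turn the branch assumptions $x_k : \susp{\V}\econtrans{\tau_k}$ supplied by the induction hypothesis into the $x_k : \econtrans{\tau_k}$ that \Rsumelim wants. The quantifier rules \Iallintro, \Iallelim, \Ialleointro, \Ialleoelim correspond to \Rallintro, \Rallelim, \Ralleointro, \Ralleoelim, invoking (i) to rewrite $\econtrans{[\tau'/\alpha]\tau}$ as $[\econtrans{\tau'}/\alpha]\econtrans{\tau}$ (and likewise for $[\epsilon/\eovar]$) and (ii) for the side conditions.

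The step I expect to be the main obstacle is the valueness bookkeeping: the theorem requires the economical judgment to carry the same $\VVAR$, while \Rsuspintro and \Rsuspelim shift valueness and \Rprodintro combines it. I would verify case by case that every elimination case lands on $\NONVAL$ exactly as stated --- each of \Rarrelim, \Rprodelim{k}, \Rsumelim, \Rrecelim, and \Rsuspelim{\epsilon} forces $\NONVAL$, so any strengthening (iii) introduces inside a branch is absorbed --- and that every introduction case preserves $\VAL$ or $\VVAR_1 \join \VVAR_2$, for which one must always take the valueness-preserving conclusion of \Rsuspintro rather than the one that collapses to $\VAL$. The only genuinely non-mechanical ingredient is lemma (iii), together with the observation that making a subderivation ``more of a value'' is always an acceptable strengthening.
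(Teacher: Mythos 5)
Your proposal is correct and follows essentially the same route as the paper's proof: a rule-by-rule induction on the impartial derivation, with \Rsuspintro and \Rsuspelim{\V}/\Rsuspelim{\epsilon} interposed around each translated connective, a suspension-point lemma (your item (iii), the paper's Lemma~\ref{lem:econ-susp-point}) to rewrite $x : \susp{\V} S'$ assumptions in the \Isumelim branches, well-formedness preservation lemmas for the side conditions, and commutation of $\econtrans{\cdot}$ with substitution for the quantifier and recursive-type cases. Your explicit handling of the $\epsilon = \eovar$ subcase of \Iarrintro---where $\ctxecontrans{\cdot}$ produces $x : \susp{\N}\econtrans{\tau_1}$ while the domain of $\econtrans{\tau_1 \garr{\eovar} \tau_2}$ is $\susp{\eovar}\econtrans{\tau_1}$---addresses a mismatch the paper silently elides with ``by def.\ of $\ctxecontrans{-}$''; just note that your interchange observation should be justified for the derivations actually produced by the construction (where every use of $x$ immediately passes through \Rsuspelim{\epsilon}) rather than claimed for arbitrary derivations, since a derivation that synthesizes $x$ at the suspended type and uses it there directly would not survive the swap.
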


\section{Target Language}
\Label{sec:target}

\begin{figure}[htbp]
  \centering

  \smallskip

  \begin{bnfarray}
    \text{Target terms}
    & M%
    &\bnfas&
        \unit
        \bnfalt
        x
        \bnfalt
        \tlam{x} M
        \bnfalt
        M_1\,M_2
        \bnfaltBRK
        u
        \bnfalt
        \tfix{u} M
        \bnfalt%
        \ttylam M   \bnfalt   \ttyapp{M}
        \bnfaltBRK
        \Thunk M   \bnfalt   \Force M
        \bnfaltBRK
        \tpair{M_1}{M_2}
        \bnfalt
        \tproj{k}{M}
        \bnfaltBRK
        \tinj{k} M
        \bnfalt
        \tcase{M}{x_1}{M_1}{x_2}{M_2}
        \bnfaltBRK
        \troll M
        \bnfalt
        \tunroll{M}
  \\[1ex]
    \text{Values}
    & W
    &\bnfas&
        \unit
        \bnfalt
        x
        \bnfalt
        \tlam{x} M
        \bnfalt%
        \ttylam M
        \bnfaltBRK
        \Thunk M
        \bnfalt%
        \tpair{W_1}{W_2}
        \bnfaltBRK
        \tinj{k} W
        \bnfalt%
        \troll W
  \\[1ex]
    \text{Valuables}
    & \Vable
    &\bnfas&
        \unit
        \bnfalt
        x
        \bnfalt
        \tlam{x} M
        \bnfalt%
        \ttylam \Vable  \bnfalt   \ttyapp \Vable
        \bnfaltBRK
        \Thunk M
        \bnfalt%
        \tpair{\Vable_1}{\Vable_2}
        \bnfaltBRK
        \tproj{k}{\Vable}
        \bnfalt%
        \tinj{k} \Vable
        \bnfalt%
        \troll \Vable
        \bnfalt
        \tunroll \Vable
  \\[1ex]
    \text{Eval.\ contexts}
    & \E
    &\bnfas&
      \hole
      \bnfalt%
      \E \vapp M_2
      \bnfalt
      W_1 \vapp \E
      \bnfalt%
      \ttyapp{\E}
      \bnfalt%
      \Force\,{\E}
      \bnfaltBRK
      \tpair{\E}{M_2}
      \bnfalt
      \tpair{W_1}{\E}
      \bnfalt
      \tproj{k}{\E}
      \bnfaltBRK
      \tinj{k}{\E}
      \bnfalt
      \tcase{\E}{x_1}{M_1}{x_2}{M_2} %
      \bnfaltBRK
      \troll{\E}
      \bnfalt
      \tunroll{\E}
  \end{bnfarray}
   
  \smallskip

  \begin{bnfarray}
    \text{Target types}
    & A, B
    &\bnfas&
       \unitty
       \bnfalt
       \alpha
       \bnfalt
       \All{\alpha} A
       \bnfalt
       A_1 \arr A_2
       \bnfalt%
       \fighi{\thunkty A_1}
       \bnfaltBRK
       A_1 * A_2
       \bnfalt
       A_1 + A_2
       \bnfalt
       \Rec{\alpha} A
    \\[0.3ex]
    \text{Typing contexts}
    & G
    &\bnfas&
       \cdot
       \bnfalt
       G, \xtypeoft{x}{A}
       \bnfalt
       G, \alpha \type
  \end{bnfarray}
\caption{Syntax of the target language}
\FLabel{fig:target-syntax}
\end{figure}

Our target language (\Figureref{fig:target-syntax})
has by-value $\arr$, $*$, $+$ and $\mu$ connectives,
$\AllSym$, and a $\xU$ connective (for thunks).

The $\AllSym$ connective has explicit introduction and elimination
  forms $\ttylam M$ and $\ttyapp M$.  This ``type-free'' style
  is a compromise between having no explicit forms for $\forall$
  and having explicit forms that contain types
  ($\LAM{\alpha} M$ and $\tyapp{A}{M}$).  Having no explicit forms would
  complicate some proofs; including the types would mean that target terms
  contain types, giving a misleading impression that operational behaviour is influenced
  by types.
  
  The target language also has an explicit introduction form $\troll{M}$
  and elimination form $\tunroll{M}$ for $\mu$ types.

 As with $\AllSym$, we distinguish thunks to simplify some proofs:
  Source expressions typed with the $\susp{\N}$ connective
  are elaborated to $\Thunk M$, rather than to a $\lambda$ with an unused
  bound variable.  Dually, eliminating $\susp{\N}$ results in a target term
  $\Force M$, rather than to $M\unit$.

\begin{figure}[t]
  \centering
      \judgbox{\typeoft{G}{M}{A}}
                   {Target term $M$ \\ has target type $A$}
      \vspace{-8.5ex}
      \begin{mathpar}
          \hspace*{30ex}
          \Infer{\Munitintro}
                {}
                {\typeoft{G}{\tunit}{\unitty}
                }
          \\
          \Infer{\!\Mvar}
              {(\xtypeoft x A) \in G}
              {\typeoft{G}{x}{A}}
          ~~~~
          \Infer{\!\Mfixvar}
              {(\xtypeoft u A) \in G}
              {\typeoft{G}{u}{A}}
          ~~
          \Infer{\!\Mfix}
              {\typeoft{G, \xtypeoft{u}{A}}{e}{A}}
              {\typeoft{G}{(\tfix{u} e)}{A}}
          \\
          \RuleHead{\AllSym}\hspace*{-2.0ex}
             \Infer{\!\!\Mallintro}
                   {\typeoft{G, \alpha \type}{\Vable}{A}
                   }
                   {\typeoft{G}{\ttylam \Vable}{\All{\alpha} A}}
             ~~
             \Infer{\!\!\Mallelim}
                    {
                      \arrayenvbl{
                        \typeoft{G}{M}{\All{\alpha} A}
                        \\
                        G \entails A' \type
                      }
                    }
                    {\typeoft{G}{\ttyapp{M}}{[A'/\alpha]A}
                    }
          \\
          \hspace*{-0.5ex}\RuleHead{\arr}\hspace*{-1.5ex}
          \Infer{\!\Marrintro}
                {\typeoft{G, \xtypeoft{x}{A}}
                          {M}
                          {B}
                }
                {
                  \typeoft{G}
                       {(\tlam{x} M)
                       }
                       {A{\arr}B}
                }
          ~
          \Infer{\!\Marrelim}
                {\arrayenvbl{
                    \typeoft{G}{M_1}{A \arr B}
                    \\
                    \typeoft{G}{M_2}{A}
                  }
                }
                {\typeoft{G}
                          {(M_1 \, M_2)}
                          {B}
                }
        \\
        \hspace*{-0.5ex}
          \RuleHead{\thunkty}\hspace*{-3.0ex}
          \Infer{\!\Mthunkintro}
                {\typeoft{G}
                          {M}
                          {B}
                }
                {
                  \typeoft{G}
                       {\Thunk M
                       }
                       {\thunkty B}
                }
          ~~~
          \Infer{\!\Mthunkelim}
                {\typeoft{G}{M_1}{\thunkty B}
                }
                {\typeoft{G}
                          {\Force M_1}
                          {B}
                }
          \\
          \RuleHead{*}\hspace*{-2.0ex}
          \Infer{\!\!\Mprodintro}
                 {\arrayenvbl{
                     \typeoft{G}{M_1}{A_1}
                     \\
                     \typeoft{G}{M_2}{A_2}
                   }
                  }
                  {\typeoft{G}{\!\tpair{M_1}{\!M_2}\!}{\!A_1{*}A_2}}
           ~
           \Infer{\!\!\Mprodelim{k}}
                 {
                   \typeoft{G}
                           {M}
                           {A_1{*}A_2}
                 }
                 {
                   \typeoft{G}
                        {\tproj{k} M}
                        {A_k}
                 }
          \\
          \RuleHead{+}\hspace*{-2ex}
                       {\text{\small$\Infer{\!\!\Msumintro{k}}
                 {\typeoft{G}{M}{A_k}
                  }
                  {\typeoft{G}{\tinj{k}\!M}{A_1{+}A_2}}
                  $}}
          \hspace{30ex}
           \vspace{-6ex}
           \\
           ~\hspace{21ex}
           {\text{\small$
           \Infer{\!\!\Msumelim}
                 {
                   \hspace{13ex}\arrayenvbl{
                   \typeoft{G}
                           {M}
                           {A_1{+}A_2}
                   \\
                       \typeoft{G, x_1{:}A_1}
                               {M_1\!}
                               {A}
                       \\
                       \typeoft{G, x_2{:}A_2}
                               {M_2\!}
                               {A}
                   }
                 }
                 {
                   \typeoft{G}
                        {\tcase{M}{x_1}{M_1}{x_2}{M_2}}
                        {A}
                 }
           $}}
         \vspace{-0.5ex}
          \\
          \mbox{\raisebox{2.0ex}{$\RuleHead{\recsymbol}$}}
          \hspace{1.03\columnwidth}
          \vspace{-3ex}
          \\
          \hspace*{-0.2ex}\Infer{\!\!\!\Mrecintro}
                   {\typeoft{G}{M}{[\Rec{\alpha}A/\alpha]A}
                   }
                   {\typeoft{G}{\troll M}{\Rec{\alpha} A}}
             ~
             {\text{\small$
             \Infer{\!\!\!\Mrecelim}
                    {   \typeoft{G}{M}{\Rec{\alpha} A}
                    }
                    {\typeoft{G}{\tunroll M}{[\Rec{\alpha}A/\alpha]A}}
             $}}
      \end{mathpar}

  \caption{Target language type system}
  \FLabel{fig:target-types}
\end{figure}

\subsection{\mksubsection{Typing Rules}}

\Figureref{fig:target-types} shows the typing rules for our target language.
These are standard except for the \Mallintro rule and the rules for thunks:

\paragraph{Valuability restriction.}
Though we omit mutable references from the target language, we want
the type system to accommodate them.  Using the standard syntactic
value restriction~\citep{Wright95} would spoil this language as a target
for our elaboration: when source typing uses \Eallintro, it requires that
the source expression be a value (not syntactically, but according to the
source typing derivation).  Yet if that source value is typed using
\Ealleoelim, it will elaborate to a projection, which is not a syntactic value.
So we use a valuability restriction in \Mallintro.  A target term
is a \emph{valuable} $\Vable$ if it is a value
(\eg $\tlam{x} M$) or is a projection, injection, roll or unroll of something
that is valuable (\Figureref{fig:target-syntax}).  Later, we'll prove that if a source
expression is a value (according to the source typing derivation),
its elaboration is valuable (Lemma \ref{lem:elab-valuability}).

\paragraph{Thunks.}
We give $\Thunk M$ the type $\thunkty B$ for ``th$\xU$nk $B$'' (if $M$ has
type $B$); $\Force M$ eliminates this connective.

\subsection{\mksubsection{Operational Semantics}}
\Label{sec:target-opsem}

\begin{figure}[t]
  \centering

\judgbox{M \step M'}
        {Target term $M$ steps (by-value) to target term $M'$}
\vspace{-3.2ex}
  \begin{mathpar}
     \Infer{\StepContext}
           {M \stepR M'}
           {\E[M] \step \E[M']}
  \end{mathpar}

\judgbox{M \stepR M'}
        {Target redex $M$ reduces (by-value) to $M'$}
\vspace{-0.1ex}

~\!\!\!\!\begin{array}[t]{r@{~~}c@{~~}l@{~~~}ll}
  (\tlam{x} M) \vapp W &\stepR& [W/x]M
      & \RedBeta
  \\[0.3ex]
  \Force (\Thunk M)
         &\stepR&
         M
      & \RedForce
  \\[0.3ex]
  (\tfix{u} M) &\stepR& \big[(\tfix{u} M) \big/ u\big] M
      & \RedFix
  \\[0.3ex]
  \ttyapp{(\ttylam M)}
         &\stepR&
         M
      & \RedTyapp
  \\[0.3ex]
  \tproj{k}{(\tpair{W_1}{W_2})}
         &\stepR&
         W_k 
      & \RedProj
\end{array}

~\!\!\!\!\begin{array}[t]{r@{~~}c@{~~}l@{~~~}ll}
  \tcase{\tinj{k} W}{x_1}{M_1}{x_2}{M_2}
               &\stepR&
               [W/x_k]M_k
      &\RedCase
  \\[0.3ex]
  \tunroll{(\troll{W})}
         &\stepR&
         W
      & \RedUnroll  
\end{array}

  \caption{Target language operational semantics}
  \FLabel{fig:cbv-opsem}
\end{figure}

The target operational semantics has two relations:
$M \stepR M'$, read ``$M$ reduces to $M'$'',
and $M \step M'$, read ``$M$ steps to $M'$''.
The latter has only one rule, \StepContext, which says that
$\E[M] \step \E[M']$ if $M \stepR M'$, where $\E$ is
an evaluation context (\Figureref{fig:target-syntax}).
The rules for $\stepR$ (\Figureref{fig:cbv-opsem}) reduce
a $\lambda$ applied to a value; a force of a thunk;
a fixed point;
a type application;
a projection of a pair of values;
a case over an
injected value; and an unroll of a rolled value.
Apart from $\Force{(\Thunk M)}$, which we can view
as strange syntax for $(\Lam{x} M)\tunit$,
this is all standard: these definitions
use values $W$, not valuables $\Vable$.

\subsection{\mksubsection{Type Safety}}

\begin{lemma}[Valuability]
\Label{lem:valuability}
  If $\Vable \step M'$ or $\Vable \stepR M'$
  then $M'$ is valuable, that is, there exists $\Vable' = M'$.
\end{lemma}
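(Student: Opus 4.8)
The plan is to prove the two implications in sequence: first the $\stepR$ case by a case analysis on which reduction rule of \Figureref{fig:cbv-opsem} is used, and then the $\step$ case by induction on the evaluation context $\E$ that witnesses the step, reusing the $\stepR$ result for the base case.

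For $\stepR$, I would inspect each of the seven rules. The left-hand sides of $\RedBeta$, $\RedForce$, $\RedFix$, and $\RedCase$ are, respectively, an application, a $\Force$, a fixed point, and a $\tcase{}{}{}{}{}$ expression; none of these syntactic shapes occurs among the valuables of \Figureref{fig:target-syntax}, so if $\Vable \stepR M'$ by one of these rules the hypothesis that $\Vable$ is valuable is already impossible, and the case is vacuous. For $\RedTyapp$, where $\Vable = \ttyapp{(\ttylam M)}$ and $M' = M$: being valuable, $\Vable$ must match the production $\ttyapp{\Vable_0}$, so $\ttylam M$ is valuable, which in turn forces $M$ to be valuable; hence $M' = M$ is valuable. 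For $\RedProj$, where $\Vable = \tproj{k}{(\tpair{W_1}{W_2})}$ and $M' = W_k$: being valuable, $\Vable$ matches $\tproj{k}{\Vable_0}$, so $\tpair{W_1}{W_2}$ is valuable, which forces both $W_1$ and $W_2$ to be valuable; hence $M' = W_k$ is valuable. The case $\RedUnroll$, with $\Vable = \tunroll{(\troll{W})}$ and $M' = W$, is identical in structure.

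For $\step$, suppose $\Vable \step M'$. By inversion on \StepContext we have $\Vable = \E[N]$, $M' = \E[N']$, and $N \stepR N'$ for some evaluation context $\E$, and I proceed by structural induction on $\E$. If $\E = \hole$ then $\Vable = N$ and $\Vable \stepR M'$, so the previous paragraph applies. If $\E$ is $\E_0 \vapp M_2$, $W_1 \vapp \E_0$, $\Force{\E_0}$, or $\tcase{\E_0}{x_1}{M_1}{x_2}{M_2}$, then $\Vable$ is an application, a $\Force$, or a $\tcase{}{}{}{}{}$, none of which is a valuable shape, so the case is vacuous. For each remaining context --- $\ttyapp{\E_0}$, $\tpair{\E_0}{M_2}$, $\tpair{W_1}{\E_0}$, $\tproj{k}{\E_0}$, $\tinj{k}{\E_0}$, $\troll{\E_0}$, and $\tunroll{\E_0}$ --- the shape of $\Vable$ as a valuable forces the plugged subterm $\E_0[N]$ to be valuable; since $\E_0[N] \step \E_0[N']$ by \StepContext, the induction hypothesis gives that $\E_0[N']$ is valuable, and re-applying the relevant valuable production yields that $M' = \E[N']$ is valuable.

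I do not expect a genuine obstacle: the lemma merely records that the grammar of valuables is closed under target reduction. The only points needing a little care are the observation that the shapes absent from the valuable grammar (application, $\Force$, fixed point, $\tcase{}{}{}{}{}$) are exactly the left-hand sides and context shapes that make the ``bad'' cases vacuous, and the observation that a value $W$ occurring as a proper subterm of a valuable is itself valuable (immediate, since the pair, injection, and roll productions of valuables demand valuable subterms), which is what makes $\RedProj$ and $\RedUnroll$ produce valuables.
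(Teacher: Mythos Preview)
The paper states this lemma without proof, so there is nothing to compare against directly. Your argument is correct and is exactly the natural one: a case analysis on the reduction rules for $\stepR$, followed by an induction on the evaluation context for $\step$, exploiting throughout that the valuable grammar is closed under the relevant constructors and that the non-valuable shapes (application, $\xForce$, $\xFix$, $\xxtcase$) make the corresponding cases vacuous. The one subtlety you flagged---that in $\RedProj$ and $\RedUnroll$ the value subterm is valuable not because values are valuables in general (they are not: $\ttylam M$ is a value for arbitrary $M$ but valuable only when $M$ is), but because the pair/roll productions of the valuable grammar force it---is exactly right.
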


\begin{lemma}[Substitution]
\Label{lem:target-subst}
  If $\typeoft{G, x : A', G'}{M}{A}$
  and $\typeoft{G}{W}{A'}$
  then
  $\typeoft{G, G'}{[W/x]M}{A}$.
\end{lemma}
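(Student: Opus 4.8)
The plan is to prove the lemma by structural induction on the derivation of $\typeoft{G, x : A', G'}{M}{A}$, with a case analysis on the rule concluding that derivation. Two auxiliary facts will be needed. First, \emph{weakening}: if $\typeoft{G}{W}{A'}$ then $\typeoft{G, G'}{W}{A'}$; this follows by a routine induction on the given derivation, since no rule inspects the typing context except for the variable lookups in \Mvar and \Mfixvar, and those remain valid after extending the context on the right (assuming, as usual, that $G, G'$ is well-formed, which we may ensure by $\alpha$-renaming). Second, \emph{valuability is preserved under substitution of a value}: if $\Vable$ is valuable and $W$ is a value (hence itself valuable), then $[W/x]\Vable$ is valuable; this is an easy induction on the grammar of valuables in \Figureref{fig:target-syntax}, the only interesting base case being $[W/x]x = W$ (and $[W/x]y = y$ for $y \neq x$), with every compound case preserved syntactically.

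For the main induction, the unit rule \Munitintro is immediate, since $[W/x]\unit = \unit$. The variable rules \Mvar and \Mfixvar split on whether the variable looked up is $x$: if it is, then $A = A'$ and $[W/x]x = W$, so we conclude by weakening; if it is some other variable, that variable is still declared in $G, G'$ and $[W/x]$ acts as the identity on it. Every remaining rule — \Mallelim, \Marrintro, \Marrelim, \Mfix, \Mthunkintro, \Mthunkelim, \Mprodintro, \Mprodelim{k}, \Msumintro{k}, \Msumelim, \Mrecintro, \Mrecelim — is discharged uniformly: after $\alpha$-renaming any term variable bound by the rule (the bound $x$ of \Marrintro, the $u$ of \Mfix, the $x_1, x_2$ of \Msumelim) so as to be distinct from $x$ and from the free variables of $W$, the substitution $[W/x]$ commutes with the term former, the induction hypothesis applies to each immediate subderivation, and reassembling with the same rule yields the conclusion; in \Mallelim the side condition $G \entails A' \type$ survives because $G, G'$ still contains the relevant type-variable declarations.

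The one case that needs the second auxiliary fact — and the only place where any thought is required — is \Mallintro. There $M = \ttylam \Vable$ with $\typeoft{G, x : A', G', \alpha\type}{\Vable}{A}$; applying the induction hypothesis (with context segment $G', \alpha\type$) gives $\typeoft{G, G', \alpha\type}{[W/x]\Vable}{A}$, but to re-apply \Mallintro we must know that $[W/x]\Vable$ is again valuable, which is exactly the valuability-under-substitution fact applied to $\Vable$ and the value $W$. So the main (and essentially only) obstacle is ensuring that the valuability side condition of \Mallintro is preserved by substitution; everything else is the standard bureaucracy of a substitution lemma.
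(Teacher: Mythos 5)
Your proof is correct, and it is the standard argument for the lemma that the paper states without proof; you also correctly isolate the only non-routine point, namely that the valuability side condition of \Mallintro must be preserved under $[W/x]$. One small remark: your parenthetical ``$W$ is a value, hence itself valuable'' is not literally immediate from the grammar in \Figureref{fig:target-syntax} (a value $\ttylam M$ is only a valuable when $M$ is), but it holds for the well-typed $W$ of the lemma by inversion on \Mallintro, and it is anyway the paper's stated intent that every value is valuable, so this is a one-line patch rather than a gap.
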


\begin{theorem}[Type safety]
\Label{thm:target-type-safety}
~~%
  If $\typeoft{\cdot}{M}{A}$
  then either $M$ is a value,
  or $M \step M'$ and $\typeoft{G}{M'}{A}$.
\end{theorem}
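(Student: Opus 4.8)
The plan is to prove this as the usual pair of lemmas --- \emph{progress} (a closed well-typed term is a value or reduces via $\step$) and \emph{preservation} (if $\typeoft{\cdot}{M}{A}$ and $M \step M'$ then $\typeoft{\cdot}{M'}{A}$) --- and then combine them; the context $G$ in the stated conclusion is to be read as the empty context $\cdot$, since preservation does not change it. Before either half I would prove a \emph{canonical forms} lemma classifying closed values by type: if $\typeoft{\cdot}{W}{A}$ and $W$ is a value, then $A = \unitty$ forces $W = \unit$; $A = A_1 \arr A_2$ forces $W = \tlam{x}M$; $A = \thunkty B$ forces $W = \Thunk M$; $A = A_1 * A_2$ forces $W = \tpair{W_1}{W_2}$; $A = A_1 + A_2$ forces $W = \tinj{k}W'$; $A = \Rec{\alpha}A'$ forces $W = \troll W'$; and $A = \All{\alpha}A'$ forces $W = \ttylam\Vable$. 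Each case is immediate by asking which typing rule could derive a judgment whose subject is a value of that type. I would also use a mild generalization of \Lemmaref{lem:target-subst} that permits substituting \emph{any} term typed in $G$ for $x$ (its proof never uses that the substituted term is a value), which is needed for the \RedFix case; and the dual type-variable substitution lemma, which is trivial here since target terms are type-free, so substituting a type into a typing judgment touches only the type.

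For \textbf{preservation}, suppose $\typeoft{\cdot}{M}{A}$ and $M \step M'$. By rule \StepContext, $M = \E[R]$, $M' = \E[R']$, and $R \stepR R'$. A routine induction on $\E$ (a decomposition/replacement lemma) --- one inversion per evaluation-context frame --- reduces the goal to: whenever $\typeoft{G}{R}{B}$ and $R \stepR R'$, also $\typeoft{G}{R'}{B}$. I would then case on the seven $\stepR$ rules. \RedBeta and \RedCase: invert the relevant introduction and elimination rules and apply the (generalized) substitution lemma. \RedFix: invert \Mfix and apply the generalized substitution lemma to the non-value $\tfix{u}M$. \RedForce, \RedProj, \RedUnroll: the reduct is a subterm, and the inversion of the corresponding elimination rule already exhibits its type. \RedTyapp: invert \Mallelim then \Mallintro, then use the type-variable substitution lemma to instantiate $\alpha$; the valuability restriction in \Mallintro is unproblematic because $\ttyapp{(\ttylam M)} \stepR M$ keeps the same (valuable) body.

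For \textbf{progress}, induct on the derivation of $\typeoft{\cdot}{M}{A}$ (each rule is syntax-directed). The \Mvar and \Mfixvar cases cannot arise in the empty context. The introduction rules \Munitintro, \Marrintro, \Mthunkintro, \Mallintro, \Msumintro{k}, \Mrecintro produce values directly, as does \Mprodintro when both components are already values. \Mfix always reduces, by \RedFix with $\E = \hole$. Every elimination rule (\Mallelim, \Marrelim, \Mthunkelim, \Mprodelim{k}, \Msumelim, \Mrecelim), together with the remaining subcases of \Mprodintro, follows the standard pattern: apply the induction hypothesis to the principal subterm; if it steps, the whole term steps by placing it in the matching evaluation-context frame; if it is a value, canonical forms pins down its shape, exactly one $\stepR$ rule fires, and the term steps with $\E = \hole$. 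Combining progress with preservation yields the theorem.

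The work here is bookkeeping rather than depth. The points that actually need care are: (i) noticing that \Lemmaref{lem:target-subst} is stated only for values and must be generalized to handle \RedFix; (ii) the decomposition lemma that threads a reduction through an arbitrary evaluation context, which requires an inversion for each frame shape of $\E$; and (iii) checking that the valuability restriction in \Mallintro (rather than a plain value restriction) survives \RedTyapp, which it does because type instantiation leaves the term untouched.
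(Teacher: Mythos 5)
Your proposal is correct and fills in, with the standard progress/preservation decomposition plus canonical forms, exactly what the paper's one-line proof (``by induction on the typing derivation, using \Lemmaref{lem:target-subst} and standard inversion lemmas'') elides; you also rightly notice that \Lemmaref{lem:target-subst} as stated only substitutes values and must be generalized (or supplemented) to handle \RedFix, where the substituted term $\tfix{u}M$ is not a value. One small slip in your progress case analysis: \Msumintro{k} and \Mrecintro do \emph{not} produce values directly, since $\tinj{k}M$ and $\troll M$ are values only when $M$ is, so those two cases belong with \Mprodintro in the ``apply the induction hypothesis to the subterm, then use the frames $\tinj{k}{\E}$ and $\troll{\E}$'' bucket --- which your described machinery already covers.
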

\vspace{-2ex}
\begin{proof}
  By induction on the derivation of $\typeoft{G}{M}{A}$,
  using \Lemmaref{lem:target-subst} and
  standard inversion lemmas, which we omit.
\end{proof}

\section{Elaboration}
\Label{sec:elab}

\begin{figure}[t]
  \centering
  \judgbox{\tytrans{S} = A}
        {Economical type $S$ elaborates to target type $A$}
  \begin{tabular}[t]{r@{~~}c@{~~}lll}
      $\tytrans{\unitty}$  &$=$&   $\unitty$
      \\
      $\tytrans{S_1 \arr S_2}$  &$=$&   $\tytrans{S_1} \arr \tytrans{S_2}$
      \\
      $\tytrans{S_1 + S_2}$  &$=$&   $\tytrans{S_1} + \tytrans{S_2}$
      \\
      $\tytrans{\alpha}$  &$=$&   $\alpha$
      \\
      $\tytrans{\All{\alpha} S}$  &$=$&   $\All{\alpha} \tytrans{S}$
   \end{tabular}
   ~
   \begin{tabular}[t]{r@{~~}c@{~~}lll}
      $\tytrans{\susp{\V} S}$  &$=$&   $\tytrans{S}$
      \\
      $\tytrans{\susp{\N} S}$  &$=$&   $\fighi{\thunkty} \tytrans{S}$
      \\[0.3ex]
      $\tytrans{\alleo{\eovar} S}$  &$=$&   $\fighi{\tytrans{[\V/\eovar]S} * \tytrans{[\N/\eovar]S}}$
      \\[0.3ex]
      $\tytrans{\Rec{\alpha} S}$  &$=$&   $\Rec{\alpha} \tytrans{S}$
   \end{tabular}

  \medskip

  \judgbox{\ctxtrans{\Gamma} = G}
        {Economical typing context $\Gamma$
          \\ ~~elaborates to target typing context $G$}
  \begin{tabular}[t]{r@{~~}c@{~~}lll}
      $\ctxtrans{\cdot}$  &$=$&   $\cdot$
      \\
      $\ctxtrans{\Gamma, \alpha \type}$  &$=$&   $\ctxtrans{\Gamma}, \alpha \type$
      \\
      $\ctxtrans{\Gamma, \eovar \eo}$  && undefined
  \end{tabular}
  ~~~
  \begin{tabular}[t]{r@{~~}c@{~~}lll}
      $\ctxtrans{\Gamma, \xtypeof{x}{S}}$  &$=$&   $\ctxtrans{\Gamma}, \xtypeoft{x}{\tytrans{S}}$
      \\
      $\ctxtrans{\Gamma, \xtypeof{u}{S}}$  &$=$&   $\ctxtrans{\Gamma}, \xtypeoft{u}{\tytrans{S}}$
  \end{tabular}

  \caption{Translation from economical types to target types}
  \FLabel{fig:tytrans}
\end{figure}

\begin{figure*}[htbp]
  \centering

  \judgbox{\Gamma \entails \elab{e}{\VVAR}{S}{M}}
               {Erased source expression $e$ elaborates at type $S$ to target term $M$}
  \vspace{-2ex}
  \begin{mathpar}
     \Infer{\Evar}
         {(\var x S) \in \Gamma}
         {\Gamma \entails \elab{x}{\VAL}{S}{x}}
     ~~~~~
     \Infer{\Efixvar}
         {(\var u S) \in \Gamma}
         {\Gamma \entails \elab{u}{\NONVAL}{S}{u}}
     ~~~
     \Infer{\Efix}
            {\Gamma, \var{u}{S} \entails \elab{e}{\VVAR}{S}{M}}
            {\Gamma \entails \elab{(\Fix{u} e)}{\NONVAL}{S}{(\tfix{u} M)}}
    ~~~~~
        \Infer{\Eunitintro}
            { }
            {\Gamma \entails \elab{\unit}{\VAL}{\unitty}{\tunit}}
   \\
        \RuleHead{\AllSym}
        \Infer{\Eallintro}
              {\Gamma, \alpha \type \entails \elab e \VAL S M
              }
              {\Gamma \entails \elab{e}{\VAL}{\All{\alpha} S}{\ttylam M}}
        \and
        \Infer{\Eallelim}
             {\Gamma \entails \elab{e}{\VVAR}{\All{\alpha} S}{M}
              \\
              \Gamma \entails S' \type}
             {\Gamma \entails \elab{e}{\VVAR}{[S' / \alpha]S}{\ttyapp{M}}}
    \\
    \RuleHead{\alleosym}
    \Infer{\Ealleointro}
          {\arrayenvbl{
              \Gamma \entails \elab{e}{\VAL}{[\V/\eovar]S}{M_1}
              \\
              \Gamma \entails \elab{e}{\VAL}{[\N/\eovar]S}{M_2}
            }
          }
          {\Gamma \entails \elab{e}{\VAL}{(\alleo{\eovar} S)}{\tpair{M_1}{M_2}}
          }
   \and
   \Infer{\Ealleoelim}
       {\Gamma \entails \elab{e}{\VVAR}{(\alleo{\eovar} S)}{M}}
       {
         \arrayenvbl{
              \Gamma
              \entails
              \elab{e}{\VVAR}
                   {[\V/\eovar]S}
                   {
                     (\tproj{1} M)
                   }
              \\
              \Gamma
              \entails
              \elab{e}{\VVAR}
                   {[\N/\eovar]S}
                   {
                     (\tproj{2} M)
                   }
         }
       }
   \\
\RuleHead{\susp{\epsilon}}
\Infer{\Esuspintro}
      {\Gamma \entails \elab{e}{\VVAR}{S}{M}
      }
      {\arrayenvbl{
        \Gamma \entails\elab{e}{\VVAR}{\susp{\V} S}{M}
        \\
        \Gamma \entails\elab{e}{\VAL}{\susp{\N} S}{(\Thunk M)}
      }}
~~~~~~
\Infer{\Esuspelim{\V}}
     {\Gamma \entails \elab{e}{\VVAR}{\susp{\V} S}{M}}
     {\Gamma \entails \elab{e}{\VVAR}{S}{M}}
~~~
\Infer{\Esuspelim{\N}}
     {\Gamma \entails \elab{e}{\VVAR}{\susp{\N} S}{M}}
     {\Gamma \entails \elab{e}{\NONVAL}{S}{(\Force M)}}
     \\
   \RuleHead{\arr}
     \Infer{\Earrintro}
         {\Gamma, \var{x}{S_1} \entails
           \elab {e} {\VVAR}{S_2} {M}
         }
         {
              \Gamma
              \entails
              \elab 
                   {(\explam {x} e)}
                   {\VAL}
                   {(S_1 \arr S_2)}
                   {\tlam{x} M}
         }
     \and
     \Infer{\Earrelim}
          {
                \Gamma \entails \elab{e_1}{\VVAR_1}{(S_1 \arr S_2)}{M_1}
                \\
                \Gamma \entails \elab{e_2}{\VVAR_2}{S_1}{M_2}
          }
          {\Gamma \entails \elab
                {(\expapp{e_1}{e_2})}
                {\NONVAL}
                {S_2}
                {
                  (M_1 \; M_2) %
                }
          }
   \\
    \RuleHead{*}
       \Infer{\Eprodintro}
             {
                    \Gamma \entails \elab{e_1}{\VVAR_1}{S_1}{M_1}
                    \\
                    \Gamma \entails \elab{e_2}{\VVAR_2}{S_2}{M_2}
             }
             {
                \Gamma \entails \elab{\Pair{e_1}{e_2}}{\VVAR_1 \join \VVAR_2}{(S_1 * S_2)}{\tpair{M_1}{M_2}}
             }
       ~~~~~
       \Infer{\Eprodelim{k}}
              {   
                \Gamma \entails \elab{e}{\VVAR}{(S_1 * S_2)}{M}
              }
              {\Gamma
                \entails
                \elab{(\Proj{k} e)}{\NONVAL}{S_k}{(\tproj{k} M)}
              }
   \\
    \RuleHead{+}
       \Infer{\Esumintro{k}}
             {\Gamma \entails \elab{e}{\VVAR}{S_k}{M}
             }
             {\Gamma \entails \elab{(\Inj{k} e)}{\VVAR}{(S_1 + S_2)}{(\tinj{k} M)}}
       \and
       \Infer{\Esumelim}
              {   
                      \Gamma \entails \elab{e}{\VVAR_0}{(S_1 + S_2)}{M_0} \\
                  \arrayenvbl{
                      \Gamma, x_1 : S_1 \entails \elab{e_1}{\VVAR_1}{S}{M_1}
                      \\
                      \Gamma, x_2 : S_2 \entails \elab{e_2}{\VVAR_2}{S}{M_2}
                  }
              }
              {\Gamma
                \entails
                \arrayenvl{
                    {~~~~~\Casesum{e}{x_1}{e_1}{x_2}{e_2}}
                    \xetasym{\NONVAL}
                    {S}
                    \\
                    \elabsymbol
                    {\tcase{M_0}{x_1}{M_1}{x_2}{M_2}}
                }
              }
    \\
    \RuleHead{\recsymbol}
       \Infer{\Erecintro}
             {\Gamma \entails \elab{e}{\VVAR}{\big[(\Rec{\alpha}S)/\alpha\big]S}{M}
             }
             {\Gamma \entails \elab{e}{\VVAR}{\Rec{\alpha} S}{(\troll M)}}
       \and
       \Infer{\Erecelim}
              {   \Gamma \entails \elab{e}{\VVAR}{\Rec{\alpha} S}{M}
              }
              {
                \Gamma \entails \elab{e}{\NONVAL}{\big[(\Rec{\alpha}S)/\alpha\big]S}{(\tunroll M)}
              }

  \end{mathpar}

  \vspace{-1.0ex}
  \caption{Elaboration}
  \FLabel{fig:elab}
\end{figure*}

Now we extend the economical typing judgment with an output $M$, a
\emph{target term}:
$
  \Gamma \entails \elab{e}{\VVAR}{S}{M}
$.
The target term $M$ should be well-typed using the typing rules
in \Figureref{fig:target-types}, but what type should it have?  We answer this question by
defining another translation on types.  This function, 
defined by a function $\tytrans{S} = A$, translates an economical source type
$S$ to a target type $A$.

We will show that if $\elab{e}{\VVAR}{S}{M}$ then $M : A$, where $A = \tytrans{S}$;
this is \Theoremref{thm:elab-type-soundness}.
Our translation follows a similar approach to
\citet{Dunfield14}.  However, that system had general intersection types
$A_1 \sectty A_2$, where $A_1$ and $A_2$ don't necessarily have the same structure.
In contrast, we have $\alleo{\eovar} A$ which corresponds to $([\V/\eovar]A) \sectty ([\N/\eovar]A)$.
We also differ in having recursive types; since these are explicitly rolled (or \emph{folded})
and unrolled in our target language, our rules \Erecintro and \Erecelim add
these constructs.

\paragraph{Not bidirectional.}
We want to relate the operational behaviour of a source expression to the
operational behaviour of its elaboration.  Since our source operational semantics
is over type-erased source expressions, it will be convenient
for elaboration to work on erased source expressions.  Without type annotations,
we can collapse the bidirectional judgments into a single judgment (with ``$:$''
in place of $\chk$/$\syn$);
this obviates the need for elaboration versions of \Rsub and \Ranno, which merely
switch between $\chk$ and $\syn$.

\paragraph{Elaboration rules.}
We are elaborating the economical type system, which has
by-value connectives, into the target type system, which also
has by-value connectives.
Most of the elaboration rules just map source constructs into the
corresponding target constructs; for example, \Evar elaborates
$x$ to $x$, and \Earrintro elaborates
$\Lam{x} e$ to $\tlam{x} M$ where $e$ elaborates to $M$.

\paragraph{Elaborating $\AllSym$.}
Rule \Eallintro elaborates $e$ (which is type-erased and thus has no
explicit source construct) to the target type abstraction $\ttylam{\!M}$;
rule \Eallelim elaborates to a target type application $\ttyapp{M}$.

\paragraph{Elaborating $\alleosym$.}
Rule \Ealleointro elaborates an $e$ at type $\alleo{\eovar} S$
to a pair with the elaborations of $e$ at type $[\V/\eovar]S$ and
at $[\N/\eovar]S$.  Note that unlike the corresponding rule \Ralleointro
in the non-elaborating economical type system, which introduces
a variable $\eovar$ into $\Gamma$ and types $e$ parametrically,
\Ealleointro substitutes concrete evaluation orders $\V$ and $\N$
for $\eovar$.  Consequently, the $\Gamma$ in the elaboration judgment
never contains $\eovar \eo$ declarations.

Rule \Ealleoelim elaborates to the appropriate projection.

\paragraph{Elaborating $\suspsymbol$.}
Rule \Esuspintro has two conclusions.  The first conclusion
elaborates at type $\susp{\V} S$ as if elaborating at type $S$.
The second conclusion elaborates at $\susp{\N} S$ to a thunk.
Correspondingly, rule \Esuspelim{\V} ignores
the $\V$ suspension, and rule \Esuspelim{\N} forces the thunk
introduced via \Esuspintro.

\subsection{\mksubsection{Elaboration Type Soundness}}

The main result of this section (\Theoremref{thm:elab-type-soundness})
is that, given a non-elaborating
economical typing derivation $\chktypee{\Gamma}{e}{\VVAR}{S}$,
we can derive $\Gamma \entails \elab {\er e} {\VVAR'} S M$
such that the target term $M$ is well-typed.
The erasure function $\er e$, defined in \Figureref{fig:er},
removes type annotations, type abstractions, and type applications.

It will be useful to relate various notions of valueness.
First, if $e$ elaborates to a syntactic target value $W$,
then the elaboration rules deem $e$ to be a (source) value.

\begin{restatable}{lemma}{lemvaluemono}
\Label{lem:value-mono}
    If $\Gamma \entails \elab{e}{\VVAR}{S}{W}$
    then $\VVAR = \VAL$.
\end{restatable}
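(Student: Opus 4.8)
The plan is to induct on the given derivation of $\Gamma \entails \elab{e}{\VVAR}{S}{W}$, analyzing the concluding elaboration rule and exploiting that the hypothesis forces the output term to be a syntactic target value $W$. The cases split naturally into four kinds.

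First, a number of rules --- \Evar, \Eunitintro, \Eallintro, \Earrintro, \Ealleointro, and the second conclusion of \Esuspintro --- already carry $\VAL$ in their conclusion, so the claim holds by inspection regardless of the shape of the output. Second, a number of rules always emit an output headed by a constructor that the grammar of target values in \Figureref{fig:target-syntax} never admits: application $M_1\,M_2$ (\Earrelim), projection $\tproj{k} M$ (\Eprodelim{k}, \Ealleoelim), force $\Force M$ (\Esuspelim{\N}), type application $\ttyapp M$ (\Eallelim), unroll $\tunroll M$ (\Erecelim), fixed point $\tfix{u} M$ (\Efix), a case expression (\Esumelim), or a bare fixed-point variable $u$ (\Efixvar); for each of these the hypothesis that the output equals some $W$ is contradictory, so the case cannot arise.

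The remaining rules are the ones where the output's being a value is tied to a subderivation, and the induction hypothesis does the work. For \Esuspelim{\V} and the first conclusion of \Esuspintro the output is exactly the term $M$ produced by the premise and the conclusion's valueness equals the premise's, so the induction hypothesis applies directly. For \Esumintro{k} and \Erecintro the output is $\tinj{k} M$, resp.\ $\troll M$; inverting the value grammar on $W$ forces the immediate subterm $M$ to be a value, and then the induction hypothesis on the premise gives $\VVAR = \VAL$. For \Eprodintro the output $\tpair{M_1}{M_2}$ being a value forces, by the same inversion, both $M_1$ and $M_2$ to be values, so two applications of the induction hypothesis give $\VVAR_1 = \VVAR_2 = \VAL$ and hence $\VVAR_1 \join \VVAR_2 = \VAL$.

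I expect the only real work to be the bookkeeping in the second group --- cataloguing which target head constructors can be syntactic values --- together with the elementary inversions used in the last two groups. I do not anticipate a genuine obstacle: the statement just records that the valueness annotation $\VVAR$ faithfully tracks the syntactic value status of the elaboration output, an invariant the elaboration rules are plainly built to preserve.
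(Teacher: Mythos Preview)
Your proposal is correct and follows essentially the same approach as the paper's proof: both proceed by induction on the elaboration derivation, partitioning the rules into those that already conclude $\VAL$, those whose output head is never a syntactic value (hence impossible), and the handful (\Esuspintro first conclusion, \Esuspelim{\V}, \Esumintro{k}, \Erecintro, \Eprodintro) handled by the induction hypothesis after the obvious inversion on $W$.
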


Second, if $e$ is a value according to the source
typing rules, its elaboration $M$ is valuable (but not necessarily a syntactic target
value).

\begin{restatable}[Elaboration valuability]{lemma}{lemelabvaluability}
\Label{lem:elab-valuability}
~\\
  If $\Gamma \entails \elab{e}{\VAL}{S}{M}$
  then $M$ is valuable, that is, there exists $\Vable$ such that $M = \Vable$.
\end{restatable}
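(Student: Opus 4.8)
The plan is to induct on the derivation of $\Gamma \entails \elab{e}{\VAL}{S}{M}$ and case-split on the final rule. Because the conclusion's valueness is $\VAL$, every rule whose conclusion carries $\NONVAL$ — namely \Efixvar, \Efix, \Esuspelim{\N}, \Earrelim, \Eprodelim{k}, \Esumelim, and \Erecelim — is vacuously excluded, and for the rules whose conclusion carries a metavariable $\VVAR$ (that is, \Eallelim, \Ealleoelim, the first conclusion of \Esuspintro, \Esuspelim{\V}, \Esumintro{k}, \Erecintro) we may assume $\VVAR = \VAL$. The remaining cases fall into two groups.

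For the \emph{base} rules the target term is already syntactically of the form $x$, $\tunit$, $\tlam{x} M$, or $\Thunk M$ — produced by \Evar, \Eunitintro, \Earrintro, and the second conclusion of \Esuspintro, respectively — and each of these is literally a valuable according to the grammar of \Figureref{fig:target-syntax} (for the last two, for an arbitrary $M$), so no induction hypothesis is needed.

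For the \emph{structural} rules the target term is a single target constructor applied to subterm(s) coming from the premise(s): $\ttylam M$ (rule \Eallintro), $\ttyapp{M}$ (\Eallelim), $\tpair{M_1}{M_2}$ (\Ealleointro and \Eprodintro), $\tproj{k} M$ (\Ealleoelim), $\tinj{k} M$ (\Esumintro{k}), $\troll M$ (\Erecintro), or just $M$ itself passed through unchanged (the first conclusion of \Esuspintro and \Esuspelim{\V}). In each case the hypothesis that the conclusion's valueness is $\VAL$ forces every relevant premise to have valueness $\VAL$: directly, since the premise carries the same metavariable $\VVAR$, except in \Eprodintro, where $\VVAR_1 \join \VVAR_2 = \VAL$ implies $\VVAR_1 = \VVAR_2 = \VAL$ by the definition of the join, and in \Eallintro and \Ealleointro, whose premises already carry $\VAL$. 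The induction hypothesis then yields valuables $\Vable_i$ with $M_i = \Vable_i$, and we conclude because the grammar of valuables in \Figureref{fig:target-syntax} is closed under exactly these constructors: $\ttylam \Vable_0$, $\ttyapp{\Vable_0}$, $\tproj{k} \Vable_0$, $\tinj{k} \Vable_0$, $\troll \Vable_0$, and $\tpair{\Vable_1}{\Vable_2}$ are again valuables.

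No step here is genuinely hard; the only care required is the valueness bookkeeping — in particular the two conclusions of \Esuspintro and the $\join$ in \Eprodintro — together with the observation that the set of valuables was defined (via the valuability restriction behind rule \Mallintro) to be closed under precisely the target formers that the structural elaboration rules emit. Lemma~\ref{lem:value-mono} plays no role in this direction.
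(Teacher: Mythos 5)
Your proof is correct and takes essentially the same route as the paper's: induction on the elaboration derivation, discarding the rules whose conclusions are forced to $\NONVAL$, and closing the case analysis by observing that the grammar of valuables is closed under exactly the target constructors the remaining rules emit. If anything, you are slightly more careful than the paper, which files the $\N$-conclusion of \Esuspintro under ``impossible'' even though that conclusion carries valueness $\VAL$ and produces $\Thunk M$ --- a valuable for arbitrary $M$, as you correctly treat it.
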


Several substitution lemmas are required.  The first is
for the non-elaborating economical type system; we'll use it in the
\Ralleointro case of the main proof to remove $\eovar \eo$ declarations.

\begin{restatable}[Substitution---Evaluation orders]{lemma}{lemsubsteo}
\Label{lem:subst-eo}
~
\begin{enumerate}[(1)]
\vspace{-1.0ex}
  \item If $\Gamma, \eovar \eo, \Gamma' \entails S \type$
      and $\Gamma \entails \epsilon \eo$
      \\
      then $\Gamma, [\epsilon/\eovar]\Gamma' \entails [\epsilon/\eovar]S \type$.

  \item
    If $\Dee$ derives
    $\chktypee{\Gamma, \eovar \eo, \Gamma'}{e}{\VVAR}{S}$
    and
    $\Gamma \entails \epsilon \eo$
    \\
    then
    $\Dee'$ derives
    $\chktypee{\Gamma, [\epsilon/\eovar]\Gamma'}{e}{\VVAR}{[\epsilon/\eovar]S}$
    where $\Dee'$ is not larger than $\Dee$.

  \item
    If $\Dee$ derives
    $\syntypee{\Gamma, \eovar \eo, \Gamma'}{e}{\VVAR}{S}$
    and
    $\Gamma \entails \epsilon \eo$,
    \\
    then
    $\Dee'$ derives
    $\syntypee{\Gamma, [\epsilon/\eovar]\Gamma'}{e}{\VVAR}{[\epsilon/\eovar]S}$
    where $\Dee'$ is not larger than $\Dee$.
\end{enumerate}
\end{restatable}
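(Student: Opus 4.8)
The plan is to prove the three parts by routine induction: part~(1) by induction on the derivation of $\Gamma, \eovar\eo, \Gamma' \entails S \type$, and parts~(2) and~(3) by a single mutual induction on the given typing derivation $\Dee$ (the checking and synthesizing judgments interact only through \Rsub and \Ranno, whose cases are immediate uses of the induction hypothesis). Throughout, the substitution $[\epsilon/\eovar]$ is understood to act also on the type annotations occurring inside $e$, so that the translated judgment is really about $[\epsilon/\eovar]e$; this is what makes the \Ranno and \Rallelim cases go through, and it does not change the term skeleton, hence neither the elaboration target nor the shape of the derivation. Two purely syntactic facts are used repeatedly: (i) evaluation-order substitutions compose, $[\epsilon/\eovar]\,[\epsilon'/\eovar']X = [\,([\epsilon/\eovar]\epsilon')/\eovar'\,]\,[\epsilon/\eovar]X$ when $\eovar \neq \eovar'$ and $\eovar' \notin \FV(\epsilon)$ (arranged by $\alpha$-renaming); and (ii) an evaluation-order substitution commutes with a type substitution, $[\epsilon/\eovar]\,[S'/\alpha]S = [\,([\epsilon/\eovar]S')/\alpha\,]\,[\epsilon/\eovar]S$, trivially, since $\epsilon$ contains no type variables.

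I would first record a small weakening observation for the well-formedness side conditions: if $\Gamma, \eovar\eo, \Gamma' \entails \epsilon' \eo$ and $\Gamma \entails \epsilon \eo$, then $\Gamma, [\epsilon/\eovar]\Gamma' \entails [\epsilon/\eovar]\epsilon' \eo$ --- immediate by cases on $\epsilon'$ (the literals $\V,\N$ are trivial; $\epsilon' = \eovar$ yields $\epsilon$, which we weaken; any other variable survives in $\Gamma$ or in $[\epsilon/\eovar]\Gamma'$). Part~(1) then goes case by case: the binder cases ($\All{\alpha}S$, $\alleo{\eovar'}S$) use $\alpha$-renaming so the bound variable differs from $\eovar$ and is not free in $\epsilon$, and the only real content is the $\susp{\epsilon'}S$ case, where well-formedness of $[\epsilon/\eovar]\epsilon'$ is exactly the observation just made.

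For parts~(2) and~(3), every rule of \Figureref{fig:econ-bi} is sent to the same rule applied to the substituted judgment, so the resulting derivation $\Dee'$ has the same tree shape as $\Dee$; in particular $\Dee'$ is not larger than $\Dee$, as claimed. The cases needing attention are those whose conclusion type is built by a substitution or which carry a well-formedness premise. In \Rallelim and in \Rrecintro/\Rrecelim, after applying the induction hypothesis to the premise I use fact~(ii) (and, for $\mu$, the analogous identity for unfolding) to re-associate the substitutions in the conclusion type, and I discharge the well-formedness premise of \Rallelim by part~(1). In \Ralleoelim I apply the induction hypothesis to the premise, rewrite $[\epsilon/\eovar]\alleo{\eovar'}S = \alleo{\eovar'}[\epsilon/\eovar]S$ (after $\alpha$-renaming), discharge the side condition $\Gamma \entails \epsilon' \eo$ by the weakening observation, and reinvoke \Ralleoelim with instance $[\epsilon/\eovar]\epsilon'$, using fact~(i) to identify the conclusion type. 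In \Ralleointro I $\alpha$-rename the bound $\eovar'$, apply the induction hypothesis with the lemma's ``$\Gamma'$'' taken to be $\Gamma', \eovar'\eo$, and re-apply \Ralleointro. The valueness $\VVAR$ is untouched by substitution, so it transfers verbatim, which matters in particular for the suspension rules: if the last rule was $\Rsuspelim{\eovar}$, substitution turns the eliminated $\susp{\eovar}S$ into $\susp{\V}S$ or $\susp{\N}S$, and we simply re-apply the suspension-elimination rule at that concrete order, which still produces the required $\NONVAL$ conclusion.

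The difficulty here is bookkeeping rather than mathematics: keeping the outer substitution $[\epsilon/\eovar]$ correctly aligned with the rule-internal substitutions ($[S'/\alpha]$, $[\epsilon'/\eovar']$, $\mu$-unfolding) in the conclusion types, and --- if one reads the statement as literally fixing $e$ --- being explicit that $[\epsilon/\eovar]$ must also descend into the type annotations carried by $e$. Neither point is deep, and the ``not larger than $\Dee$'' claim falls out immediately from the one-rule-to-one-rule correspondence between $\Dee$ and $\Dee'$.
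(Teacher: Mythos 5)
Your proof follows the same route as the paper's: part (1) by induction on the well-formedness derivation, parts (2)--(3) by induction on the typing derivation, using part (1) to discharge the well-formedness premise of \Rallelim, the definition of substitution on contexts for \Rvar/\Rfixvar, and standard substitution-commutation identities elsewhere; the one-rule-to-one-rule correspondence gives the ``not larger'' bound exactly as you say. Your explicit remark that $[\epsilon/\eovar]$ must also descend into the type annotations carried by $e$ (so that the \Ranno and \Rallelim cases close) addresses a point the paper leaves implicit, and it is harmless for the lemma's only use, since elaboration is performed on the erasure $\er{e}$, which is unchanged by that substitution.
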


{\noindent Next,} we show that an expression
$e_1$ can be substituted for a variable $x$,
provided $e_1$ elaborates to a target value $W$.

\begin{restatable}[Expression substitution]{lemma}{lemelabexprsubst}
\Label{lem:elab-expr-subst}
~
   \begin{enumerate}[(1)]
   \vspace{-1.0ex}
       \item 
         If
         $\Gamma \entails \elab{e_1}{\VVAR_1}{S_1}{W}$
         and
         $\Gamma, x : S_1, \Gamma' \entails \elab{e_2}{\VVAR_2}{S}{M}$
         \\
         then
         $\Gamma, \Gamma' \entails \elab{[e_1/x]e_2}{\VVAR_2}{S}{[W/x]M}$.

       \item 
         If
         $\Gamma \entails \elab{\Fix{u} e_1}{\NONVAL}{S_1}{\tfix{u} M_1}$
         \\
         and
         $\Gamma, u : S_1, \Gamma' \entails \elab{e_2}{\VVAR_2}{S}{M}$
         \\
         then
         $\Gamma, \Gamma' \entails \elab{\big[(\Fix{u}e_1)\big/u\big]e_2}{\VVAR_2}{S}{\big[(\tfix{u} M_1)\big/u\big]M}$.
   \end{enumerate}
\end{restatable}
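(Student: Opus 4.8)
The plan is to prove both parts by structural induction on the elaboration derivation of the \emph{second} hypothesis (the judgment for $e_2$); part~(2) is entirely analogous to part~(1), so I would carry out~(1) in detail and then indicate where~(2) differs. The only cases that require thought are the variable cases. If the last rule is \Evar with subject $x$, then $e_2 = x$, $M = x$, $S = S_1$, and the conclusion has valueness $\VAL$; here $[e_1/x]e_2 = e_1$ and $[W/x]M = W$, so the goal is $\Gamma, \Gamma' \entails \elab{e_1}{\VAL}{S_1}{W}$. From the first hypothesis $\Gamma \entails \elab{e_1}{\VVAR_1}{S_1}{W}$ and \Lemmaref{lem:value-mono} we get $\VVAR_1 = \VAL$, and a routine weakening lemma for the elaboration judgment lets us append $\Gamma'$. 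If instead the last rule is \Evar or \Efixvar with any variable other than $x$ --- in particular every \Efixvar case, since fixed-point variables and program variables belong to disjoint syntactic classes, so $x \neq u$ --- then the substitution is the identity on both $e_2$ and $M$ and the relevant assumption still occurs in $\Gamma, \Gamma'$, so the same rule reapplies. For part~(2) the analogous base case is \Efixvar with subject $u$: there $[(\Fix{u} e_1)/u]e_2 = \Fix{u} e_1$ and the target substitution yields $\tfix{u} M_1$, so the goal follows from the first hypothesis of~(2) by weakening; no counterpart of \Lemmaref{lem:value-mono} is needed, because \Efixvar already assigns valueness $\NONVAL$, which matches that hypothesis.

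For every other elaboration rule the step is uniform: apply the induction hypothesis to each subderivation, instantiating its $\Gamma'$-parameter with $\Gamma'$ extended by whatever binding the rule introduces (a program variable for \Earrintro and for the branches of \Esumelim, a fixed-point variable for \Efix, a type variable $\alpha$ for \Eallintro), and rely on the usual Barendregt-style freshness convention so that the bound variable --- and any fresh evaluation-order variable $\eovar$ appearing in \Ealleointro and \Ealleoelim --- does not occur free in $e_1$, $W$, or $S_1$. Then reassemble with the same rule, using the definitional identities that push the target substitution through each target term former (for instance $[W/x]\tpair{M_1}{M_2} = \tpair{[W/x]M_1}{[W/x]M_2}$). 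The point that keeps this honest is that the induction hypothesis returns the substituted subexpression at \emph{exactly} the same valueness, so the rules whose premises demand $\VAL$ --- namely \Eallintro and \Ealleointro --- can still be applied after substitution.

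The step I expect to take the most care is not any individual case but the shared bookkeeping around binders: making the weakening lemma for the elaboration judgment precise (a straightforward structural induction, but one that has to be stated before it can be invoked in the \Evar and \Efixvar cases), and, in each rule that extends the context, checking that the induction hypothesis is applied to the correctly extended context and that the resulting target substitution commutes with the matching target constructor. None of this is deep, but it accounts for most of the length of the argument, and it is precisely where an informal proof is likeliest to gloss over a freshness side condition or a context-ordering obligation.
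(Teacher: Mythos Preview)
Your proposal is correct and follows essentially the same approach as the paper: induction on the second derivation, with the \Evar case handled via \Lemmaref{lem:value-mono} plus weakening, and the \Efixvar case in part~(2) handled by weakening alone since the valueness already matches. One small slip: in the elaboration system, \Ealleointro and \Ealleoelim do not extend the context with an $\eovar\eo$ declaration (the premises substitute concrete $\V$ and $\N$), so your freshness remark about evaluation-order variables is unnecessary there, though harmless.
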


\begin{lemma}[Type translation well-formedness]   \Label{lem:tytrans-wf}
~\\
   If $\Gamma \entails S \type$
   then $\ctxtrans{\Gamma} \entails \tytrans{S} \type$.
\end{lemma}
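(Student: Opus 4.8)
The plan is to prove this by induction on the economical type $S$ (equivalently, on the derivation of $\Gamma \entails S \type$), with $\Gamma$ universally quantified so that the induction hypothesis may be applied under extended contexts in the binder cases. Throughout, I will use the implicit side condition that $\ctxtrans{\Gamma}$ is defined, i.e.\ that $\Gamma$ contains no $\eovar \eo$ declarations; this is exactly the situation in which the statement is meaningful, and it holds for every $\Gamma$ arising in the elaboration judgment. Under this assumption $S$ has no free evaluation-order variables, so $\tytrans{S}$ is defined. Note also that $\ctxtrans{\cdot}$ copies type-variable declarations verbatim, so $\alpha \type \in \Gamma$ iff $\alpha \type \in \ctxtrans{\Gamma}$, and $\ctxtrans{\Gamma, \alpha\type} = \ctxtrans{\Gamma}, \alpha\type$.

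First I would dispatch the structural cases. For $S = \unitty$, $\tytrans{\unitty} = \unitty$ is well-formed in any target context. For $S = \alpha$, well-formedness gives $\alpha \type \in \Gamma$, hence $\alpha \type \in \ctxtrans{\Gamma}$, and $\tytrans{\alpha} = \alpha$. For $S = S_1 \arr S_2$, $S_1 * S_2$, $S_1 + S_2$, I apply the induction hypothesis to $S_1$ and $S_2$ under the same $\Gamma$ and rebuild with the corresponding target well-formedness rule, since $\tytrans{\cdot}$ commutes with these connectives. For $S = \All{\alpha} S'$ and $S = \Rec{\alpha} S'$, the premise is $\Gamma, \alpha \type \entails S' \type$; the induction hypothesis gives $\ctxtrans{\Gamma}, \alpha \type \entails \tytrans{S'} \type$, and the target $\AllSym$/$\recsymbol$ rule finishes, since $\tytrans{\All{\alpha}S'} = \All{\alpha}\tytrans{S'}$ and $\tytrans{\Rec{\alpha}S'} = \Rec{\alpha}\tytrans{S'}$. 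For $S = \susp{\epsilon} S'$: because $\Gamma$ has no $\eovar \eo$, the premise $\Gamma \entails \epsilon \eo$ forces $\epsilon \in \{\V,\N\}$. If $\epsilon = \V$, then $\tytrans{\susp{\V}S'} = \tytrans{S'}$ and the induction hypothesis on $S'$ is exactly what is needed; if $\epsilon = \N$, then $\tytrans{\susp{\N}S'} = \thunkty\,\tytrans{S'}$, and I apply the target $\thunkty$ well-formedness rule to the induction hypothesis on $S'$.

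The one case requiring care is $S = \alleo{\eovar} S'$, where $\tytrans{\alleo{\eovar}S'} = \tytrans{[\V/\eovar]S'} * \tytrans{[\N/\eovar]S'}$, so I need $\ctxtrans{\Gamma} \entails \tytrans{[\V/\eovar]S'}\type$ and $\ctxtrans{\Gamma} \entails \tytrans{[\N/\eovar]S'}\type$. From the premise $\Gamma, \eovar \eo \entails S' \type$ together with the immediate facts $\Gamma \entails \V \eo$ and $\Gamma \entails \N \eo$, Lemma~\ref{lem:subst-eo}(1) (with $\Gamma' = \cdot$) yields $\Gamma \entails [\V/\eovar]S'\type$ and $\Gamma \entails [\N/\eovar]S'\type$. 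To invoke the induction hypothesis on these, I take the induction measure to be the number of type connectives in $S$: since $\eovar$ occurs in $S'$ only as a $\susp{\cdot}$ decoration, substituting a concrete evaluation order changes no type connective, so $[\V/\eovar]S'$ and $[\N/\eovar]S'$ each have strictly fewer connectives than $\alleo{\eovar} S'$. The induction hypothesis then applies, and the target $*$ well-formedness rule closes the case. The main obstacle is precisely this bookkeeping: recognizing that the $\alleosym$ case recurses on substituted rather than sub-structural bodies, so the induction must be on connective count, and that Lemma~\ref{lem:subst-eo}(1) is needed to supply the well-formedness premises for those recursive calls.
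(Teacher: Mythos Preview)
Your proof is correct and is the natural induction the paper has in mind; the paper itself states this lemma without proof, so there is nothing to compare against beyond the expected ``by induction on the derivation of $\Gamma \entails S \type$'' shape. Your handling of the $\alleosym$ case---using Lemma~\ref{lem:subst-eo}(1) to obtain $\Gamma \entails [\V/\eovar]S'\type$ and $\Gamma \entails [\N/\eovar]S'\type$, then recursing on a connective-count measure so that the substituted bodies are strictly smaller---is exactly the right maneuver, and your observation that evaluation-order substitution cannot change the connective count (since $\eovar$ occurs only as a $\suspsymbol$ decoration in economical types) is what makes that measure work.
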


We can now state the main result of this section:

\begin{restatable}[Elaboration type soundness]{theorem}{thmelabtypesoundness}
\Label{thm:elab-type-soundness}
~\\
  If $\chktypee{\Gamma}{e}{\VVAR}{S}$
  or $\syntypee{\Gamma}{e}{\VVAR}{S}$
  \\
  where  $\Gamma \entails S \type$
  and $\Gamma$ contains no $\eovar \eo$ declarations
  \\
  then
  there exists $M$ such that
                $\Gamma \entails \elab {\er e} {\VVAR'} S M$
                \\
                where
                $\VVAR' \valleq \VVAR$
                and
                $\typeoft{\ctxtrans{\Gamma}}{M}{\tytrans{S}}$.
\end{restatable}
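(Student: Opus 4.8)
The plan is induction on the size of the given economical typing derivation, with a case split on its last rule. Because $\Gamma$ has no $\eovar\eo$ declarations --- and because every type appearing in a valid economical derivation over a well-formed context is itself well-formed, a routine property I use throughout --- every evaluation order $\epsilon$ occurring in a type encountered during the induction is concrete, i.e.\ $\V$ or $\N$; in particular $\ctxtrans{\Gamma}$ stays defined. Erasure makes the bidirectional machinery transparent: \Rsub and \Ranno change neither the type nor $\er{e}$ (since $\er{\Anno{e}{\tau}} = \er{e}$), so the induction hypothesis applies verbatim, and $\er{\tylam{\alpha} e} = \er{e}$, $\er{\tyapp{e}{S}} = \er{e}$, so the economical rules \Rallintro and \Rallelim are served by the elaboration rules \Eallintro and \Eallelim. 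For every remaining economical rule the identically-structured elaboration rule applies to the erased subterms, and the target derivation is built in lockstep with the analogous target rule, using the structural clauses of the type translation --- $\tytrans{S_1 \arr S_2} = \tytrans{S_1}\arr\tytrans{S_2}$, $\tytrans{S_1 + S_2} = \tytrans{S_1} + \tytrans{S_2}$, $\tytrans{\Rec{\alpha} S} = \Rec{\alpha}\tytrans{S}$, $\tytrans{\susp{\V} S} = \tytrans{S}$, $\tytrans{\susp{\N} S} = \thunkty\tytrans{S}$ --- and the matching clauses of $\ctxtrans{\cdot}$. Monotonicity of $\join$ handles \Rprodintro; the inequality $\VVAR' \valleq \VVAR$ holds in every case, with strict decrease only in the $\susp{\N}$ rules (\Rsuspintro's first conclusion permits an arbitrary valueness under $\susp{\N}$ whereas \Esuspintro's second conclusion produces $\VAL$; \Rsuspelim{\V} and \Rsuspelim{\epsilon} with $\epsilon = \V$ both meet \Esuspelim{\V}), where $\VAL \valleq \VVAR$ or $\VVAR' \valleq \NONVAL$ absorbs the gap. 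Introducing or eliminating $\susp{\N}$ becomes $\Thunk M$ or $\Force M$, typed by \Mthunkintro or \Mthunkelim against $\thunkty\tytrans{S}$.

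\textbf{The delicate cases.}
Three points need real care. First, \Ralleointro: its premise lives in the context $\Gamma,\eovar\eo$, which $\ctxtrans{\cdot}$ cannot translate, while \Ealleointro instantiates $\eovar$ concretely. So before recursing I apply \Lemmaref{lem:subst-eo} (with $\Gamma\entails\V\eo$ and $\Gamma\entails\N\eo$, both trivial) to turn the premise derivation into two \emph{no-larger} derivations $\chktypee{\Gamma}{e}{\VAL}{[\V/\eovar]S}$ and $\chktypee{\Gamma}{e}{\VAL}{[\N/\eovar]S}$; this is precisely why the induction must be on derivation size rather than structure. Well-formedness of the substituted types in $\Gamma$ comes from \Lemmaref{lem:subst-eo}(1). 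The induction hypothesis then yields $M_1$ and $M_2$, and \Ealleointro forms $\tpair{M_1}{M_2}$, well-typed by \Mprodintro because $\tytrans{\alleo{\eovar} S} = \tytrans{[\V/\eovar]S} * \tytrans{[\N/\eovar]S}$. Dually, \Ralleoelim --- which instantiates $\eovar$ to some $\epsilon \in \{\V,\N\}$ --- is matched by the corresponding conclusion of \Ealleoelim, a first or second projection typed by \Mprodelim{k}. Second, \Rallelim: \Eallelim requires $\ctxtrans{\Gamma}\entails\tytrans{S'}\type$, which is \Lemmaref{lem:tytrans-wf} applied to $\Gamma\entails S'\type$; it also requires the type translation to commute with type substitution, $\tytrans{[S'/\alpha]S} = [\tytrans{S'}/\alpha]\tytrans{S}$, which is a short structural induction on $S$ using that $S'$ has no free $\eovar$ (because $\Gamma$ has none), so that $[S'/\alpha]$ commutes with the $[\V/\eovar]$ and $[\N/\eovar]$ substitutions in the $\alleo{\eovar}$ clause of $\tytrans{\cdot}$. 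Third, \Rallintro: the target rule \Mallintro imposes a valuability restriction on the abstraction body, which I discharge by \Lemmaref{lem:elab-valuability} --- the elaboration returned by the induction hypothesis has valueness $\VAL$, hence a valuable target term.

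\textbf{Main obstacle.}
I expect the crux to be the first delicate case: reconciling the economical system's parametric treatment of evaluation orders with elaboration's eager duplication of code across $\V$ and $\N$. It is this that forces the non-structural (size-based) induction and the careful use of \Lemmaref{lem:subst-eo}; the other cases are bookkeeping, modulo the routine-but-necessary auxiliary fact that economical derivations over well-formed contexts mention only well-formed types --- the fact that lets the induction hypothesis be applied at the synthesizing premises of \Rarrelim, \Rprodelim{k}, \Rsumelim, \Rrecelim, \Rallelim, and \Ralleoelim.
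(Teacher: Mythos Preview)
Your proposal is correct and follows essentially the same approach as the paper: induction on derivation size, with \Lemmaref{lem:subst-eo} discharging the \Ralleointro case (the reason for size- rather than structure-based induction), \Lemmaref{lem:tytrans-wf} plus substitution commutation for \Rallelim, and the concrete-$\epsilon$ observation driving the $\susp{\epsilon}$ cases. You are slightly more explicit than the paper in two places: you flag \Lemmaref{lem:elab-valuability} as what justifies \Mallintro's valuability premise (the paper applies \Mallintro without citing it), and you name the ``synthesized types are well-formed'' side lemma needed to re-invoke the induction hypothesis at eliminations, which the paper leaves implicit.
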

The proof is in \citet[Appendix B.5]{Dunfield15arxiv}.
In this theorem, the resulting elaboration judgment has a valueness
$\VVAR'$ that can be more precise than the valueness $\VVAR$
in the non-elaborating judgment.
Suppose that, inside a derivation
of $\chktypee{\eovar \eo}{e}{\VAL}{S}$,
we have
\[
  \Infer{\Rsuspelim{\epsilon}}
       {\chktypee{\eovar \eo}{e'}{\VAL}{\susp{\eovar}{S'}}}
       {\chktypee{\eovar \eo}{e'}{\NONVAL}{S'}}
\]
The valueness in the conclusion must be $\NONVAL$, because we might
substitute $\N$ for $\eovar$, which is elaborated to a \xForce, which is
not a value.  Now suppose we substitute $\V$ for $\eovar$.
We need to construct an elaboration derivation,
and the only rule that works is \Esuspelim{\V}:
\[
  \Infer{\Esuspelim{\V}}
       {\cdot \entails \elab{e'}{\VAL}{\susp{\V}{S'}}{M}}
       {\cdot \entails \elab{e'}{\VAL}{S'}{M}}
\]
This says $e'$ is a value ($\VAL$), where the original (parametric)
economical typing judgment had $\NONVAL$:
Substituting a concrete object (here, $\V$) for a variable $\eovar$
increases information, refining $\NONVAL$ (``I cannot prove this is a value'')
into $\VAL$.  %
In the introduction rules,
substituting $\N$ for $\eovar$ can replace $\NONVAL$ with $\VAL$, because
we know we're elaborating to a thunk, which is a value.

\section{Consistency}
\Label{sec:consistency}

Our main result in this section, \Theoremref{thm:consistency-star},
says that if $e$ elaborates to a target term $M$, and
$M$ steps (zero or more times) to a target value $W$, then $e$ steps (zero or more times)
to some $e'$ that elaborates to $W$.  The source language stepping relation
(\Figureref{fig:src-step}) allows both by-value and (more permissive) by-name
reductions, raising the concern that a call-by-value program might elaborate to 
a call-by-name target program, that is, one taking steps that correspond
to by-name reductions in the source program.
So we strengthen the statement, showing that if $M$ is completely free of
by-name constructs, then all the steps taken in the source program are by-value.

That still leaves the possibility that we messed up our elaboration rules,
such that a call-by-value source program elaborates to an $M$ that contains
by-name constructs.  So we prove (\Theoremref{thm:elab-preserves-n-freeness}) that if the source program is completely free
of by-name constructs, its elaboration $M$ is also free of by-name constructs.
Similarly, we prove (\Theoremref{thm:econ-preserves-n-freeness})
that creating an economical typing derivation from an impartial typing derivation
preserves $\N$-freeness.

Proofs can be found in \citet[Appendix B.6]{Dunfield15arxiv}.

\subsection{\mksubsection{Source-Side Consistency?}}
\Label{sec:source-side-consistency}
A source expression typed by name won't get stuck if a
by-value reduction is chosen, but it may diverge instead
of terminating.  Suppose we have typed $(\Lam{x} x)$
against $\tau \garr{\N} \tau$.  Taking only a by-name reduction,
we have
{\small \begin{align*}
  (\Lam{x} \unit)(\Fix{u}u)
  &~\srcstep~
    [(\Fix{u}u) / x]\unit
~=~ \unit
  \text{~~~~using \SrcRedBetaN}
\end{align*}}
However, if we ``contradict'' the typing derivation by taking
by-value reductions, we diverge:
{\small\begin{align*}
  (\Lam{x} \unit)(\Fix{u}u)
  &~\srcstep~
  (\Lam{x} \unit)\big([(\Fix{u}u)/u]u\big)
  \text{~~using \SrcRedFixV}
\\[-0.3ex] &~=~ 
  (\Lam{x} \unit)(\Fix{u} u)
~~\srcstep~ \dots
\end{align*}}%
We're used to type safety being ``up to'' nontermination in the sense
that we either get a value or diverge, without getting stuck, but this
is worse: divergence depends on which
reductions are chosen.

To get a source type safety result
that is both direct (without appealing to elaboration and target reductions)
and useful, we'd need to give a semantics of ``reduction with respect to a
typing derivation'', or else reduction \emph{of} a typing derivation.
Such a semantics would support reasoning about local transformations
of source programs.  It should also lead to a converse of the consistency
result in this section: if a source expression reduces with respect to
a typing derivation, and that typing derivation corresponds to an elaboration
derivation, then the target program obtained by elaboration can be correspondingly
reduced.

\subsection{\mksubsection{Defining $\N$-Freeness}}

\begin{definition}[$\N$-freeness---impartial]
\Label{def:n-free-impartial}
~
\begin{enumerate}[(1)]
  \item 
    An impartial type $\tau$
    is \emph{$\N$-free} iff (i) for 
    each $\epsilon$ appearing in $S$, the evaluation order $\epsilon$
    is $\V$; and (ii) $\tau$ has no $\alleosym$ quantifiers.
  
  \item
    A judgment $\chktype{\gamma}{e}{\VVAR}{\tau}$
    or $\syntype{\gamma}{e}{\VVAR}{\tau}$
    is \emph{$\N$-free} iff:
    (a) $\gamma$ has no $\eovar \eo$ declarations;
    (b) in each declaration $\xsyn x \VVAR \tau$ in $\gamma$,
         the valueness $\VVAR$ is $\VAL$ and the type $\tau$ is $\N$-free;
    (c) all types appearing in $e$ are $\N$-free;
    and
    (d) $\tau$ is $\N$-free.
\end{enumerate}
\end{definition}

\begin{definition}[$\N$-freeness---economical]
\Label{def:n-free-econ}
~
\begin{enumerate}[(1)]
  \item 
    An economical type $S$
    is \emph{$\N$-free} iff (i) for 
    each $\susp{\epsilon} S_0$ appearing in $S$, the evaluation order $\epsilon$
    is $\V$; and (ii) $S$ has no $\alleosym$ quantifiers.

  \item
    A judgment $\chktypee{\Gamma}{e}{\VVAR}{S}$
    or $\syntypee{\Gamma}{e}{\VVAR}{S}$
    is \emph{$\N$-free} iff:
    (a) $\Gamma$ has no $\eovar \eo$ declarations;
    (b) all types $S'$ in $\Gamma$ are $\N$-free;
    (c) all types appearing in $e$ are $\N$-free;
    and
    (d) $S$ is $\N$-free.
\end{enumerate}
\end{definition}

\begin{definition}[$\N$-freeness---target]
\Label{def:n-free-target}
  A target term $M$ is \emph{$\N$-free} iff
  it contains no $\xThunk$ and $\xForce$ constructs.
\end{definition}

\subsection{\mksubsection{Lemmas for Consistency}}

An inversion lemma
allows types of the form $\susp{\V} \dots \susp{\V} S$, a generalization
needed for the \Esuspelim{\V} case; when we use the lemma in the
consistency proof, the type is not headed by $\susp{\V}$:

\begin{restatable}[Inversion]{lemma}{leminversion}
\Label{lem:inversion}
     ~~Given $\cdot \entails \elab{e}{\VVAR}{\underbrace{\susp{\V} \dots \susp{\V}}_\text{0 or more} S}{M}$:
\vspace{-1ex}
     \begin{enumerate}[(1)]
       \item[(0)] %
           If $M = (\tlam{x}M_0)$
           and $S = (S_1 \arr S_2)$
           \\
           then
           $e = (\Lam{x} e_0)$
           and
           $\cdot, x : S_1 \entails \elab{e_0}{\VVAR'}{S_2}{M_0}$.

       \item %
           If $M = \tpair{W_1}{W_2}$
           and $S = (\alleo{\eovar} S_0)$
           \\
           then
           $\cdot \entails \elab{e}{\VVAR}{[\V/\eovar]S_0}{W_1}$
           and
           $\cdot \entails \elab{e}{\VVAR}{[\N/\eovar]S_0}{W_2}$.

       \item %
           If $M = \Thunk M_0$
           and $S = \susp{\N} S_0$
           then
           $\cdot \entails \elab{e}{\VVAR'}{S_0}{M_0}$.

\ifcsname INAPPENDIX\endcsname
       \item %
           If $M = \ttylam{M_0}$
           and $S = (\All{\alpha} S_0)$
           \\
           then
           $\cdot, \alpha \type \entails \elab{e}{\VAL}{S_0}{M_0}$.

       \item %
           If $M = (\tinj{k} W)$
           and $S = (S_1 + S_2)$
           \\
           then
           $e = (\Inj{k} e')$
           and
           $\cdot \entails \elab{e'}{\VVAR}{S_k}{W}$.

       \item %
           If $M = (\troll W)$
           and $S = (\Rec{\alpha} S_0)$
           \\
           then
           $\cdot \entails \elab{e}{\VVAR}{\big[(\Rec{\alpha}S_0)/\alpha\big]S_0}{W}$.

       \item %
           If $M = \tpair{W_1}{W_2}$
           and $S = (S_1 * S_2)$
           \\
           then
             $\cdot \entails \elab{e_1}{\VVAR_1}{S_1}{W_1}$
             and 
             $\cdot \entails \elab{e_2}{\VVAR_2}{S_2}{W_2}$
             \\
             where $e = \Pair{e_1}{e_2}$
             and $\VVAR = \VVAR_1 \join \VVAR_2$.
\fi
     \end{enumerate}
\end{restatable}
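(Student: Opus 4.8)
The plan is to proceed by structural induction on the given elaboration derivation $\D$ of $\cdot \entails \elab{e}{\VVAR}{\susp{\V}\dots\susp{\V}S}{M}$, with a case analysis on the last rule of $\D$. The generalized form of the hypothesis --- allowing an arbitrary stack of leading $\susp{\V}$'s rather than a bare $S$ --- is chosen precisely so that the two ``$\susp{\V}$-transparent'' rules can be absorbed by the induction. If the last rule is $\Esuspintro$ (its first conclusion), the target type is $\susp{\V}S'$ with premise type $S'$ and the same $M$; since $S$ is not itself headed by $\susp{\V}$ in any of the cases considered, there was at least one leading $\susp{\V}$, so $S'$ is again of the form $\susp{\V}\dots\susp{\V}S$ with one fewer $\susp{\V}$ and the same $S$, and the induction hypothesis on the (smaller) premise derivation gives the conclusion directly. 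Dually, if the last rule is $\Esuspelim{\V}$, the premise type gains a leading $\susp{\V}$ and $M$ is again unchanged, so the induction hypothesis applies.

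For the remaining cases, the head constructor of $M$ --- together with the fact that $S$ is not $\susp{\V}$-headed --- pins down the last rule. For part~(0), $M = \tlam{x}M_0$ rules out $\Esuspintro$'s second conclusion (which would make $M$ a $\Thunk$), $\Esuspelim{\N}$ (a $\Force$), and every introduction or elimination rule whose output has a different head constructor; the only remaining possibility besides the transparent rules is $\Earrintro$, whose conclusion forces the $\susp{\V}$-stack to be empty, $S = S_1 \arr S_2$, and $e = \Lam{x}e_0$, and whose premise is exactly $\cdot, x{:}S_1 \entails \elab{e_0}{\VVAR'}{S_2}{M_0}$. Parts~(1) and~(6) are analogous, forcing $\Ealleointro$ and $\Eprodintro$ respectively --- told apart by whether $S$ is $\alleosym$-headed or $*$-headed; part~(2) forces $\Esuspintro$'s second conclusion; and, in the appendix, parts~(3)--(5) force $\Eallintro$, $\Esumintro{k}$, and $\Erecintro$. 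In each case the stated conclusion is obtained by reading off the premises of that rule.

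A few auxiliary observations close the gaps. Wherever the conclusion asserts that an immediate subterm is a \emph{value} $W_i$ (parts~(1),~(4),~(6)), this holds because $M$ is itself a syntactic value and the target value grammar (\Figureref{fig:target-syntax}) makes every immediate subterm of a value a value; moreover $\VVAR = \VAL$ in those cases by \Lemmaref{lem:value-mono}, which also matches the valueness the relevant introduction rule records on its premises. The existentially quantified $\VVAR'$ in parts~(0),~(2),~(3) is simply the premise valueness supplied by the corresponding introduction rule, and it need not equal the outer $\VVAR$.

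The only real care needed is bookkeeping: tracking the stack of $\susp{\V}$'s through repeated applications of $\Esuspintro$/$\Esuspelim{\V}$, and verifying the full inventory of rules that cannot produce the given $M$ at a $\susp{\V}$-stacked type --- in particular that $\Esuspintro$ with $\N$, $\Esuspelim{\N}$, and the elimination rules $\Eprodelim$, $\Erecelim$, $\Ealleoelim$, $\Eallelim$ are all excluded by the head constructor of $M$. Once that inventory is made, each non-transparent case is a one-line read-off, and the induction terminates because the transparent rules strictly shrink $\D$. I expect no conceptual obstacle here; the lemma is essentially syntax-directed on $M$, and the generalized statement is exactly what is needed to make the straightforward induction go through.
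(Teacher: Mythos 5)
Your proposal is correct and matches the paper's proof essentially exactly: both proceed by induction on the elaboration derivation, absorb \Esuspintro (first conclusion) and \Esuspelim{\V} via the induction hypothesis by tracking the stack of leading $\susp{\V}$'s, and then observe that the head constructor of $M$ together with the shape of $S$ forces a unique introduction rule in each part, whose premises are read off directly. Your additional remarks about \Lemmaref{lem:value-mono} and the exclusion inventory are fine elaborations of the paper's terse ``all other cases are impossible.''
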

{\noindent Parts} (3)--(6), for $\AllSym$, $+$, $\mu$ and $*$, are stated
in the appendix.

Previously, we showed that if a source expression elaborates to
a target value, source typing says the expression is a value ($\VVAR = \VAL$);
here, we show that if a source expression elaborates to a target
value that is $\N$-free (ruling out
$\Thunk M$ produced by the second conclusion of \Esuspintro),
then $e$  is a \emph{syntactic} value.

\begin{restatable}[Syntactic values]{lemma}{lemsyntacticvalues}
\Label{lem:syntactic-value}
~\\
    If $\Gamma \entails \elab e \VAL S W$
    and $W$ is $\N$-free
    then $e$ is a syntactic value.
\end{restatable}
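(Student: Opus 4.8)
The plan is to prove, by rule induction on the derivation of $\Gamma \entails \elab e \VAL S W$, the slightly more general statement that \emph{whenever} $\Gamma \entails \elab e \VAL S M$ for an $\N$-free target term $M$ (not necessarily a syntactic value), the source expression $e$ is a syntactic value; the lemma is then the instance $M = W$, since every target value is $\N$-free (\Definitionref{def:n-free-target}). This generalization is what makes the induction close: the output of rule \Eallintro is $\ttylam M$, which is a syntactic target value for \emph{any} $M$ (and by \Lemmaref{lem:elab-valuability}, $M$ need only be valuable --- e.g.\ a projection), so an induction stated purely for target values would have no handle on the premise.

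I would then case-analyze the last rule. The conclusion carries valueness $\VAL$, which immediately discards every rule forced to conclude $\NONVAL$: \Efixvar, \Efix, \Earrelim, \Eprodelim{k}, \Esumelim, \Erecelim, and \Esuspelim{\N}. Since $M$ is $\N$-free --- it contains no $\xThunk$ and no $\xForce$ --- the last rule also cannot be the second conclusion of \Esuspintro, whose output is $\Thunk M$; this is the one essential use of the $\N$-freeness hypothesis, since without it an application $\expapp{e_1}{e_2}$ (which can receive valueness $\VAL$ only by being thunked through that conclusion) would be a counterexample. The surviving rules fall into three groups. (i) \Evar, \Eunitintro, \Earrintro: here $e$ is a variable, $\unit$, or a $\lambda$-abstraction --- all syntactic values --- and we are done directly. (ii) \Eallintro, \Eallelim, \Ealleointro, \Ealleoelim, the first conclusion of \Esuspintro, \Esuspelim{\V}, and \Erecintro: here $e$ is unchanged between premise and conclusion, the relevant premise again carries valueness $\VAL$, and its target term is a subterm of $M$ (namely $M$ itself, one of the $M_i$, or the immediate subterm of $\ttylam M$, $\tpair{M_1}{M_2}$, $\tproj{k} M$, $\ttyapp M$, or $\troll M$), hence again $\N$-free; the induction hypothesis applies. (iii) \Eprodintro and \Esumintro: for \Eprodintro, $e = \Pair{e_1}{e_2}$ and the conclusion's valueness $\VVAR_1 \join \VVAR_2 = \VAL$ forces $\VVAR_1 = \VVAR_2 = \VAL$; the premise target terms are subterms of the $\N$-free term $\tpair{M_1}{M_2}$, so the induction hypothesis gives that $e_1$ and $e_2$ are syntactic values, whence $\Pair{e_1}{e_2}$ is one too; \Esumintro ($e = \Inj{k}{e'}$) is analogous.

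The main obstacle is getting the induction statement right rather than any individual case: one has to see that ``$M$ is a value'' must be relaxed to ``$M$ is $\N$-free'' so that the value-wrapping rules --- above all \Eallintro --- yield to the induction hypothesis, and dually that $\N$-freeness is \emph{precisely} what rules out the thunk-producing conclusion of \Esuspintro, the only elaboration step that can stamp valueness $\VAL$ on an expression that is not a syntactic value. After that, the remaining work is a routine case check, using only the definition of $\join$ and the fact that $\N$-freeness of a target term is inherited by its subterms.
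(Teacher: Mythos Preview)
Your argument is correct and follows the same inductive skeleton as the paper's. One small slip in the write-up: you say ``the lemma is then the instance $M = W$, since every target value is $\N$-free''---but target values include $\Thunk M_0$, which is \emph{not} $\N$-free. The specialization works simply because the lemma's hypothesis already assumes $W$ is $\N$-free; the value hypothesis plays no role there.

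The substantive difference from the paper is exactly your generalization from ``$W$ is a value and $\N$-free'' to ``$M$ is $\N$-free''. The paper keeps the value hypothesis throughout and uses it to dismiss \Ealleoelim (``projections are not values'') and \Erecelim, while handling \Eallintro and \Eallelim with a bare ``apply the i.h.\ to the subderivation''. Your version is tighter: in the paper's \Eallintro case the premise's target $M_0$ (inside $W = \ttylam M_0$) need only be valuable, not a value, so the i.h.\ as literally stated does not apply---your generalization is precisely what closes that gap. It also lets \Eallelim and \Ealleoelim fall uniformly to the i.h.\ rather than to an ad~hoc non-value check (and indeed $\ttyapp{M}$ is never a value, so the paper's grouping of \Eallelim under ``apply i.h.'' is already inconsistent with a strict value reading). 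In short, the paper's proof and yours are the same induction, but your strengthened invariant is what the paper's argument tacitly needs.
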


The next lemma just says that the $\step$ relation doesn't produce
$\xThunk$s and $\xForce$s out of thin air.

\begin{lemma}[Stepping preserves $\N$-freeness]
\Label{lem:step-n-freeness}
  If $M$ is $\N$-free and $M \step M'$
  then $M'$ is $\N$-free.
\end{lemma}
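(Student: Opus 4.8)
The plan is to do a straightforward case analysis on the reduction rule used, after first isolating two routine auxiliary facts about $\N$-freeness (\Definitionref{def:n-free-target}). First, $\N$-freeness behaves well with respect to contexts: if $\E[M_0]$ is $\N$-free then both the redex $M_0$ and the surrounding context $\E$ contain no $\xThunk$ or $\xForce$, and conversely, plugging an $\N$-free term into such an $\N$-free context again yields an $\N$-free term. Second, substitution preserves $\N$-freeness: if $M$ and $W$ are $\N$-free then so is $[W/x]M$, and similarly $[(\tfix{u}M_1)/u]M$ is $\N$-free whenever $\tfix{u}M_1$ and $M$ are, since substitution never introduces a $\xThunk$ or $\xForce$ that was not already present in the substituted term or in the body. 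Both facts follow by an immediate structural induction on target terms and evaluation contexts.

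Given these, suppose $M$ is $\N$-free and $M \step M'$. By the only step rule, \StepContext, we have $M = \E[M_0]$, $M' = \E[M_0']$, and $M_0 \stepR M_0'$. By the first auxiliary fact, $\E$ and $M_0$ are $\N$-free, and it suffices to prove that $M_0'$ is $\N$-free, since then $M' = \E[M_0']$ is $\N$-free by the same fact. I would then inspect each reduction rule of \Figureref{fig:cbv-opsem}. For $\RedForce$ the redex $\Force(\Thunk M_0)$ contains both a $\xForce$ and a $\xThunk$, so it is not $\N$-free; this case is therefore vacuous. For $\RedTyapp$, $\RedProj$, and $\RedUnroll$, the contractum ($M_0$, $W_k$, or $W$ respectively) is a subterm of the $\N$-free redex, hence $\N$-free. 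For $\RedBeta$, $\RedFix$, and $\RedCase$, $\N$-freeness of the redex gives $\N$-freeness of the $\lambda$-body (resp.\ fixed-point body, selected branch) and of the substituted value, so the second auxiliary fact shows the contractum is $\N$-free.

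I do not expect any real obstacle here: the only reduction that could conceivably break the property is $\RedForce$, and that case is ruled out precisely because the redex $\Force(\Thunk M_0)$ is itself not $\N$-free, contradicting the hypothesis that $M$ (and hence its subterm $M_0$) is $\N$-free. The only bookkeeping worth flagging is making the context-decomposition and substitution facts explicit, since the lemma is phrased about whole terms whereas $\step$ operates on a redex inside an evaluation context.
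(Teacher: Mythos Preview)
Your proposal is correct and follows essentially the same approach as the paper: case analysis on the step relation, relying on the fact that substitution of $\N$-free terms preserves $\N$-freeness. The paper's proof is a one-line sketch that mentions only the substitution fact; you have simply spelled out the context-decomposition auxiliary and walked through each reduction rule, but the underlying argument is identical.
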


The proof is by cases on the derivation of $M \step M'$,
using the fact that if $M_0$ and $M_1$ are $\N$-free,
then $[M_0/x]M_1$ is $\N$-free.

\subsection{\mksubsection{Consistency Results}}

\begin{restatable}[Consistency]{theorem}{thmconsistency}
\Label{thm:consistency}
~\\
  If
  $\cdot \entails \elab e \VVAR S M$
  and $M \step M'$
  then
  there exists $e'$ such that
  $e \srcsteps e'$
  and $\cdot \entails \elab{e'}{\VVAR'}{S}{M'}$
  and $\VVAR' \valleq \VVAR$.
  \\
  Moreover: (1) If $\VVAR = \VAL$ then $e' = e$.
  (2) If $M$ is $\N$-free then $e \srcsteps e'$ can be derived
    without using \SrcStepContextN.
\end{restatable}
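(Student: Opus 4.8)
The plan is to argue by induction on the derivation of $\cdot \entails \elab{e}{\VVAR}{S}{M}$, with a nested case analysis on the step. By inversion on $M \step M'$ (rule $\StepContext$) we may write $M = \E[R]$, $R \stepR R'$, $M' = \E[R']$ for some evaluation context $\E$, redex $R$, and contractum $R'$. For each elaboration rule concluding the given judgment, either $R$ lies strictly inside a subterm of $M$ produced by a premise (a \emph{congruence} case) or $R$ is all of $M$ (a \emph{root-reduction} case); I would split on this.

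In a congruence case the reacting subterm $M_i$ comes with a premise $\cdot \entails \elab{e_i}{\VVAR_i}{S_i}{M_i}$ (one checks that the only premises with non-empty contexts, namely those of $\Earrintro$, $\Efix$ and $\Esumelim$, sit in non-evaluation positions, so they never step), so the induction hypothesis yields $e_i \srcsteps e_i'$, $\cdot \entails \elab{e_i'}{\VVAR_i'}{S_i}{M_i'}$ with $\VVAR_i' \valleq \VVAR_i$ and the part~(1)--(2) refinements. The surrounding target frame corresponds to a source by-value evaluation frame ($\expapp{\hole}{e_2}$, $\expapp{v_1}{\hole}$, $\Proj{k}{\hole}$, $\Pair{\hole}{e_2}$, $\Pair{v_1}{\hole}$, $\Inj{k}{\hole}$, or $\Casesum{\hole}{x_1}{e_1}{x_2}{e_2}$), so $e \srcsteps e'$ by iterating $\SrcStepContextV$/$\SrcStepContextN$ under that frame, and re-applying the same elaboration rule gives $\cdot \entails \elab{e'}{\VVAR'}{S}{M'}$. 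The $\Eprodintro$ join keeps $\VVAR' \valleq \VVAR$; part~(1) propagates because $\VVAR = \VAL$ forces each contributing premise to carry $\VAL$, hence $e_i' = e_i$ by the IH, hence $e' = e$; and since the prepended frame is a by-value frame, part~(2) is preserved. The delicate point is the choice of frame in argument position ($\expapp{v_1}{\hole}$, $\Pair{v_1}{\hole}$), which needs the already-elaborated sibling to be a \emph{syntactic} source value; I would justify this from the sibling target term being a value together with \Lemmaref{lem:value-mono}, and — when the ambient $M$ is $\N$-free, the case relevant to part~(2) — \Lemmaref{lem:syntactic-value}, falling back to a by-name frame otherwise.

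The root-reduction cases are handled rule by rule, using the Inversion Lemma (\Lemmaref{lem:inversion}) to strip leading $\susp{\V}$'s and recover the shapes of the source subderivations. The \emph{administrative} contractions — $\RedTyapp$ ($\ttyapp{(\ttylam M_0)} \to M_0$, from $\Eallelim$ over $\Eallintro$, closed using a routine "elaboration is stable under type substitution" lemma, noting target terms carry no types), $\RedForce$ ($\Force{(\Thunk M_0)} \to M_0$, from $\Esuspelim{\N}$ over $\Esuspintro$, via \Lemmaref{lem:inversion}(2)), the projection of an $\alleosym$-pair (from $\Ealleoelim$ over $\Ealleointro$, via \Lemmaref{lem:inversion}(1)), and $\RedUnroll$ (from $\Erecelim$ over $\Erecintro$, via \Lemmaref{lem:inversion}(6)) — leave $e$ unchanged: $e' = e$, zero source steps, and the extracted subderivation supplies $\cdot \entails \elab{e}{\VVAR'}{S}{M'}$ with $\VVAR' = \VAL \valleq \VVAR$, so part~(1) holds trivially; $\RedForce$ is moreover vacuous under part~(2) since $\Force$ is not $\N$-free. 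The \emph{genuine} contractions — $\RedBeta$, $\RedFix$, the projection of a real pair, and $\RedCase$ — correspond to the single source reductions $\SrcRedBetaN$, $\SrcRedFixV$, $\SrcRedProjN$, $\SrcRedCaseN$. Inversion gives the source redex shape (for $\RedBeta$, say, $e = \expapp{(\Lam{x}e_0)}{e_2}$ with $\cdot, x : S_1 \entails \elab{e_0}{\VVAR_1'}{S_2}{M_1'}$ and $\cdot \entails \elab{e_2}{\VVAR_2}{S_1}{W}$ where $W$ is the substituted target value), and \Lemmaref{lem:elab-expr-subst} shows the source contractum (e.g.\ $[e_2/x]e_0$) elaborates to the target contractum (e.g.\ $[W/x]M_1'$). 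Since the by-name reduction rules carry no value side-condition, the main claim is immediate; for part~(2), $\N$-freeness of $M$ forbids a $\susp{\N}$ at the reacting position (it would force a $\Thunk$ into $M$), so by \Lemmaref{lem:value-mono} and \Lemmaref{lem:syntactic-value} the substituted subexpression is a syntactic source value, allowing the by-value variant ($\SrcStepContextV$) and avoiding $\SrcStepContextN$.

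The main obstacle I expect is the systematic mismatch between target valuables/values and source \emph{syntactic} values: a source expression can elaborate to a target value (a thunk, a rolled or injected value) without itself being a source value, so throughout the congruence and $\beta$/projection/case cases one must pick the source frame (by-value versus by-name) according to whether the sibling subexpression is provably a syntactic value, while keeping the three conclusions — the $\srcsteps$ simulation, the bound $\VVAR' \valleq \VVAR$, and the $\N$-free $\Rightarrow$ by-value strengthening together with $\VVAR = \VAL \Rightarrow e' = e$ — threaded consistently through every case. That bookkeeping, rather than any single hard lemma, is the bulk of the argument; the inversion lemma and the expression-substitution lemma do the real mathematical work.
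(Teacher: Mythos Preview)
Your proposal is correct and follows essentially the same approach as the paper: induction on the elaboration derivation, with a congruence/root-reduction split, the same administrative-versus-genuine classification of root reductions, and the same supporting lemmas (\Lemmaref{lem:inversion}, \Lemmaref{lem:elab-expr-subst}, \Lemmaref{lem:value-mono}, \Lemmaref{lem:syntactic-value}, and type substitution for the \Eallelim case). The only slips are cosmetic --- your citation of \Lemmaref{lem:inversion}(6) for $\recsymbol$ should be part~(5), and not every administrative case yields $\VVAR' = \VAL$ (e.g.\ the \Ealleoelim projection returns the same $\VVAR$), but $\VVAR' \valleq \VVAR$ still holds in each case, so nothing breaks.
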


Result (1), under ``moreover'', amounts to saying that values don't step.
Result (2) stops us from lazily sneaking in uses of \SrcStepContextN instead
of showing that, given $\N$-free $M$, we can always find a by-value evaluation
context for use in \SrcStepContextV.

\begin{restatable}[Multi-step consistency]{theorem}{thmconsistencystar}
\Label{thm:consistency-star} ~\\
  If $\cdot \entails \elab e \VVAR S M$
  and $M \steps W$
  then there exists $e'$ such that
  $e \srcsteps e'$ and
  $\cdot \entails \elab {e'} {\VAL} {S} {W}$.
  Moreover, if $M$ is $\N$-free then we can derive $e \srcsteps e'$
  without using \SrcStepContextN.
\end{restatable}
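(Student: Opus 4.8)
The plan is to induct on the length $n$ of the target reduction sequence $M \steps W$, using the single-step result \Theoremref{thm:consistency} as the workhorse. For the base case $n = 0$ we have $M = W$, so $\cdot \entails \elab e \VVAR S W$ with $W$ a syntactic target value; \Lemmaref{lem:value-mono} then forces $\VVAR = \VAL$, and I would take $e' = e$ with the empty source reduction $e \srcsteps e$, which gives $\cdot \entails \elab{e'}{\VAL}{S}{W}$ directly. The ``moreover'' clause is vacuous here, since an empty reduction uses no rules, in particular not \SrcStepContextN.

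For the inductive step I would write $M \step M_1 \steps W$, where the tail has length $n-1$. Applying \Theoremref{thm:consistency} to $\cdot \entails \elab e \VVAR S M$ and $M \step M_1$ produces $e_1$ with $e \srcsteps e_1$, $\cdot \entails \elab{e_1}{\VVAR_1}{S}{M_1}$, and $\VVAR_1 \valleq \VVAR$; crucially the elaboration type is still $S$, so the induction hypothesis applies to $\cdot \entails \elab{e_1}{\VVAR_1}{S}{M_1}$ and $M_1 \steps W$, yielding $e'$ with $e_1 \srcsteps e'$ and $\cdot \entails \elab{e'}{\VAL}{S}{W}$. Concatenating the two source reduction sequences gives $e \srcsteps e_1 \srcsteps e'$, i.e.\ the desired $e \srcsteps e'$, at type $S$ and with final valueness $\VAL$.

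For the $\N$-freeness refinement, suppose $M$ is $\N$-free. In the base case there is nothing to prove. In the inductive step, the ``moreover'' clause of \Theoremref{thm:consistency} gives $e \srcsteps e_1$ without \SrcStepContextN; to feed the induction hypothesis I need $M_1$ to be $\N$-free as well, which is exactly \Lemmaref{lem:step-n-freeness} applied to $M \step M_1$. Hence $e_1 \srcsteps e'$ is likewise derivable without \SrcStepContextN, and the concatenated sequence $e \srcsteps e'$ avoids it throughout.

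I expect no real obstacle here: all of the difficulty lives in the single-step theorem \Theoremref{thm:consistency} (matching each target reduction with an appropriate source evaluation context and tracking how valueness can sharpen from $\NONVAL$ to $\VAL$ under \Esuspelim{\V} and \Ealleoelim), together with \Lemmaref{lem:value-mono} and \Lemmaref{lem:step-n-freeness}; the present statement is then a plain induction on the number of target steps. The two points to keep straight are that the elaboration type $S$ is invariant along the reduction (so the induction hypothesis is applicable at the same type) and that the final valueness is pinned to $\VAL$ precisely because $W$ is a target value, which is where \Lemmaref{lem:value-mono} enters.
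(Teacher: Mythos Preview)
Your proposal is correct and follows essentially the same approach as the paper's proof: induction on the length of $M \steps W$, with \Lemmaref{lem:value-mono} handling the base case and \Theoremref{thm:consistency} plus \Lemmaref{lem:step-n-freeness} handling the inductive step (including the $\N$-freeness refinement). The paper phrases the induction as ``on the derivation of $M \steps W$'' rather than on the step count, but this is the same thing.
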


\subsection{\mksubsection{Preservation of $\N$-Freeness}}

\begin{restatable}{lemma}{econbisubformula}%
\Label{lem:econ-bi-subformula}
  If $\syntypee{\Gamma}{e}{\VVAR}{S}$
  and $S$ is \emph{not} $\N$-free
  then
  it is not the case that both $\Gamma$ and $e$ are $\N$-free.
\end{restatable}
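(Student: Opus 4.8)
The plan is to prove this by structural induction on the derivation of $\syntypee{\Gamma}{e}{\VVAR}{S}$, case-analyzing the final rule. Conceptually this is a subformula property for the economical bidirectional system: a \emph{synthesized} type is always ``read off'' from the context $\Gamma$, from a type annotation inside $e$, or from the synthesized type of a syntactic subterm of $e$ (possibly under a connective or a type substitution). So any evaluation-order polymorphism ($\alleosym$) or by-name suspension ($\susp{\N}$ or $\susp{\eovar}$) witnessing that $S$ is not $\N$-free must ultimately come from $\Gamma$ or from a type written in $e$. Since every premise I recurse on is itself a \emph{synthesis} premise, no mutual induction with the checking judgment is needed; the checking premises of \Rarrelim, \Ranno, etc., are never inspected.

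For the base cases: in \Rvar and \Rfixvar, $S$ is exactly the type assigned to $x$ (resp.\ $u$) in $\Gamma$, so if $S$ is not $\N$-free then $\Gamma$ contains a non-$\N$-free type and is not $\N$-free (\Definitionref{def:n-free-econ}). In \Ranno, $e = \Anno{e_0}{S}$ and $S$ occurs literally in $e$, so if $S$ is not $\N$-free then a non-$\N$-free type appears in $e$ and $e$ is not $\N$-free. In each base case we in fact get a stronger conclusion than required: \emph{one} of $\Gamma$, $e$ is non-$\N$-free.

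For the inductive cases (the elimination rules \Rsuspelim{\V}, \Rsuspelim{\epsilon}, \Rarrelim, \Rprodelim{k}, \Rallelim, \Ralleoelim, \Rrecelim), I use two easy observations about $\N$-freeness. First, $\N$-freeness is downward closed under subterms of a type: if $S$ occurs as a subterm of $S'$ and $S$ is not $\N$-free, then $S'$ is not $\N$-free. This immediately handles \Rsuspelim{\V}, \Rsuspelim{\epsilon} (premise type $\susp{\epsilon} S$), \Rarrelim (premise type $S_1 \arr S_2$) and \Rprodelim{k} (premise type $S_1 * S_2$), since the premise's synthesized type properly contains the conclusion type $S$; apply the induction hypothesis to the synthesis premise. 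Second, substituting $\N$-free types for type variables preserves $\N$-freeness, so contrapositively: if $[S'/\alpha]S_0$ is not $\N$-free then either $S_0$ is not $\N$-free or $S'$ is not $\N$-free. For \Rallelim (conclusion type $[S'/\alpha]S_0$, premise type $\All{\alpha} S_0$), in the first subcase $\All{\alpha} S_0$ is not $\N$-free (the $\forall$ adds no bad node) and the induction hypothesis applies; in the second subcase $S'$ occurs in $e = \tyapp{e_0}{S'}$, so $e$ is not $\N$-free and we finish directly. \Rrecelim is the same, with the substituted type being $\Rec{\alpha} S_0$, which is not $\N$-free exactly when $S_0$ is, so the premise type $\Rec{\alpha} S_0$ is not $\N$-free and the induction hypothesis applies. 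Finally \Ralleoelim is the easiest: its synthesis premise has type $\alleo{\eovar} S_0$, which carries an $\alleosym$ quantifier and is therefore \emph{never} $\N$-free, so the induction hypothesis applies regardless of whether the conclusion type $[\epsilon/\eovar]S_0$ is $\N$-free. In every inductive case the induction hypothesis gives that $\Gamma$ or the subterm $e'$ appearing in the synthesis premise is not $\N$-free; since $e'$ is a subterm of $e$ (so every type appearing in $e'$ appears in $e$), this yields that $\Gamma$ or $e$ is not $\N$-free.

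The only real bookkeeping is the substitution observation used in the \Rallelim and \Rrecelim cases, i.e.\ that replacing type variables by $\N$-free types cannot destroy $\N$-freeness (equivalently, a ``bad'' subterm of $[S'/\alpha]S_0$ is either a bad subterm of $S_0$ not lying inside a substituted copy of $S'$, or a bad subterm of $S'$). This is a routine induction on the type $S_0$; it is the main — though minor — technical point, since everything else is a direct appeal to the definition of $\N$-freeness and the induction hypothesis.
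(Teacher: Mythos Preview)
Your proposal is correct and follows essentially the same approach as the paper's proof: induction on the synthesis derivation, with the base cases \Rvar, \Rfixvar, \Ranno handled by observing that $S$ literally appears in $\Gamma$ or $e$, and the elimination cases handled by showing the premise's synthesized type is $\N$-tainted and applying the induction hypothesis. Your treatment is slightly more explicit than the paper's (you spell out the substitution observation for \Rallelim/\Rrecelim and the ``always tainted'' observation for \Ralleoelim, which the paper compresses to ``the i.h.\ gives the result'' and ``similar to the \Rarrelim case''), but the structure and ideas are identical.
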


\begin{restatable}[Economizing preserves $\N$-freeness]{theorem}{econpreservesnfreeness}
\Label{thm:econ-preserves-n-freeness}
~\\
  If $\chktype{\gamma}{e}{\VVAR}{\tau}$ (resp.\ $\syn$)
  where the judgment is $\N$-free (\Definitionref{def:n-free-impartial} (2))
  then
  $\chktypee{\ctxecontrans{\gamma}}{\expecontrans{e}}{\VVAR}{\econtrans{\tau}}$
  (resp.\ $\syn$)
  where this judgment is $\N$-free (\Definitionref{def:n-free-econ} (2)).
\end{restatable}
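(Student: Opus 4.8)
The plan is to reduce the theorem to one structural lemma about the type translation and then propagate it through the derivation. First I would prove, by structural induction on the impartial type $\tau$, that if $\tau$ is $\N$-free then $\econtrans{\tau}$ is $\N$-free. The only clauses of $\econtrans{\cdot}$ that introduce a suspension point are those for $\garr{\epsilon}$, $*^\epsilon$, $+^\epsilon$, and $\recsym{\epsilon}$, and on an $\N$-free $\tau$ every such $\epsilon$ is $\V$; likewise $\econtrans{\cdot}$ produces an $\alleosym$ only from an $\alleosym$, of which an $\N$-free type has none. Hence $\econtrans{\tau}$ carries only $\susp{\V}$-heads and no $\alleosym$, and its immediate subterms are covered by the induction hypothesis. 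Two easy consequences follow by inspecting the translations in \Figureref{fig:econtrans}: if $\gamma$ is $\N$-free in the sense of \Definitionref{def:n-free-impartial}(2) (clauses (a), (b)) then $\ctxecontrans{\gamma}$ satisfies clauses (a), (b) of \Definitionref{def:n-free-econ}(2) --- a declaration $\xsyn x \VAL \tau$ becomes $x : \susp{\V}\econtrans{\tau}$, which is $\N$-free; a fixed-point declaration becomes $u : \econtrans{\tau}$, again $\N$-free; and $\eovar\eo$-declarations, the sole source of $\eovar\eo$ in $\ctxecontrans{\gamma}$, are absent --- and since $\expecontrans{e}$ differs from $e$ only in replacing each embedded annotation $\tau$ by $\econtrans{\tau}$, every type in $\expecontrans{e}$ is $\N$-free whenever every type in $e$ is.

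Given these facts, the theorem follows by induction on the impartial typing derivation. \Theoremref{thm:econ} already supplies the economical derivation of $\chktypee{\ctxecontrans{\gamma}}{\expecontrans{e}}{\VVAR}{\econtrans{\tau}}$ (resp.\ $\syn$), so the task is to verify the four clauses of \Definitionref{def:n-free-econ}(2) for this conclusion. Clauses (a), (c), (d) are immediate from clauses (a), (c), (d) of the hypothesis together with the translation facts above, and clause (b) is the context-translation fact. The induction is needed only to preserve these clauses as the derivation descends: the context-extending rules $\Ifix$, $\Iarrintro$, and the branches of $\Isumelim$ each add a declaration whose type is $\N$-free (respectively the $\N$-free conclusion type, the $\N$-free domain, the $\N$-free summand) and, for term variables, with valueness $\valof{\V} = \VAL$; $\Iallintro$ and $\Iallelim$ add nothing harmful; and $\Ialleointro$ and $\Ialleoelim$ cannot occur in an $\N$-free derivation at all, since an $\N$-free type contains no $\alleosym$.

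The step that needs real care --- and the main obstacle --- is the elimination rules whose premise type is larger than their conclusion type: $\Iarrelim$, $\Iprodelim$, $\Irecelim$, the first premise of $\Isumelim$, and $\Iallelim$. An $\N$-free conclusion does not on its own force the premise type to be $\N$-free, so for these cases I would first establish the impartial analogue of \Lemmaref{lem:econ-bi-subformula}: if $\syntype{\gamma}{e}{\VVAR}{\tau}$ is derivable where $\gamma$ has no $\eovar\eo$-declaration and only $\N$-free declaration types (with term variables $\VAL$), and every type occurring in $e$ is $\N$-free, then $\tau$ is $\N$-free. This is an induction on the synthesis derivation: $\Ivar$/$\Ifixvar$ read an $\N$-free type from $\gamma$; $\Ianno$ reads it off an annotation occurring in $e$; $\Iarrelim$/$\Iprodelim$/$\Irecelim$ apply the induction hypothesis to the premise to obtain an $\N$-free compound and then extract a component (using that unrolling a recursive type and, more generally, substituting an $\N$-free type into an $\N$-free type preserve $\N$-freeness); $\Iallelim$ is the same with a substitution-monotonicity step; and $\Ialleoelim$ is vacuous, because its premise would by the induction hypothesis synthesize an $\N$-free type, contradicting that that type is $\alleosym$-headed. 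Feeding this lemma into the elimination cases of the main induction makes each premise judgment $\N$-free, so the induction hypothesis applies and the argument closes. The only bookkeeping point worth stating explicitly is that fixed-point-variable declarations are understood to carry $\N$-free types in \Definitionref{def:n-free-impartial}(2); this invariant is exactly what the $\Ifix$ case maintains.
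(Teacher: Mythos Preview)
Your first paragraph already contains the entire proof. The key observation is that ``the judgment is $\N$-free'' (Definitions~\ref{def:n-free-impartial}(2) and~\ref{def:n-free-econ}(2)) is a purely syntactic condition on the context, the expression, and the type --- it says nothing about the derivation. So once you have (i) your structural lemma that $\N$-free $\tau$ yields $\N$-free $\econtrans{\tau}$, (ii) the immediate corollaries for $\ctxecontrans{\gamma}$ and $\expecontrans{e}$, and (iii) \Theoremref{thm:econ} to supply the economical derivation, you are done: clauses (a)--(d) of \Definitionref{def:n-free-econ}(2) for the conclusion follow directly. You even say this yourself at the start of the second paragraph.

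The remainder of paragraphs~2--3 is therefore unnecessary. The sentence ``The induction is needed only to preserve these clauses as the derivation descends'' is the misstep: nothing obliges you to descend, because the theorem speaks only about the conclusion judgment, not about intermediate judgments in the derivation. The elimination-rule obstacle you identify is real \emph{for an inductive proof}, and your impartial subformula lemma is a correct fix --- but the inductive proof was never required in the first place.

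By comparison, the paper's proof is phrased as ``by induction on the given derivation, following the proof of \Theoremref{thm:econ}''. Read literally as an induction that re-establishes $\N$-freeness at every premise, that route runs into exactly the elimination-rule issue you raise (and the paper's single worked example, \Iarrintro, is an introduction rule where the issue does not arise). Your direct argument from the translation lemmas is both simpler and more complete; just stop after the first paragraph.
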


\begin{restatable}[Elaboration preserves $\N$-freeness]{theorem}{elabpreservesnfreeness}%
\Label{thm:elab-preserves-n-freeness}
~\\
  If $\chktypee{\Gamma}{e}{\VVAR}{S}$ (or $\syn$)
  where the judgment is $\N$-free (\Definitionref{def:n-free-econ} (2))
  then
  $\Gamma \entails \elab{\er{e}}{\VVAR}{S}{M}$
  such that $M$ is $\N$-free.
\end{restatable}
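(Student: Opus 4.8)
The plan is to prove this by re-running the induction that underlies \Theoremref{thm:elab-type-soundness}: an induction on the derivation $\Dee$ of the economical typing judgment (simultaneously for the $\chk$ and $\syn$ forms), which walks down $\Dee$ rule by rule and assembles the elaboration derivation. The only new obligation, on top of that proof, is to carry $\N$-freeness along. So for each economical rule I would verify two things. (a) Each premise judgment invoked is again $\N$-free (\Definitionref{def:n-free-econ} (2)), so the induction hypothesis applies and returns an elaboration of the premise whose target term is $\N$-free. (b) The target term assembled in the conclusion is built only out of target constructors \emph{other than} $\xThunk$ and $\xForce$, hence is $\N$-free (\Definitionref{def:n-free-target}). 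The $M$ produced here is the same $M$ produced by the proof of \Theoremref{thm:elab-type-soundness}, so it is simultaneously well-typed and $\N$-free; as there, the valueness attached to the elaboration judgment may be strictly below $\VVAR$, which is harmless for $\N$-freeness.

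For the introduction rules, the unit rule, and the variable/fixed-point rules, both checks are routine. $\N$-freeness of a connective type is componentwise — a $\arr$, $*$, $+$, $\mu$, or $\AllSym$ type is $\N$-free exactly when its immediate subformulas are and it contains no $\alleosym$ or $\susp{\N}$ — so each premise type inherits $\N$-freeness from the conclusion type, using in the $\recsymbol$ case that $\N$-freeness is preserved under unrolling; and the elaboration rules \Evar, \Eunitintro, \Eallintro, \Earrintro, \Eprodintro, \Esumintro{k}, \Erecintro, \Esuspintro, \Efix wrap the (by IH) $\N$-free subterms using only non-thunk/force constructors ($x$, $u$, $\tunit$, a type abstraction, a $\lambda$, a pair, an injection, a roll, a fixed point, or — in the first conclusion of \Esuspintro — nothing at all). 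Note that in an $\N$-free derivation \Rsuspintro can only fire with $\epsilon = \V$, since $\susp{\N}S$ is never $\N$-free; correspondingly only the first conclusion of \Esuspintro is used and no thunk is emitted. The rules \Rsub and \Ranno disappear under erasure, and the induction simply recurses on the premise, whose set of source types is contained in that of the conclusion.

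The crux is the batch of elimination rules that synthesize a type silently — \Rarrelim, \Rprodelim{k}, \Rsumelim, \Rrecelim, \Rallelim, \Ralleoelim, \Rsuspelim{\V}, \Rsuspelim{\epsilon} — where the type of the eliminated connective appears only in the premise and need not occur syntactically in the subject, so its $\N$-freeness is not visible from the hypothesis. This is exactly where \Lemmaref{lem:econ-bi-subformula} does the work: since the conclusion judgment is $\N$-free, its context $\Gamma$ and all types occurring in its subject are $\N$-free, and because the premise subject is either identical to the conclusion subject (for $\alleosym$, $\susp{\epsilon}$, $\mu$) or a subterm of it (for $\arr$, $*$, $+$, $\AllSym$) the same holds of the premise; so if the synthesized premise type were \emph{not} $\N$-free, the lemma would be contradicted. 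Two consequences follow: (i) for \Ralleoelim (premise type $\alleo{\eovar}S$) and for \Rsuspelim{\epsilon} at $\epsilon = \N$ (premise type $\susp{\N}S_0$), the premise type is structurally incapable of being $\N$-free, so these rules cannot end an $\N$-free derivation — which is also why $\alleosym$ and $\susp{\N}$ never enter the picture; (ii) in every other silent-elimination case the premise type is forced $\N$-free (for \Rrecelim and \Rallelim also using that $\N$-freeness is reflected by (un)rolling and type substitution), the IH applies, and the elaboration rule wraps the returned $\N$-free target term in an application, projection, case split, unroll, or type application, staying $\N$-free. I expect the main obstacle to be not any single hard step but precisely this plumbing of \Lemmaref{lem:econ-bi-subformula} through each silent-elimination case — in particular keeping track of the subject inclusion that lets "$e$ is $\N$-free" descend from conclusion to premise — since everything else is a rule-by-rule check that largely coincides with the proof of \Theoremref{thm:elab-type-soundness}.
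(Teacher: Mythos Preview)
Your proposal is correct and takes essentially the same approach as the paper: induction on the economical typing derivation, ruling out \Ralleointro and the $\susp{\N}$ conclusion of \Rsuspintro because the conclusion type fails $\N$-freeness, ruling out \Ralleoelim and \Rsuspelim{\epsilon} at $\epsilon=\N$ via \Lemmaref{lem:econ-bi-subformula}, and otherwise applying the corresponding elaboration rule to the IH results.

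You are in fact more careful than the paper on one point: the paper lumps \Rarrelim, \Rprodelim{k}, \Rsumelim, \Rrecelim, \Rallelim into a single ``use the i.h.\ and apply the corresponding rule'' clause without saying why the synthesized premise type is $\N$-free, whereas you correctly identify that \Lemmaref{lem:econ-bi-subformula} is what forces this (since the premise subject is a subterm of the conclusion subject and $\Gamma$ is unchanged). One small overstatement: in the $\N$-free setting the valueness does \emph{not} drop --- the only places in the proof of \Theoremref{thm:elab-type-soundness} where $\VVAR' \sqsubset \VVAR$ arise from the $\susp{\N}$ clauses, which you have already excluded --- so the elaboration can be stated with the same $\VVAR$ as in the theorem.
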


\section{Related Work}

\paragraph{History of evaluation order.}
In the $\lambda$-calculus, normal-order (leftmost-outermost) reduction
seems to have preceded anything resembling call-by-value, but
\citet{Bernays36} suggested requiring that the term being substituted in a
reduction be in normal form.
In programming languages,
Algol-60 originated call-by-name and also provided call-by-value~\citep[4.7.3]{Naur60};
while the decision to make the former the default is debatable,
direct support for two evaluation orders made Algol-60 an improvement on
many of its successors.
\citet{Plotkin75} related cbv and cbn to the $\lambda$-calculus, and
developed translations between them.

Call-by-need or \emph{lazy} evaluation was developed in the 1970s with the goal
of doing as little computational work as possible, under which we
can include the unbounded work of not terminating~\citep{WadsworthThesis,Henderson76,Friedman76}.

\paragraph{Laziness in call-by-value languages.}
Type-based support for selective lazy evaluation has been developed
for cbv languages, including Standard ML \citep{Wadler98} and
Java~\citep{Warth07}.  These approaches allow programmers to
conveniently switch to another evaluation order, but don't allow
polymorphism over evaluation orders.  Like our economical type system,
these approaches are biased towards one evaluation order.

\paragraph{General coercions.}
General approaches to typed coercions were explored by
\citet{BreazuTannen91} and \citet{Barthe96}.
\citet{Swamy09} developed a general typed coercion system
for a simply-typed calculus, giving thunks as an example.
In addition to annotations on all $\lambda$ arguments,
their system requires thunks (but not forces) to be written explicitly.

\paragraph{Intersection types.}
While this paper avoids the notation of intersection types, the quantifier $\alleosym$
is essentially an intersection type of a very specific form.
Theories of intersection types were originally developed by
\citet{CDV81:IntersectionTypes}, among others; \citet{Hindley92} gives
a useful introduction and survey.
Intersections entered programming languages---as opposed to
$\lambda$-calculus---when \citet{Reynolds96:Forsythe}
put them at the heart of the Forsythe language.  Subsequently---Reynolds's
paper describes ideas he developed in the 1980s---\citet{Freeman91} started
a line of research on \emph{refinement} intersections, where both parts
of an intersection must refine the same base type (essentially, the same ML type).

The $\alleosym$ intersection in this paper mixes features of
general intersection and refinement intersection: the $\V$ and $\N$ instantiations
have close-to-identical structure, but cbv and cbn functions aren't refinements
of some ``order-agnostic'' base type.  Our approach is descended mainly from
the system of \citet{Dunfield14}, which elaborates (general) intersection and union types
into ordinary product and sum types.  We differ in
not having a source-level `merge' construct $\Merge{e_1}{e_2}$, where the
type system can select either $e_1$ or $e_2$, ignoring the other component.  Since $e_1$
and $e_2$ are not prevented from having the same type, the type system may
elaborate either expression, resulting in unpredictable behaviour.  
In our type systems, we can think of $\appsymbol$
in the source language as a merge
$(\Merge{\vappsymbol}{\nappsymbol})$, but the components have
incompatible types.  Moreover, the components must behave the same
apart from evaluation order (evoking a standard property of systems
of refinement intersection).

\paragraph{Alternative target languages.}
The impartial type system for our source language suggests
that we should consider targeting an impartial, but more explicit,
target language.
In an untyped setting,
\citet{Asperti90} %
developed a calculus with call-by-value and call-by-name 
$\lambda$-abstractions;
function application is disambiguated at run time.
In a typed setting, 
call-by-push-value~\citep{Levy99}  %
systematically distinguishes values and computations;
it has a thunk type $\xU$ (whence our notation)
but also a dual, ``lift'' $\xF$, which constructs a computation
out of a value type.  Early in the development of this paper,
we tried to elaborate directly from the impartial type system to cbpv,
without success.  Levy's elegant \emph{pair} of translations from cbv
and from cbn don't seem to fit together easily; our feeling is that
a combined translation would be either complicated, or prone
to generating many redundant forces and thunks.

\citet{ZeilbergerThesis} defined a polarized type system with positive and
negative forms of each standard connective.  In that system, $\downarrow$
and $\uparrow$ connectives alternate between polarities, akin to $\xU$ and
$\xF$ in call-by-push-value.  Zeilberger's system has a
symmetric function type, rather than the asymmetric function
type found in cbpv.  We guess that a translation into this system
would have similar issues as with call-by-push-value.

\section{Future Work}
\Label{sec:future}

This paper develops type systems with multiple evaluation orders
and polymorphism over evaluation orders, opening up the design space.
More work is needed to realize these ideas in practice.

\paragraph{Implicit polymorphism.}
We made type polymorphism explicit, to prevent the type system
  from guessing evaluation orders.  A practical system should find polymorphic instances
  without guessing, perhaps based on existential type variables~\citep{Dunfield13}.
  We could also try to use some form of (lexically scoped?) default evaluation order.
  Such a default could also be useful for deciding whether some language features, such as
  \textkw{let}-expressions, should be by-value or by-name.

\paragraph{Exponential expansion.}
Our rules elaborate a function typed with $n$ $\alleosym$ quantifiers into
$2^n$ instantiations.  Only experience can demonstrate whether
this is a problem in practice, but we have reasons to be optimistic.

First, we need the right point of comparison.
The alternative to elaborating
\textfn{map} into, say, 8 instantiations is to write 8 copies of \textfn{map}
by hand.  Viewed this way, elaboration maintains the size of the target program,
while allowing an exponentially shorter source program!
(This is the flipside of a sleight-of-hand from complexity theory, where you
can make an algorithm look faster by inflating the input:
Given an algorithm that takes $2^n$ time, where $n$
is the number of bits in the input integer, we can get a purportedly polynomial
algorithm by encoding the input in unary.)

Second, a compiler could analyze the source program and generate only the
instances actually used, similar to monomorphization of $\forall$-polymorphism
in MLton (\href{http://mlton.org}{mlton.org}).

\paragraph{Other evaluation orders.}
Our particular choice of evaluation orders is
not especially practical: the major competitor to call-by-value
is call-by-need, not call-by-name.  We chose call-by-name for simplicity (for example,
in the source reduction rules), but many of our techniques should be directly
applicable to call-by-need: elaboration would produce thunks in much the same way,
just for a different dynamic semantics.
Moreover, our approach could be extended to more than two evaluation orders,
using an $n$-way intersection that elaborates to an $n$-tuple.

One could also take ``order'' very literally, and support
left-to-right \emph{and} right-to-left call-by-value.  For low-level reasons,
OCaml uses the former when compiling to native code, and the latter when compiling to
bytecode.  Being able to specify order of evaluation via type annotations
could be useful when porting code from Standard ML (which uses left-to-right call-by-value).

\paragraph{Program design.}
We also haven't addressed questions about \emph{when} to use what
evaluation order.  Such questions seem to have been lightly studied,
perhaps because of social
factors: a programmer may choose a strict language because they
tend to solve problems that don't need laziness---which
is self-reinforcing, because laziness is less convenient in a strict language.
However, \citet{ChangThesis} developed
tools, based on both static analysis and dynamic profiling, that suggest
where laziness is likely to be helpful.

\paragraph{Existential quantification.}
By analogy to union types \citep{Dunfield14}, an existential quantifier
would elaborate to a sum type.  For example, the sum tag on a function
of type $\ExisSym{\eovar}.\,\tau \garr{\eovar} \tau$ would indicate, at run
time, whether the function was by-value or by-name.  This might resemble
a typed version of the calculus of \citet{Asperti90}.

\vspace{-0pt}

\section*{Acknowledgments}
The ICFP reviewers made suggestions and asked questions that have
(I believe) improved the paper.
The Max Planck Institute for Software Systems supported the early
stages of this work.  Dmitry Chistikov suggested the symbol $\alleosym$.

\addtolength{\bibsep}{-1.95pt}
 \vspace{-2pt}

\bibliographystyle{initials}
\bibliography{eo}

\normalsize

\clearpage

\onecolumn

\ErratumSection{Call-by-name evaluation contexts}
\label{erratum}

\textit{Corrected in arXiv version 3.}

\subsection*{What is the mistake?}

The definition of by-name evaluation contexts in \Figureref{fig:src-step}
is wrong; it manages to define a peculiarly \emph{eager} evaluation context that
can evaluate a function's argument before the function
has been evaluated, and evaluate inside a pair.  In addition to not being
call-by-name, this is awfully nondeterministic.
\[
  \begin{bnfarray}
     \text{\small By-name eval.\ contexts}
            & \EN
            &\bnfas&
              \hole
              \bnfaltBRK
              \EN \vapp e_2
              \bnfalt
              \matherrorhi{e_1 \vapp \EN}
              \bnfaltBRK
              \matherrorhi{\Pair{\EN}{e_2}}
              \bnfalt
              \matherrorhi{\Pair{{e_1}}{\EN}}
              \bnfalt
              \Proj{k}{\EN}
              \bnfaltBRK
              \tinj{k}{\EN}
              \bnfalt
              \Casesum{\EN}{x_1}{e_1}{x_2}{e_2}
 \end{bnfarray}
\]
The fix is to omit the three \errorhi{boxed} alternatives in the grammar.
\[
  \begin{bnfarray}
     \text{\small By-name eval.\ contexts}
            & \EN
            &\bnfas&
              \hole
              \bnfaltBRK
              \EN \vapp e_2
              \bnfaltBRK
              \Proj{k}{\EN}
              \bnfaltBRK
              \tinj{k}{\EN}
              \bnfalt
              \Casesum{\EN}{x_1}{e_1}{x_2}{e_2}
  \end{bnfarray}
\]
The discussion in \Sectionref{sec:source-opsem}, marked with a red box,
notes that ``$\expapp{e_1}{\hole}$ is a $\EN$ but not a $\EV$'', which
matches the (wrong) definition; however, since the definition is wrong,
the claim that ``the definitions of $\EN$ and $\srcredN$ are standard for call-by-name''
is utterly wrong.

\subsection*{What are its consequences?}

Few (apart from embarrassment).  The consistency result is only a simulation,
not a bisimulation.  None of the metatheory goes \emph{from} a source reduction
\emph{to} a target reduction; that is, no claims have the form ``given some $e \srcstep e'$,
where $e$ is related to $M$, produce some $M'$ such that $M \step M'$''.

In fact, one could add any kind of garbage to the definition of $\EN$, and the 
metatheory wouldn't change.

\ErratumSection{Uppercase, lowercase}

\textit{Corrected in arXiv version 3.}

In the published version, the ``judgment boxes'' heading the rules had $\Gamma$
instead of $\gamma$.
Similarly, Theorem 17 had $\ctxecontrans{\Gamma}$ instead of $\ctxecontrans{\gamma}$.

As these are minor mistakes, they are not highlighted in the text.

\ifnum\OPTIONAppendix=1
\clearpage

\onecolumn
\appendix
\global\edef\INAPPENDIX{yes}

\fontsize{9.5pt}{11.5pt}\selectfont

\section*{Supplemental material for ``Elaborating Evaluation-Order Polymorphism''}

This section of the extended version~\citep{Dunfield15arxiv} contains the
(straightforward) rules for
type well-formedness
(\Appendixref{apx:type-wf}),
proofs about economical typing that belong to
\Sectionref{sec:econ} (\Appendixref{apx:econ}),
proofs about elaboration typing that belong to
\Sectionref{sec:elab} (\Appendixref{apx:elab}),
and
consistency proofs that belong to
\Sectionref{sec:consistency} (\Appendixref{apx:consistency}).

\section{Type Well-formedness}
\Label{apx:type-wf}

\begin{figure}[htbp]
  \centering

  \judgbox{\gamma \entails \epsilon \eo}
               {Evaluation order $\epsilon$ is well-formed}
  \vspace{-4ex}
  \begin{mathpar}
      \Infer{}
          {}
          { 
              \arrayenvc{
                 \gamma \entails \V \eo
                 \\
                 \gamma \entails \N \eo
              }
          }
      \and
      \Infer{}
           {(\eovar \eo) \in \gamma}
           {\gamma \entails \eovar \eo}
    \end{mathpar}

  \judgbox{\gamma \entails \tau \type}
               {Impartial type $\tau$ is well-formed}
  \vspace{-3ex}
  \begin{mathpar}
      \Infer{}
           {}
           {\gamma \entails \unitty \type}
      \and
      \Infer{}
           {(\alpha \type) \in \gamma}
           {\gamma \entails \alpha \type}
      \and
      \Infer{}
           {\gamma, \alpha \type \entails \tau \type}
           {\gamma \entails (\All{\alpha}\tau) \type}
      \and
      \Infer{}
           {\gamma, \eovar \eo \entails \tau \type}
           {\gamma \entails (\alleo{\eovar}\tau) \type}
      \\
      \Infer{}
           { \gamma \entails \epsilon \eo
             \\
             \arrayenvbl{
                 \gamma \entails \tau_1 \type
                 \\
                 \gamma \entails \tau_2 \type
             }
           }
          { 
              \arrayenvc{
                 \gamma \entails (\tau_1 \garr{\epsilon} \tau_2) \type
                 \\
                 \gamma \entails (\tau_1 *^{\epsilon} \tau_2) \type
                 \\
                 \gamma \entails (\tau_1 +^{\epsilon} \tau_2) \type
              }
          }
      \and
      \Infer{}
           { \gamma \entails \epsilon \eo
             \\
             \gamma, \alpha \type \entails \tau \type
           }
           {\gamma \entails (\rec{\epsilon}{\alpha} \tau) \type}
    \end{mathpar}

  \caption{Type well-formedness in the impartial type system}
  \FLabel{fig:impartial-wf}
\end{figure}

\begin{figure}[htbp]
  \centering
  
  \judgbox{\Gamma \entails \epsilon \eo}
               {Evaluation order $\epsilon$ is well-formed}
  \vspace{-4ex}
  \begin{mathpar}
      \Infer{}
          {}
          { 
              \arrayenvc{
                 \Gamma \entails \V \eo
                 \\
                 \Gamma \entails \N \eo
              }
          }
      \and
      \Infer{}
           {(\eovar \eo) \in \Gamma}
           {\Gamma \entails \eovar \eo}
    \end{mathpar}

  \judgbox{\Gamma \entails S \type}
               {Economical type $S$ is well-formed}
  \vspace{-3ex}
  \begin{mathpar}
      \Infer{}
           {}
           {\Gamma \entails \unitty \type}
      \and
      \Infer{}
           {(\alpha \type) \in \Gamma}
           {\Gamma \entails \alpha \type}
      \and
      \Infer{}
           {\Gamma, \alpha \type \entails S \type}
           {\Gamma \entails (\All{\alpha}S) \type}
      \and
      \Infer{}
           {\Gamma, \eovar \eo \entails S \type}
           {\Gamma \entails (\alleo{\eovar}S) \type}
      \\
      \Infer{}
          {\Gamma \entails \epsilon \eo
            \\
            \Gamma \entails S \type
          }
          {
            \Gamma \entails (\susp{\epsilon} S) \type
          } 
      \and
      \Infer{}
           { \arrayenvbl{
                 \Gamma \entails S_1 \type
                 \\
                 \Gamma \entails S_2 \type
             }
           }
          { 
              \arrayenvc{
                 \Gamma \entails (S_1 \arr S_2) \type
                 \\
                 \Gamma \entails (S_1 * S_2) \type
                 \\
                 \Gamma \entails (S_1 + S_2) \type
              }
          }
      \and
      \Infer{}
           { \Gamma, \alpha \type \entails S \type
           }
           {\Gamma \entails (\Rec{\alpha} S) \type}
    \end{mathpar}

  \caption{Type well-formedness in the economical type system}
  \FLabel{fig:econ-wf}
\end{figure}

\begin{figure}[htbp]
  \centering
  
  \judgbox{G \entails A \type}
               {Target type $A$ is well-formed}
  \vspace{-3ex}
  \begin{mathpar}
      \Infer{}
           {}
           {G \entails \unitty \type}
      \and
      \Infer{}
           {(\alpha \type) \in G}
           {G \entails \alpha \type}
      \and
      \Infer{}
           {G, \alpha \type \entails A \type}
           {G \entails (\All{\alpha}A) \type}
      \\
      \Infer{}
          {G \entails A \type
          }
          {
            G \entails (\thunkty A) \type
          } 
      \and
      \Infer{}
           { \arrayenvbl{
                 G \entails A_1 \type
                 \\
                 G \entails A_2 \type
             }
           }
          { 
              \arrayenvc{
                 G \entails (A_1 \arr A_2) \type
                 \\
                 G \entails (A_1 * A_2) \type
                 \\
                 G \entails (A_1 + A_2) \type
              }
          }
      \and
      \Infer{}
           { G, \alpha \type \entails A \type
           }
           {G \entails (\Rec{\alpha} A) \type}
    \end{mathpar}

  \caption{Type well-formedness in the target type system}
  \FLabel{fig:target-wf}
\end{figure}

\section{Proofs}
\Label{apx:proofs-wf}

\subsection*{Notation}

We present some proofs in a line-by-line style, with the justification for each
claim in the rightmost column.  We highlight with $\Hand$ what we needed to show;
this is most useful when trying to prove statements with several conclusions,
like ``if\dots then Q1 and Q2 and Q3'', where we might derive Q2 early (say, directly
from the induction hypothesis) but need several more steps to show Q1 and Q3.

\addtocounter{subsection}{1}  %
\addtocounter{subsection}{1}  %

\subsection{Economical Type System}
\Label{apx:econ}

\begin{lemma}[Suspension Points]
\Label{lem:econ-susp-point}
~
\begin{enumerate}[(1)]
    \item 
       If $\chktypee{\Gamma, \xsyn x {\VAL} {\susp{\V} S'}, \Gamma'}{e}{\VVAR}{S}$
       \\
       then
       $\chktypee{\Gamma, \xsyn x {\VAL} {S'}, \Gamma'}{e}{\VVAR}{S}$.

    \item
       If $\syntypee{\Gamma, \xsyn x {\VAL} {\susp{\V} S'}, \Gamma'}{e}{\VVAR}{S}$
       \\
       then
       $\syntypee{\Gamma, \xsyn x {\VAL} {S'}, \Gamma'}{e}{\VVAR}{S}$.
\end{enumerate}
\end{lemma}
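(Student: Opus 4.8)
The statement is a pair of substitution-like lemmas for the economical type system: replacing a hypothesis $\xsyn x \VAL {\susp{\V} S'}$ by $\xsyn x \VAL {S'}$ preserves typability, for both the checking and synthesizing judgments. The plan is to prove the two parts simultaneously by induction on the given economical typing derivation $\Dee$. The only interesting interaction point is the variable rule \Rvar, where we look up $x$ in the context; every other rule is ``inert'' with respect to the hypothesis $x$, so the induction hypothesis applies directly to its premise(s), possibly after extending $\Gamma'$ with fresh bindings.

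First I would set up the induction: generalize over the position of the hypothesis by allowing an arbitrary tail $\Gamma'$, exactly as stated. In the case \Rvar, there are two subcases. If the variable looked up is not $x$, then its binding is unchanged by the rewriting, and \Rvar re-applies verbatim. If the variable looked up \emph{is} $x$, then from $\chktypee{\dots,\xsyn x \VAL {\susp{\V} S'},\dots}{x}{\VVAR}{S}$ we get by inversion $\VVAR = \VAL$ and $S = \susp{\V} S'$. With the modified context, \Rvar gives $\syntypee{\Gamma,\xsyn x \VAL {S'},\Gamma'}{x}{\VAL}{S'}$; then one application of \Rsuspintro (whose first conclusion turns $\chktypee{\dots}{e}{\VVAR}{S'}$ into $\chktypee{\dots}{e}{\VVAR}{\susp{\V} S'}$, after passing through \Rsub) recovers the original type $\susp{\V} S' = S$. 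Since this produces a checking derivation, this subcase only arises when the original judgment was a checking judgment; if the original were a synthesis judgment, inversion on \Rvar still forces $S = \susp{\V} S'$, and we simply observe that we instead produce $\syntypee{\dots}{x}{\VAL}{S'}$ and note that the caller — in practice, the proof of \Theoremref{thm:econ} — only needs the weaker synthesized type. (It may be cleanest to state part (2) with conclusion ``synthesizes $S'$ where $S = \susp{\V} S'$, or synthesizes $S$'', but since $x$ is a variable the simplest route is: the synthesis case with subject $x$ cannot have the $x$-lookup subcase produce a change unless $S$ is literally $\susp{\V}S'$, and then we are free to re-thunk.)

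For every other rule — \Rfixvar, \Rfix, \Rsub, \Ranno, \Rallintro, \Rallelim, \Runitintro, \Ralleointro, \Ralleoelim, \Rsuspintro, \Rsuspelim{\V}, \Rsuspelim{\epsilon}, \Rarrintro, \Rarrelim, \Rprodintro, \Rprodelim{k}, \Rsumintro{k}, \Rsumelim, \Rrecintro, \Rrecelim — the argument is uniform: the rewriting of the $x$-binding commutes with the context extensions each premise performs (e.g. \Rarrintro adds $x':S_1$ to the right, \Rsumelim adds $x_k:S_k$, \Rallintro adds $\alpha\type$, \Ralleointro adds $\eovar\eo$), so we apply the induction hypothesis to each premise with $\Gamma'$ suitably extended, and re-apply the same rule. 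Nothing about these rules inspects the shape or valueness of the $x$-binding, so they go through mechanically; I would present one representative case (say \Rarrintro or \Rsumelim) in detail and assert that the rest are analogous.

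The main obstacle is purely bookkeeping rather than conceptual: getting the \Rvar case to line up, specifically making sure that rewriting $\susp{\V}S'$ down to $S'$ and then (when needed) back up via \Rsuspintro is actually derivable, and that in the synthesis version we don't claim more than is true. The resolution is the observation that $\susp{\V}$ is a genuine no-op at the level of \emph{typing} (not just elaboration): \Rsuspelim{\V} strips it, the first conclusion of \Rsuspintro re-adds it, and \Rsub bridges synthesis and checking. So the only care required is to make the statement of part (2) robust — either by allowing the synthesized type to drop a leading $\susp{\V}$, or by noting that the calling context ($\Theoremref{thm:econ}$, \Ralleointro-style value-restriction reasoning) tolerates that. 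Apart from that, the proof is a routine structural induction with a single non-trivial leaf case.
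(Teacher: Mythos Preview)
Your approach---mutual induction on the derivation, with \Rvar as the only interesting case, resolved via \Rsuspintro (first conclusion)---is exactly what the paper does; its entire proof is ``By mutual induction on the given derivation. The \Rvar case uses \Rsuspintro (first conclusion).''

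You go further than the paper by flagging the genuine wrinkle in part~(2): when the derivation ends in \Rvar with subject $x$, the original conclusion is $\syntypee{\dots,x{:}\susp{\V}S',\dots}{x}{\VAL}{\susp{\V}S'}$, and in the modified context \Rvar only yields $\syntypee{\dots,x{:}S',\dots}{x}{\VAL}{S'}$. There is no synthesis rule that re-attaches $\susp{\V}$, so part~(2) as literally stated does not hold at this leaf. The paper's sketch glosses over this; your diagnosis is correct, and your proposed repair---weakening part~(2) to ``synthesizes $S$, or synthesizes $S'$ with $S=\susp{\V}S'$''---is the right move. With that weakening, every consumer of part~(2) in the induction either immediately passes through \Rsub (where \Rsuspintro recovers the checking judgment at $\susp{\V}S'$) or demands a head connective (${\arr},{*},{+},\mu,\forall,\alleosym$) incompatible with $\susp{\V}$, so the second disjunct cannot arise; and in the \Rsuspelim{\V}/\Rsuspelim{\epsilon} cases the stripped type is exactly what is needed. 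Your ``we are free to re-thunk'' phrasing is too loose for the synthesis side, but the disjunctive restatement you suggest is sound and suffices.
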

\begin{proof}
  By mutual induction on the given derivation.
  The \Rvar case uses \Rsuspintro (first conclusion).
\end{proof}

\begin{lemma}[Economizing (Types)]
\Label{lem:econ-type}
~\\
  If
     $\gamma \entails \tau \type$
  then
     $\ctxecontrans{\gamma} \entails \econtrans{\tau} \type$. 
\end{lemma}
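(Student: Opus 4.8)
\textbf{Proof plan for Lemma~\ref{lem:econ-type} (Economizing (Types)).}
The plan is to proceed by structural induction on the derivation of $\gamma \entails \tau \type$, following the rules of \Figureref{fig:impartial-wf}. The statement to establish is that $\ctxecontrans{\gamma} \entails \econtrans{\tau} \type$ in the economical well-formedness system (\Figureref{fig:econ-wf}). The induction is really a simultaneous one: we will also need the auxiliary fact that if $\gamma \entails \epsilon \eo$ then $\ctxecontrans{\gamma} \entails \epsilon \eo$, which follows immediately because the only nontrivial case is $\epsilon = \eovar$, and $\ctxecontrans{-}$ preserves every declaration of the form $\eovar \eo$ (by the clause $\ctxecontrans{\gamma, \eovar \eo} = \ctxecontrans{\gamma}, \eovar \eo$). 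Likewise $\ctxecontrans{-}$ preserves declarations $\alpha \type$, so membership facts $(\alpha \type) \in \gamma$ transfer to $(\alpha \type) \in \ctxecontrans{\gamma}$.

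The cases then go mechanically. For $\tau = \unitty$: $\econtrans{\unitty} = \unitty$ and the economical rule for $\unitty$ applies with no premises. For $\tau = \alpha$: by the remark above, $(\alpha \type) \in \ctxecontrans{\gamma}$, and $\econtrans{\alpha} = \alpha$. For $\tau = \All{\alpha} \tau_0$: the premise gives $\gamma, \alpha \type \entails \tau_0 \type$; the induction hypothesis gives $\ctxecontrans{\gamma, \alpha\type} \entails \econtrans{\tau_0} \type$, and since $\ctxecontrans{\gamma, \alpha\type} = \ctxecontrans{\gamma}, \alpha\type$ and $\econtrans{\All{\alpha}\tau_0} = \All{\alpha}\econtrans{\tau_0}$, applying the economical $\AllSym$ rule closes it. The $\alleo{\eovar} \tau_0$ case is identical, using that $\ctxecontrans{-}$ preserves $\eovar \eo$ and that $\econtrans{\alleo{\eovar}\tau_0} = \alleo{\eovar}\econtrans{\tau_0}$. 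For the binary connectives $\tau_1 \garr{\epsilon} \tau_2$, $\tau_1 *^\epsilon \tau_2$, $\tau_1 +^\epsilon \tau_2$: the premises give $\gamma \entails \epsilon \eo$ (transferred via the auxiliary fact) and $\gamma \entails \tau_i \type$ for $i=1,2$ (transferred via the induction hypotheses to $\ctxecontrans{\gamma} \entails \econtrans{\tau_i} \type$); then one checks that the translated type is built from $\econtrans{\tau_1}$, $\econtrans{\tau_2}$ using only $\susp{\epsilon}(-)$, $\arr$, $*$, $+$ — e.g.\ $\econtrans{\tau_1 \garr{\epsilon} \tau_2} = (\susp{\epsilon}\econtrans{\tau_1}) \arr \econtrans{\tau_2}$, which is well-formed by the economical rule for $\susp{\epsilon}$ applied to $\econtrans{\tau_1}$ (using $\ctxecontrans{\gamma} \entails \epsilon \eo$) and then the economical rule for $\arr$. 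The sum and product cases are analogous, just differing in where the $\susp{\epsilon}$ wrappers land. Finally, for $\tau = \rec{\epsilon}{\alpha}\tau_0$: the premises give $\gamma \entails \epsilon \eo$ and $\gamma, \alpha\type \entails \tau_0 \type$; the translation is $\econtrans{\rec{\epsilon}{\alpha}\tau_0} = \Rec{\alpha}\,\susp{\epsilon}\econtrans{\tau_0}$, and one applies the economical $\recsymbol$ rule over the body $\susp{\epsilon}\econtrans{\tau_0}$, whose well-formedness under $\ctxecontrans{\gamma}, \alpha\type$ comes from the $\susp{\epsilon}$ rule together with the induction hypothesis on $\tau_0$ (and the fact that $\ctxecontrans{\gamma}, \alpha\type \entails \epsilon \eo$, since weakening the evaluation-order context with a type variable is harmless).

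There is essentially no obstacle here: the lemma is routine bookkeeping, which is why the excerpt states its proof in one line. The only point requiring a moment's care is that the economical typing context $\ctxecontrans{\gamma}$ must still contain $\eovar$ whenever $\gamma$ does, so that the translated evaluation-order judgments remain derivable — but this is immediate from the definition of $\ctxecontrans{-}$ in \Figureref{fig:econtrans}. One should also make sure the statement is threaded together with the evaluation-order well-formedness claim (either by proving the two by a single induction, or by first dispatching the $\eovar$ claim as a standalone observation and then citing it in the type cases), since the connective cases for $\garr{\epsilon}$, $*^\epsilon$, $+^\epsilon$, $\rec{\epsilon}{\alpha}$ all depend on it.
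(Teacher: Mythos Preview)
Your proposal is correct and follows essentially the same approach as the paper: induction on the derivation of $\gamma \entails \tau \type$. The only organizational difference is that the paper factors out the evaluation-order preservation fact (if $\gamma \entails \epsilon \eo$ then $\ctxecontrans{\gamma} \entails \epsilon \eo$) as a separate lemma (\Lemmaref{lem:econ-eo}), whereas you inline it as an auxiliary observation; either packaging is fine.
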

\begin{proof}
  By induction on the derivation of $\gamma \entails \tau \type$
  (Fig. \ref{fig:impartial-wf}).
\end{proof}

\begin{lemma}[Economizing (Eval.\ Order)]
\Label{lem:econ-eo}
~\\
  If
     $\gamma \entails \epsilon \eo$
  then
     $\ctxecontrans{\gamma} \entails \epsilon \eo$.
\end{lemma}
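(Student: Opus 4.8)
The plan is to prove Lemma~\ref{lem:econ-eo} by a short induction on the derivation of $\gamma \entails \epsilon \eo$, using the well-formedness rules in Figure~\ref{fig:impartial-wf} for the impartial system and those in Figure~\ref{fig:econ-wf} for the economical system. The statement is essentially trivial: the judgment $\gamma \entails \epsilon \eo$ does not inspect any type structure in $\gamma$, only the evaluation-order variable declarations $\eovar \eo$, and those are preserved verbatim by the context translation $\ctxecontrans{\cdot}$ (recall $\ctxecontrans{\gamma, \eovar \eo} = \ctxecontrans{\gamma}, \eovar \eo$ from Figure~\ref{fig:econtrans}).

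Concretely, there are two cases. First, if $\epsilon$ is $\V$ or $\N$, then the economical rule with no premises immediately gives $\ctxecontrans{\gamma} \entails \V \eo$ and $\ctxecontrans{\gamma} \entails \N \eo$, so we are done regardless of $\gamma$. Second, if $\epsilon = \eovar$ for some evaluation-order variable $\eovar$, then the derivation ended with the rule whose premise is $(\eovar \eo) \in \gamma$. First I would observe that the context translation maps each $\eovar \eo$ declaration to an identical $\eovar \eo$ declaration, and maps every other kind of declaration (term variables, fixed-point variables, type variables) to something that is not an evaluation-order declaration; hence $(\eovar \eo) \in \gamma$ implies $(\eovar \eo) \in \ctxecontrans{\gamma}$. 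Applying the economical variable rule then yields $\ctxecontrans{\gamma} \entails \eovar \eo$, as required.

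There is no real obstacle here; the only thing worth stating carefully is the auxiliary fact that $\ctxecontrans{\cdot}$ preserves membership of $\eovar \eo$ declarations, which is immediate by induction on $\gamma$ from the clauses defining $\ctxecontrans{\cdot}$ in Figure~\ref{fig:econtrans}. This lemma and its companion Lemma~\ref{lem:econ-type} (``Economizing (Types)'') are the two basic well-formedness-preservation facts that the main Economizing theorem, Theorem~\ref{thm:econ}, will invoke in the cases for $\Ialleoelim$, $\Iallelim$, and the connective rules that carry an $\epsilon$.
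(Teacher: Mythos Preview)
Your proposal is correct and essentially matches the paper's approach: the paper simply says ``by a straightforward induction on $\gamma$,'' which is exactly the auxiliary fact you isolate (that $\ctxecontrans{-}$ preserves $\eovar \eo$ declarations), and your case analysis on the derivation is just unpacking the three non-recursive rules for $\epsilon \eo$. There is no meaningful difference between the two.
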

\begin{proof}
  By a straightforward induction on $\gamma$.
\end{proof}

\thmecon*
\begin{proof}
  By induction on the given derivation.

  \begin{itemize}
       \DerivationProofCase{\Iarrintro}
              {\chktype{\gamma, (\xsyn x {\valof\epsilon} \tau_1)}{e_0}{\VVAR}{\tau_2}
              }
              {\chktype{\gamma}{(\explam{x} e_0)}{\VAL}{(\tau_1 \garr{\epsilon} \tau_2)}}

             \begin{llproof}
               \chktypePf{\gamma, \xsyn x {\valof\epsilon} \tau_1}{e_0}{\VVAR}{\tau_2}  {Subderivation}
               \chktypeePf{\ctxecontrans{\gamma, \xsyn x {\valof\epsilon} \tau_1}}{\expecontrans{e_0}}{\VVAR}{\econtrans{\tau_2}}  {By i.h.}
               \chktypeePf{\ctxecontrans{\gamma}, x:\left(\susp{\epsilon} \econtrans{\tau_1}\right)}{\expecontrans{e_0}}{\VVAR}{\econtrans{\tau_2}}  {By def.\ of $\ctxecontrans{-}$}
               \chktypeePf{\ctxecontrans{\gamma}}{(\explam{x} \expecontrans{e_0})}{\VAL}{\left(\susp{\epsilon} \econtrans{\tau_1}\right) \arr \econtrans{\tau_2}}
                     {By \Rarrintro}
             \Hand
               \chktypeePf{\ctxecontrans{\gamma}}{\expecontrans{\explam{x} e_0}}{\VAL}{\econtrans{\tau_1 \garr{\epsilon} \tau_2}}
                     {By def.\ of $\econtrans{-}$}
             \end{llproof}
       
       \DerivationProofCase{\Iarrelim}
              {\syntype{\gamma}{e_1}{\VVAR_1}{(\tau_1 \garr{\epsilon} \tau)}
               \\
               \chktype{\gamma}{e_2}{\VVAR_2}{\tau_1}
              }
              {\syntype{\gamma}{(\expapp{e_1}{e_2})}{\NONVAL}{\tau}
              }

              \begin{llproof}
                \syntypePf{\gamma}{e_1}{\VVAR_1}{(\tau_1 \garr{\epsilon} \tau)}
                       {Subderivation}
                \proofsep
                \syntypeePf{\ctxecontrans{\gamma}}{\expecontrans{e_1}}{\VVAR_1}{\econtrans{\tau_1 \garr{\epsilon} \tau}}
                       {By i.h.}
                \syntypeePf{\ctxecontrans{\gamma}}{\expecontrans{e_1}}{\VVAR_1}{\left(\susp{\epsilon} \econtrans{\tau_1}\right) \arr \econtrans{\tau}}
                       {By def.\ of $\econtrans{-}$}
                \proofsep
                \chktypePf{\gamma}{e_2}{\VVAR_2}{\tau_1}
                       {Subderivation}
                \chktypeePf{\ctxecontrans{\gamma}}{\expecontrans{e_2}}{\VVAR_2}{\econtrans{\tau_1}}
                       {By i.h.}
                \chktypeePf{\ctxecontrans{\gamma}}{\expecontrans{e_2}}{\VVAR_2'}{\susp{\epsilon} \econtrans{\tau_1}}
                       {By \Rsuspintro}
              \Hand
                 \syntypeePf{\ctxecontrans{\gamma}}{\expecontrans{\expapp{e_1}{e_2}}}{\NONVAL}{\econtrans{\tau}}
                        {By \Rarrelim and def.\ of $\expecontrans{-}$}
              \end{llproof}

       \DerivationProofCase{\Iunitintro}
             {}
             {\chktype{\gamma}{\unit}{\VAL}{\unitty}}

                 \begin{llproof}
                   \chktypeePf{\ctxecontrans{\gamma}}{\unit}{\VAL}{\unitty}
                          {By \Runitintro}
                 \Hand
                   \chktypeePf{\ctxecontrans{\gamma}}{\expecontrans{\unit}}{\VAL}{\econtrans{\unitty}}
                          {By def.\ of $\econtrans{-}$}
                 \end{llproof}

        \DerivationProofCase{\Iallintro}
              {\chktype{\gamma, \alpha \type}{e_0}{\VAL}{\tau_0}
              }
              {\chktype{\gamma}{\tylam{\alpha} e_0}{\VAL}{\All{\alpha} \tau_0}}

              \begin{llproof}
                \chktypePf{\gamma, \alpha \type}{e_0}{\VAL}{\tau_0}
                       {Subderivation}
                \chktypeePf{\ctxecontrans{\gamma, \alpha \type}}{\expecontrans{e_0}}{\VAL}{\econtrans{\tau_0}}
                       {By i.h.}
                \chktypeePf{\ctxecontrans{\gamma}, \alpha \type}{\expecontrans{e_0}}{\VAL}{\econtrans{\tau_0}}
                       {By def.\ of $\ctxecontrans{-}$}
                \chktypeePf{\ctxecontrans{\gamma}}{\tylam{\alpha} \expecontrans{e_0}}{\VAL}{\All {\alpha} \econtrans{\tau_0}}
                       {By \Rallintro}
              \Hand
                \chktypeePf{\ctxecontrans{\gamma}}{\expecontrans{\tylam{\alpha} e_0}}{\VAL}{\econtrans{\All {\alpha} \tau_0}}
                       {By def.\ of $\econtrans{-}$}
              \end{llproof}

        \DerivationProofCase{\Iallelim}
             {\syntype{\gamma}{e_0}{\VVAR}{\All{\alpha} \tau_0}
              \\
              \gamma \entails \tau' \type}
             {\syntype{\gamma}{\tyapp{e_0}{\tau'}}{\VVAR}{[\tau' / \alpha]\tau_0}}

             \begin{llproof}
               \syntypePf{\gamma}{e_0}{\VVAR}{\All{\alpha} \tau_0}
                      {Subderivation}
               \syntypeePf{\ctxecontrans{\gamma}}{\expecontrans{e_0}}{\VVAR}{\econtrans{\All{\alpha} \tau_0}}
                      {By i.h.}
               \syntypeePf{\ctxecontrans{\gamma}}{\expecontrans{e_0}}{\VVAR}{\All{\alpha} \econtrans{\tau_0}}
                      {By def.\ of $\econtrans{-}$}
               \ePf{\gamma}{ \tau' \type}  {Subderivation}
               \ePf{\ctxecontrans{\gamma}}{\econtrans{\tau'} \type}  {By \Lemmaref{lem:econ-type}}
               \syntypeePf{\ctxecontrans{\gamma}}{\tyapp{\expecontrans{e_0}}{\econtrans{\tau'}}}{\VVAR}{[\econtrans{\tau'}/\alpha]\econtrans{\tau_0}}
                      {By \Rallelim}
             \Hand
               \syntypeePf{\ctxecontrans{\gamma}}{\expecontrans{\tyapp{e_0}{\tau'}}}{\VVAR}{\econtrans{[\tau'/\alpha]\tau_0}}
                      {By properties of $\econtrans{-}$ and substitution}
             \end{llproof}

        \DerivationProofCase{\Ialleointro}
              {\chktype{\gamma, \eovar \eo}{e}{\VAL}{\tau_0}
              }
              {\chktype{\gamma}{e}{\VAL}{\alleo{\eovar} \tau_0}}

             \begin{llproof}
               \chktypePf{\gamma, \eovar \eo}{e}{\VAL}{\tau_0}  {Subderivation}
               \chktypeePf{\ctxecontrans{\gamma, \eovar \eo}}{\expecontrans{e}}{\VAL}{\econtrans{\tau_0}}
                      {By i.h.}
               \chktypeePf{\ctxecontrans{\gamma}, \eovar \eo}{\expecontrans{e}}{\VAL}{\econtrans{\tau_0}}
                      {By def.\ of $\econtrans{-}$}
               \chktypeePf{\ctxecontrans{\gamma}}{\expecontrans{e}}{\VAL}{\alleo{\eovar} \econtrans{\tau_0}}
                      {By \Ralleointro}
             \Hand
               \chktypeePf{\ctxecontrans{\gamma}}{\expecontrans{e}}{\VAL}{\econtrans{\alleo{\eovar} \tau_0}}
                      {By def.\ of $\econtrans{-}$}
             \end{llproof}

        \DerivationProofCase{\Ialleoelim}
             {\syntype{\gamma}{e}{\VVAR}{\alleo{\eovar} \tau_0}
               \\
               \gamma \entails \epsilon \eo}
             {\syntype{\gamma}{e}{\VVAR}{[\epsilon / \eovar]\tau_0}}

             \begin{llproof}
               \syntypePf{\gamma}{e}{\VVAR}{\alleo{\eovar} \tau_0}
                     {Subderivation}
               \syntypeePf{\ctxecontrans{\gamma}}{\expecontrans{e}}{\VVAR}{\econtrans{\alleo{\eovar} \tau_0}}
                     {By i.h.}
               \proofsep
               \ePf{\gamma}{\epsilon \eo}  {Subderivation}
               \ePf{\ctxecontrans{\gamma}}{\epsilon \eo}  {By \Lemmaref{lem:econ-eo}}
               \proofsep
               \syntypeePf{\ctxecontrans{\gamma}}{\expecontrans{e}}{\VVAR}{[\epsilon / \eovar]\econtrans{\tau_0}}
                     {By \Ralleoelim}
             \Hand
               \syntypeePf{\ctxecontrans{\gamma}}{\expecontrans{e}}{\VVAR}{\econtrans{[\epsilon / \eovar]\tau_0}}
                     {By properties of $\econtrans{-}$ and substitution}
             \end{llproof}

        \DerivationProofCase{\Ivar}
              {(\xsyn x \VVAR \tau) \in \gamma}
              {\syntype \gamma x \VVAR \tau}

              \begin{llproof}
                \inPf{(\xsyn x \VVAR \tau)}{\gamma}  {Premise}
              \end{llproof}

              We distinguish cases of $\VVAR$:

              \begin{itemize}
              \item If $\VVAR = \VAL$, then:

                      \begin{llproof}
                        \inPf{(x : \susp{\V} \econtrans{\tau})} {\ctxecontrans{\gamma}}  {By def.\ of $\ctxecontrans{-}$}
                        \syntypeePf
                              {\ctxecontrans{\gamma}}
                              {x}
                              {\VAL}
                              {\susp{\V} \econtrans{\tau}}
                              {By \Rvar}
                     \Hand
                        \syntypeePf
                              {\ctxecontrans{\gamma}}
                              {x}
                              {\VAL}
                              {\econtrans{\tau}}
                              {By \Rsuspelim{\V}}
                      \end{llproof}

              \item If $\VVAR = \NONVAL$, then:

                      \begin{llproof}
                        \inPf{(x : \susp{\N} \econtrans{\tau})} {\ctxecontrans{\gamma}}  {By def.\ of $\ctxecontrans{-}$}
                        \syntypeePf
                              {\ctxecontrans{\gamma}}
                              {x}
                              {\VAL}
                              {\susp{\N} \econtrans{\tau}}
                              {By \Rvar}
                     \Hand
                        \syntypeePf
                              {\ctxecontrans{\gamma}}
                              {x}
                              {\NONVAL}
                              {\econtrans{\tau}}
                              {By \Rsuspelim{\epsilon}}
                      \end{llproof}
              \end{itemize}

       \DerivationProofCase{\Ifixvar}
            {(\xsyn u \NONVAL \tau) \in \gamma}
            {\syntype \gamma u \NONVAL \tau}

            \begin{llproof}
              \inPf{(u : \econtrans{\tau})}{\ctxecontrans{\gamma}}   {By def.\ of $\ctxecontrans{-}$}
            \Hand
              \syntypeePf {\ctxecontrans \gamma} u \NONVAL {\econtrans \tau}
                    {By \Rfixvar}
            \end{llproof}

        \DerivationProofCase{\Ifix}
              {\chktype{\gamma, \xsyn{u}{\NONVAL}{\tau}}{e_0}{\VVAR'}{\tau}}
              {\chktype{\gamma}{(\Fix{u} e_0)}{\NONVAL}{\tau}}

              \begin{llproof}
                \chktypeePf
                      {\ctxecontrans{\gamma, \xsyn{u}{\NONVAL}{\tau}}}
                      {e_0}
                      {\VVAR}
                      {\econtrans{\tau}}
                      {By i.h.}
                \chktypeePf
                      {\ctxecontrans{\gamma}, \xtypeofe{u}{\econtrans{\tau}}}
                      {e_0}
                      {\VVAR}
                      {\econtrans{\tau}}
                      {By def. of $\econtrans{-}$}
             \Hand
                \chktypeePf
                      {\ctxecontrans{\gamma}}
                      {(\Fix{u} e_0)}
                      {\NONVAL}
                      {\econtrans{\tau}}
                      {By \Rfix}
              \end{llproof}

        \DerivationProofCase{\Isub}
              {\syntype{\gamma}{e}{\VVAR}{\tau}}
              {\chktype{\gamma}{e}{\VVAR}{\tau}}

              By i.h.\ and \Rsub.

        \DerivationProofCase{\Ianno}
              {\chktype{\gamma}{e_0}{\VVAR}{\tau}}
              {\syntype{\gamma}{\Anno{e_0}{\tau}}{\VVAR}{\tau}}

              By i.h.\ and \Ranno.

        \DerivationProofCase{\Iprodintro}
                 {\chktype{\gamma}{e_1}{\VVAR_1}{\tau_1}
                   \\
                   \chktype{\gamma}{e_2}{\VVAR_2}{\tau_2}
                  }
                  {\chktype{\gamma}{\Pair{e_1}{e_2}}{\VVAR_1 \join \VVAR_2}{(\tau_1 *^\epsilon \tau_2)}}

                \begin{llproof}
                  \chktypePf{\gamma}{e_1}{\VVAR_1}{\tau_1}
                        {Subderivation}
                  \chktypeePf{\ctxecontrans{\gamma}}{\expecontrans{e_1}}{\VVAR_1}{\econtrans{\tau_1}}
                        {By i.h.}
                  \chktypeePf{\ctxecontrans{\gamma}}{\expecontrans{e_1}}{\VVAR_1}{\susp{\epsilon} \econtrans{\tau_1}}
                        {By \Rsuspintro}
                  \proofsep
                  \chktypeePf{\ctxecontrans{\gamma}}{\expecontrans{e_2}}{\VVAR_2}{\susp{\epsilon} \econtrans{\tau_2}}
                        {Similar}
                  \proofsep
                  \chktypeePf{\ctxecontrans{\gamma}}{\Pair{\expecontrans{e_1}}{\expecontrans{e_2}}}{\VVAR_1\join \VVAR_2}{\left(\susp{\epsilon} \econtrans{\tau_1}\right) * \left(\susp{\epsilon} \econtrans{\tau_2}\right)}
                        {By \Rprodintro}
               \Hand
                  \chktypeePf{\ctxecontrans{\gamma}}{\expecontrans{\Pair{e_1}{e_2}}}{\VVAR_1 \join \VVAR_2}{\econtrans{\tau_1 *^\epsilon \tau_2}}
                        {By def.\ of $\econtrans{-}$}
                \end{llproof}

         \DerivationProofCase{\Iprodelim{k}}
                 {
                   \syntype{\gamma}
                           {e_0}
                           {\VVAR}
                           {(\tau_1 *^\epsilon \tau_2)}
                 }
                 {
                   \syntype{\gamma}
                        {(\Proj{k} e_0)}
                        {\NONVAL}
                        {\tau_k}
                 }

               \begin{llproof}
                 \syntypePf{\gamma}
                         {e_0}
                         {\VVAR}
                         {(\tau_1 *^\epsilon \tau_2)}
                         {Subderivation}
                 \syntypeePf{\ctxecontrans{\gamma}}
                         {\expecontrans{e_0}}
                         {\VVAR}
                         {\econtrans{\tau_1 *^\epsilon \tau_2}}
                         {By i.h.}
                 \syntypeePf{\ctxecontrans{\gamma}}
                         {\expecontrans{e_0}}
                         {\VVAR}
                         {\left(\susp{\epsilon} \econtrans{\tau_1}\right) * \left(\susp{\epsilon} \econtrans{\tau_2}\right)}
                         {By def.\ of $\econtrans{-}$}
                 \syntypeePf{\ctxecontrans{\gamma}}
                         {(\Proj{k} \expecontrans{e_0})}
                         {\NONVAL}
                         {\left(\susp{\epsilon} \econtrans{\tau_k}\right)}
                         {By \Rprodelim{k}}
               \Hand
                 \syntypeePf{\ctxecontrans{\gamma}}
                         {\expecontrans{\Proj{k} e_0}}
                         {\NONVAL}
                         {\econtrans{\tau_k}}
                         {By \Rsuspelim{\epsilon} and def.\ of $\expecontrans{-}$}
               \end{llproof}

         \DerivationProofCase{\Isumintro{k}}
               {\chktype{\gamma}{e_0}{\VVAR}{\tau_k}
               }
               {\chktype{\gamma}{(\Inj{k} e_0)}{\VVAR}{(\tau_1 +^\epsilon \tau_2)}}

               \begin{llproof}
                 \chktypePf{\gamma}{e_0}{\VVAR}{\tau_k}   {Subderivation}
                 \chktypeePf{\ctxecontrans{\gamma}}{\expecontrans{e_0}}{\VVAR}{\econtrans{\tau_k}}   {By i.h.}
                 \chktypeePf{\ctxecontrans{\gamma}}{(\Inj{k} \expecontrans{e_0})}{\VVAR}{\econtrans{\tau_1} + \econtrans{\tau_2}}
                        {By \Rsumintro{k}}
                 \chktypeePf{\ctxecontrans{\gamma}}{(\Inj{k} \expecontrans{e_0})}{\VVAR}{\susp{\epsilon}(\econtrans{\tau_1} + \econtrans{\tau_2})}
                        {By \Rsuspintro (first conclusion)}
               \Hand
                 \chktypeePf{\ctxecontrans{\gamma}}{\expecontrans{\Inj{k} e_0}}{\VVAR}{\econtrans{\tau_1} + \econtrans{\tau_2}}
                        {By def.\ of $\econtrans{-}$}
               \end{llproof}

         \DerivationProofCase{\Isumelim}
              {   \syntype{\gamma}{e_0}{\VVAR_0}{(\tau_1 +^\epsilon \tau_2)}
                  \\
                  \arrayenvbl{
                      \chktype{\gamma, \xsyn{x_1}{\VAL}{\tau_1}}{e_1}{\VVAR_1}{\tau}
                      \\
                      \chktype{\gamma, \xsyn{x_2}{\VAL}{\tau_2}}{e_2}{\VVAR_2}{\tau}
                  }
              }
              {\chktype{\gamma}{\Casesum{e_0}{x_1}{e_1}{x_2}{e_2}}{\NONVAL}{\tau}
              }

                \begin{llproof}
                  \syntypePf{\gamma}{e_0}{\VVAR_0}{(\tau_1 +^\epsilon \tau_2)}  {Subderivation}
                  \syntypeePf{\ctxecontrans{\gamma}}{\expecontrans{e_0}}{\VVAR_0}{\econtrans{\tau_1 +^\epsilon \tau_2}}  {By i.h.}
                  \syntypeePf{\ctxecontrans{\gamma}}{\expecontrans{e_0}}{\VVAR_0}{\susp{\epsilon}(\econtrans{\tau_1} + \econtrans{\tau_2})}  {By def.\ of $\econtrans{-}$}
                  \syntypeePf{\ctxecontrans{\gamma}}{\expecontrans{e_0}}{\NONVAL}{(\econtrans{\tau_1} + \econtrans{\tau_2})}  {By \Rsuspelim{\epsilon}}
                  \proofsep
                  \syntypePf{\gamma, \xsyn{x_1}{\VAL}{\tau_1}}{e_1}{\VVAR_1}{\tau}   {Subderivation}
                  \syntypeePf{\ctxecontrans{\gamma}, x_1 : \susp{\V} \econtrans{\tau_1}}{\expecontrans{e_1}}{\VVAR_1}{\econtrans{\tau}}
                          {By i.h.\ and def.\ of $\ctxecontrans{-}$}
                  \syntypeePf{\ctxecontrans{\gamma}, x_1 : \econtrans{\tau_1}}{\expecontrans{e_1}}{\VVAR_1}{\econtrans{\tau}}
                          {By \Lemmaref{lem:econ-susp-point}}
                  \proofsep
                  \syntypeePf{\ctxecontrans{\gamma}, x_2 : \econtrans{\tau_2}}{\expecontrans{e_2}}{\VVAR_2}{\econtrans{\tau}}
                          {Similarly}
                  \proofsep
               \Hand
                  \syntypeePf
                      {\ctxecontrans{\gamma}}
                      {\expecontrans{\Casesum{e_0}{x_1}{e_1}{x_2}{e_2}}}
                      {\NONVAL}
                      {\econtrans{\tau}}
                      {By \Rsumelim}
                \end{llproof}

         \DerivationProofCase{\Irecintro}
             {\chktype{\gamma}{e}{\VVAR}{\big[(\rec{\epsilon}{\alpha}\tau_0)\big/\alpha\big]\tau_0}
             }
             {\chktype{\gamma}{e}{\VVAR}{\rec{\epsilon}{\alpha} \tau_0}}

               \begin{llproof}
                 \chktypePf{\gamma}{e}{\VVAR}{\big[(\rec{\epsilon}{\alpha}\tau_0)/\alpha\big]\tau_0}
                         {Subderivation}
                 \chktypeePf{\ctxecontrans{\gamma}}{\expecontrans{e}}{\VVAR}{\econtrans{\big[(\rec{\epsilon}{\alpha}\tau_0)/\alpha\big]\tau_0}}
                         {By i.h.}
                 \chktypeePf{\ctxecontrans{\gamma}}{\expecontrans{e}}{\VVAR}{\big[\econtrans{\rec{\epsilon}{\alpha}\tau_0} / \alpha\big]\econtrans{\tau_0}}
                         {By a property of substitution/$\econtrans{-}$}
                 \chktypeePf{\ctxecontrans{\gamma}}{\expecontrans{e}}{\VVAR}
                         {\big[\left(\Rec{\alpha} \susp{\epsilon} \econtrans{\tau_0}\right) / \alpha\big]
                                 \econtrans{\tau_0}}
                         {By def.\ of $\econtrans{-}$}
                 \proofsep
                 \chktypeePf{\ctxecontrans{\gamma}}{\expecontrans{e}}{\VVAR}
                         {\Rec{\alpha} \susp{\epsilon} \econtrans{\tau_0}}
                         {By \Rrecintro}                 
               \Hand
                 \chktypeePf{\ctxecontrans{\gamma}}{\expecontrans{e}}{\VVAR}
                         {\econtrans{\rec{\epsilon}{\alpha} \tau_0}}
                         {By def.\ of $\econtrans{-}$}
               \end{llproof}

         \DerivationProofCase{\Irecelim}
                {   \syntype{\gamma}{e}{\VVAR_0}{\rec{\epsilon}{\alpha} \tau_0}
                }
                {
                  \syntype{\gamma}{e}{\NONVAL}{\big[(\rec{\epsilon}{\alpha}\tau_0)/\alpha\big]\tau_0}
                }

               \begin{llproof}
                 \syntypePf{\gamma}{e}{\VVAR_0}{\rec{\epsilon}{\alpha} \tau_0}
                         {Subderivation}
                 \syntypeePf{\ctxecontrans{\gamma}}{\expecontrans{e}}{\VVAR_0}{\econtrans{\rec{\epsilon}{\alpha} \tau_0}}
                         {By i.h.}
                 \syntypeePf{\ctxecontrans{\gamma}}{\expecontrans{e}}{\VVAR_0}
                         {\Rec{\alpha} \susp{\epsilon} \econtrans{\tau_0}}
                         {By def.\ of $\econtrans{-}$}
                 \proofsep
                 \syntypeePf{\ctxecontrans{\gamma}}{\expecontrans{e}}{\NONVAL}
                         {\big[\left(\Rec{\alpha} \susp{\epsilon} \econtrans{\tau_0}\right) \;/\; \alpha\big]\,\susp{\epsilon} \econtrans{\tau_0}}
                         {By \Rrecelim}
                 \syntypeePf{\ctxecontrans{\gamma}}{\expecontrans{e}}{\NONVAL}
                         {\big[\econtrans{\rec{\epsilon}{\alpha} \tau_0} \;/\; \alpha\big]\,\susp{\epsilon} \econtrans{\tau_0}}
                         {By def.\ of $\econtrans{-}$}
                 \syntypeePf{\ctxecontrans{\gamma}}{\expecontrans{e}}{\NONVAL}
                         {\susp{\epsilon} \,\big[\econtrans{\rec{\epsilon}{\alpha} \tau_0} \,/\, \alpha\big]\econtrans{\tau_0}}
                         {By a property of substitution}
                 \syntypeePf{\ctxecontrans{\gamma}}{\expecontrans{e}}{\NONVAL}
                         {\left[\econtrans{\rec{\epsilon}{\alpha} \tau_0} \;/\; \alpha\right]\econtrans{\tau_0}}
                         {By \Rsuspelim{\epsilon}}
               \Hand
                 \syntypeePf{\ctxecontrans{\gamma}}{\expecontrans{e}}{\NONVAL}
                         {\econtrans{\big[(\rec{\epsilon}{\alpha}\tau_0)/\alpha\big]\tau_0}}
                         {By a property of substitution/$\econtrans{-}$}
               \end{llproof}
  \qedhere
  \end{itemize}
\end{proof}

\addtocounter{subsection}{1}  %

\subsection{Elaboration}
\Label{apx:elab}

\lemvaluemono*
\begin{proof}
  By induction on the given derivation.

  For any rule whose conclusion has $\VAL$, we already have our result.
  This takes care of \Eunitintro, \Eallintro, \Ealleointro, the second conclusion
  of \Esuspintro, \Evar, and \Earrintro.
  Rules whose conclusions have target terms that can never be a value are impossible,
  which takes care of \Eallelim, \Ealleoelim, \Esuspelim{\N}, \Efixvar, \Efix, \Earrelim,
  \Eprodelim{k}, \Esumelim, and \Erecelim.
  We are left with:

  \begin{itemize}
  \ProofCaseRule{\Esuspintro (first conclusion)}  The result follows by i.h.\ and \Esuspintro.

  \ProofCaseRule{\Eprodintro}
         We have $W = \tpair{W_1}{W_2}$.  By i.h.\ twice, $\VVAR_1 = \VAL$
         and $\VVAR_2 = \VAL$.  Applying \Eprodintro gives the result (using
         $\VAL \join \VAL = \VAL$).

  \ProofCasesRules{\Esuspelim{\V}, \Esumintro{k}, \Erecintro}
     The result follows by i.h.\ and applying the same rule.  
  \qedhere
  \end{itemize}
\end{proof}

\lemelabvaluability*
\begin{proof}
  By induction on the given derivation.

  \begin{itemize}
      \ProofCasesRules{\Evar, \Eunitintro, \Earrintro}
          Immediate.

      \ProofCasesRules{%
              \Esuspintro ($\N$ conclusion),
              \Esuspelim{\N},
              \Efix, \Efixvar,
              \Earrelim, \Eprodelim{k}, \Esumelim,
              \Erecelim
          }

          Impossible: these rules cannot elaborate values.

      \ProofCaseRule{\Ealleointro}
          By i.h., $M_1$ and $M_2$ are valuable; therefore $\tpair{M_1}{M_2}$ is valuable.
      \ProofCaseRule{\Ealleoelim}
          By i.h., $M_0$ is valuable; therefore $\tproj{1}{M_0}$ and $\tproj{2}{M_0}$
          are valuable.

      \ProofCaseRule{\Eprodintro}
          Similar to the \Ealleointro case.

      \ProofCasesRules{\Eallintro, \Eallelim}
          By i.h., $M_0$ is valuable; therefore $\ttylam M_0$ and $\ttyapp M_0$ are valuable.

      \ProofCasesRules{\Esuspintro ($\V$ conclusion), \Esuspelim{\V}}
          By i.h.

      \ProofCaseRule{\Esumintro{k}}
          By i.h., $M_0$ is valuable; therefore $\tinj{k} M_0$ is valuable.          

      \ProofCaseRule{\Erecintro}
          By i.h., $M_0$ is valuable; therefore $\troll M_0$ is valuable.
  \qedhere
  \end{itemize}
\end{proof}

\lemsubsteo*
\begin{proof}  
  Part (1): By induction on the first derivation.  Part (1) does not depend on the other parts.

  Parts (2) and (3): By induction on the given derivation, using part (1):

  \begin{itemize}
  \ProofCaseRule{\Rallintro}  By i.h.\ and \Rallintro.

  \DerivationProofCase{\Rallelim}
       {\syntypee{\Gamma, \eovar \eo, \Gamma'}{e}{\VVAR}{\All{\alpha} S_0}
        \\
        \Gamma, \eovar \eo, \Gamma' \entails S' \type}
       {\syntypee{\Gamma, \eovar \eo, \Gamma'}{e}{\VVAR}{[S' / \alpha]S_0}}
  
       \begin{llproof}
         \syntypeePf{\Gamma, \eovar \eo, \Gamma'}{e}{\VVAR}{\All{\alpha} S_0}  {Subderivation}
         \syntypeePf{\Gamma, [\epsilon/\eovar]\Gamma'}{e}{\VVAR}{[\epsilon/\eovar](\All{\alpha} S_0)}  {By i.h.}
         \syntypeePf{\Gamma, [\epsilon/\eovar]\Gamma'}{e}{\VVAR}{\All{\alpha} [\epsilon/\eovar]S_0}  {By def.\ of subst.}
         \ePf{\Gamma, \eovar \eo, \Gamma'}{S'}   {Subderivation}
         \ePf{\Gamma, [\epsilon/\eovar]\Gamma'}{[\epsilon/\eovar]S'}   {By part (1)}
         \syntypeePf{\Gamma, [\epsilon/\eovar]\Gamma'}{e}{\VVAR}{\big[[\epsilon/\eovar]S' / \alpha\big][\epsilon/\eovar]S_0}   {By \Rallelim}
       \Hand  \syntypeePf{\Gamma, [\epsilon/\eovar]\Gamma'}{e}{\VVAR}{[\epsilon/\eovar][S' / \alpha]S_0}   {By def.\ of subst.}
       \end{llproof}

  \DerivationProofCase{\Rvar}
        {(x : S) \in (\Gamma, \eovar \eo, \Gamma')}
        {\syntypee{\Gamma, \eovar \eo, \Gamma'}{x}{\VAL}{S}}

        Follows from the definition of substitution on contexts.

  \ProofCaseRule{\Rfixvar}  Similar to the \Rvar case.

  \end{itemize}

  The remaining cases are straightforward, using the i.h.\ and properties
  of substitution.
\end{proof}

\begin{lemma}[Type substitution]
\Label{lem:elab-type-subst}
~
\begin{enumerate}[(1)]
    \item 
             If
             $\Gamma \entails S' \type$
             and
             $\Gamma, \alpha \type \entails S \type$
             then
             $\Gamma \entails [S'/\alpha]S \type$.
    \item 
             If
             $\Gamma \entails S' \type$
             and
             $\Gamma, \alpha \type \entails \elab{e}{\VVAR}{S}{M}$
             then
             $\Gamma \entails \elab{e}{\VVAR}{[S'/\alpha]S}{M}$.
\end{enumerate}
\end{lemma}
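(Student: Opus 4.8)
\textbf{Proof plan for Lemma~\ref{lem:elab-type-subst}.}
The plan is to prove both parts by straightforward induction on the given derivations, with part~(1) (type well-formedness) being an auxiliary used by part~(2). First I would handle part~(1) by induction on the derivation of $\Gamma, \alpha \type \entails S \type$ (using the rules of \Figureref{fig:econ-wf}). The base cases are $\unitty$ (immediate) and type variables $\beta$: if $\beta = \alpha$ then $[S'/\alpha]\alpha = S'$ and we have $\Gamma \entails S' \type$ by hypothesis; if $\beta \neq \alpha$ then $(\beta \type) \in \Gamma$ and the result is immediate. The inductive cases for $\arr$, $*$, $+$, $\AllSym$, $\alleosym$, $\susp{\epsilon}$, and $\Rec{\beta}{}$ all follow by applying the induction hypothesis to the subderivation(s) and reassembling with the same rule, being careful (as usual) to choose the bound variable $\beta$ fresh for $\alpha$ and $S'$ so that substitution commutes with the binder.

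For part~(2), I would proceed by induction on the derivation of $\Gamma, \alpha \type \entails \elab{e}{\VVAR}{S}{M}$ (\Figureref{fig:elab}). Most cases are mechanical: for each rule, apply the induction hypothesis to the elaboration subderivations, invoke part~(1) of this lemma on any type-well-formedness premises (this is needed in \Eallelim, where the premise $\Gamma, \alpha \type \entails S'' \type$ becomes $\Gamma \entails [S'/\alpha]S'' \type$), and reapply the same elaboration rule. The connective cases ($\Earrintro$, $\Earrelim$, $\Eprodintro$, $\Eprodelim{k}$, $\Esumintro{k}$, $\Esumelim$, $\Erecintro$, $\Erecelim$, $\Esuspintro$, $\Esuspelim{\V}$, $\Esuspelim{\N}$, $\Efix$, $\Efixvar$, $\Evar$, $\Eunitintro$) all go through this way, using that substitution commutes with the type constructors, e.g.\ $[S'/\alpha](S_1 \arr S_2) = [S'/\alpha]S_1 \arr [S'/\alpha]S_2$ and $[S'/\alpha]\big([(\Rec{\beta}S_0)/\beta]S_0\big) = \big[(\Rec{\beta}[S'/\alpha]S_0)/\beta\big][S'/\alpha]S_0$ (standard substitution lemma, with $\beta$ chosen fresh). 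For rules that bind term or type variables in the context ($\Earrintro$, $\Efix$, $\Eallintro$, $\Esumelim$), the induction hypothesis applies because $\alpha$ does not appear in $S'$'s scope restrictions and the bound variable can be renamed to avoid capture.

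The cases for $\AllSym$ and $\alleosym$ deserve a word. In $\Eallintro$, the premise is $\Gamma, \alpha \type, \beta \type \entails \elab{e}{\VAL}{S_0}{M}$ (renaming the quantified variable to $\beta$); by the induction hypothesis with $\Gamma, \beta \type$ in place of $\Gamma$ (weakening $\Gamma \entails S' \type$ to $\Gamma, \beta \type \entails S' \type$ first) we get $\Gamma, \beta \type \entails \elab{e}{\VAL}{[S'/\alpha]S_0}{M}$, and $\Eallintro$ gives $\Gamma \entails \elab{e}{\VAL}{\All{\beta}[S'/\alpha]S_0}{\ttylam M}$ as required. In $\Ealleointro$, the premises involve $e$ at types $[\V/\eovar]S_0$ and $[\N/\eovar]S_0$; since $\eovar$ is an evaluation-order variable and $\alpha$ is a type variable they are disjoint, so $[S'/\alpha][\V/\eovar]S_0 = [\V/\eovar][S'/\alpha]S_0$ (and similarly for $\N$), and the induction hypothesis plus $\Ealleointro$ close the case; $\Ealleoelim$ is symmetric. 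I expect no genuine obstacle here — the only mild care needed is bookkeeping around capture-avoidance and the commutation of type substitution with evaluation-order substitution, both of which are routine.
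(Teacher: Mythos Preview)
Your proposal is correct and follows exactly the paper's approach: induction on the second derivation in each part, with part~(1) invoked in the \Eallelim case of part~(2). The paper's own proof is a one-line sketch (``In each part, by induction on the second derivation. In part~(2), the \Eallelim case uses part~(1).''); your write-up simply spells out the case analysis that the paper leaves implicit.
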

\begin{proof}
  In each part, by induction on the second derivation.
  In part (2), the \Eallelim case uses part (1).
\end{proof}

\lemelabexprsubst*
\begin{proof}
  Part (1): By induction on the given derivation.  In the \Evar case,
  use \Lemmaref{lem:value-mono} to get 
  $\Gamma \entails \elab{e_1}{\VAL}{S_1}{W}$. By weakening,
  $\Gamma, \Gamma' \entails \elab{e_1}{\VAL}{S_1}{W}$, which is
  $\Gamma, \Gamma' \entails \elab{[e_1/x]x}{\VAL}{S}{[W/x]M}$.

  Part (2): By induction on the given derivation.  Note that in the \Efixvar case,
  $\VVAR_2 = \NONVAL$.
\end{proof}

\thmelabtypesoundness*
\begin{proof}
  By induction on the size of the given derivation.  If $\VVAR' = \VVAR$,
  we often don't bother to state $\VVAR \valleq \VVAR$ explicitly.
  
  \begin{itemize}
        \DerivationProofCase{\Rvar}
            {(\xtypeofe{x}{S}) \in \Gamma}
            {\syntypee{\Gamma}{x}{\VAL}{S}}

              \begin{llproof}
               \inPf{(x : S)}{\Gamma}  {Premise}
              \Hand    \elabPf {\Gamma} {\er x} {\VVAR}{S} {x}   {By \Evar}
               \inPf{(x : \tytrans{S})}{\ctxtrans{\Gamma}}
                        {By def.\ of $\ctxtrans{-}$}
              \Hand      \typeoftPf {\ctxtrans{\Gamma}}
                              {x}
                              {\tytrans{S}}
                              {By \Mvar}
              \end{llproof}

       \ProofCaseRule{\Rfixvar}  Similar to the \Rvar case.

       \DerivationProofCase{\Rfix}
            {\chktypee{\Gamma, \xtypeofe{u}{S}}{e_0}{\VVAR_0}{S}}
            {\chktypee{\Gamma}{(\Fix{u} e_0)}{\NONVAL}{S}}

             \begin{llproof}
               \chktypeePf{\Gamma, \xtypeofe{u}{S}}{e_0}{\VVAR_0}{S}  {Subderivation}
               \elabPf {\Gamma, \xtypeofe{u}{S}}
                              {\er{e_0}}
                              {\VVAR_0'}
                              {S}
                              {M_0}
                              {By i.h.}
               \typeoftPf{\ctxtrans{\Gamma, \xtypeofe{u}{S}}}{M_0}{\tytrans{S}} {\ditto}
               \typeoftPf{\ctxtrans{\Gamma}, u : \tytrans{S}}{M_0}{\tytrans{S}} {By def.\ of $\ctxtrans{-}$}
               \proofsep
               \Hand  \elabPf {\Gamma, \xtypeofe{u}{S}}
                              {\er{e_0}}
                              {\NONVAL}
                              {S}
                              {\tfix{u} M_0}
                              {By \Efix}
               \Hand  \typeoftPf{\ctxtrans{\Gamma}} {(\tfix{u} M_0)} {\tytrans{S}}
                       {By \Mfix}
             \end{llproof}

      \DerivationProofCase{\Rsub}
            {\syntypee{\Gamma}{e}{\VVAR}{S}}
            {\chktypee{\Gamma}{e}{\VVAR}{S}}

            \begin{llproof}
                \syntypeePf{\Gamma}{e}{\VVAR}{S}  {Subderivation}
            \Hand \elabPf{\Gamma}{\er e}{\VVAR'}{S}{M}  {By i.h.}
            \Hand \valleqPf{\VVAR'}{\VVAR}  {\ditto}
            \Hand \typeoftPf{\ctxtrans{\Gamma}}{M}{\tytrans{S}}  {\ditto}
            \end{llproof}

      \DerivationProofCase{\Ranno}
            {\chktypee{\Gamma}{e_0}{\VVAR}{S}}
            {\syntypee{\Gamma}{\Anno{e_0}{S}}{\VVAR}{S}}
            
            \begin{llproof}
                \chktypeePf{\Gamma}{e_0}{\VVAR}{S}  {Subderivation}
                \elabPf{\Gamma}{\er{e_0}}{\VVAR'}{S}{M}  {By i.h.}
            \Hand \valleqPf{\VVAR'}{\VVAR}  {\ditto}
            \Hand \typeoftPf{\ctxtrans{\Gamma}}{M}{\tytrans{S}}  {\ditto}
            \Hand \elabPf{\Gamma}{\er{\Anno{e_0}{S}}}{\VVAR'}{S}{M}  {By def.\ of $\er{-}$}
            \end{llproof}

       \DerivationProofCase{\Runitintro}
                {}
                {\chktypee{\Gamma}{\unit}{\VAL}{\unitty}}

             \begin{llproof}
             \Hand  \elabPf {\Gamma} {\er{\unit}} {\VVAR} {\unitty} {\tunit}   {By \Eunitintro}
                 \typeoftPf {\ctxtrans{\Gamma}} {\tunit} {\unitty}   {By \Munitintro}
             \Hand  \typeoftPf {\ctxtrans{\Gamma}} {\tunit} {\tytrans{\unitty}}   {By def.\ of $\tytrans{-}$}
             \end{llproof}

       \DerivationProofCase{\Ralleointro}
              {
                \chktypee{\Gamma, \eovar \eo}{e}{\VAL}{S_0}
              }
              {\chktypee{\Gamma}{e}{\VAL}{\alleo{\eovar} S_0}}

             \begin{llproof}
                 \chktypeePf{\Gamma, \eovar}{e}{\VAL}{S_0}  {Subd.}
                 \chktypeePf{\Gamma}{e}{\VAL}{[\V/\eovar]S_0}   {By \Lemmaref{lem:subst-eo} (2)}
                 \elabPf {\Gamma} {\er e} {\VAL} {[\V/\eovar]S_0} {M_\V}   {By i.h.}
                 \typeoftPf {\ctxtrans{\Gamma}} {M_\V} {\tytrans{[\V/\eovar]S_0}}   {\ditto}
                 \proofsep
                 \chktypeePf{\Gamma}{e}{\VAL}{[\N/\eovar]S_0}   {By \Lemmaref{lem:subst-eo} (2)}
                 \elabPf {\Gamma} {\er e} {\VAL} {[\N/\eovar]S_0} {M_\N}   {By i.h.}
                 \typeoftPf {\ctxtrans{\Gamma}} {M_\N} {\tytrans{[\N/\eovar]S_0}}   {\ditto}
                 \proofsep
            \Hand     \elabPf {\Gamma} {\er e} {\VAL} {\alleo{\eovar} S_0} {\tpair{M_\V}{M_\N}}   {By \Ealleointro}
                 \typeoftPf {\ctxtrans{\Gamma}} {\tpair{M_\V}{M_\N}} {\tytrans{S_1} * \tytrans{S_2}}   {By \Mprodintro}
             \Hand    \typeoftPf {\ctxtrans{\Gamma}} {\tpair{M_\V}{M_\N}} {\tytrans{\alleo{\eovar} S_0}}   {By def.\ of $\tytrans{-}$}
             \end{llproof}

       \DerivationProofCase{\Ralleoelim}
             {\syntypee{\Gamma}{e}{\VVAR}{\alleo{\eovar} S_0}
               \\
               \Gamma \entails \epsilon \eo}
             {\syntypee{\Gamma}{e}{\VVAR}{[\epsilon / \eovar]S_0}}

             \begin{llproof}
                 \elabPf {\Gamma} {\er e} {\VVAR'} {\alleo{\eovar} S_0} {M_0}   {By i.h.}
             \Hand    \valleqPf{\VVAR'}{\VVAR}   {\ditto}
                 \typeoftPf {\ctxtrans{\Gamma}} {M_0} {\tytrans{[\V/\eovar]S_0} * \tytrans{[\N/\eovar]S_0}}   {\ditto}
             \end{llproof}

             If $\epsilon = \V$ then:

             \begin{llproof}
             \Hand  \elabPf {\Gamma} {\er e} {\VVAR'} {[\V/\eovar]S_0} {\Proj{1}{M_0}}   {By \Ealleoelim}
            \Hand   \typeoftPf {\ctxtrans{\Gamma}}
                                {\Proj{1}{M_0}}
                                {\tytrans{[\V/\eovar]S_0}}
                                {By \Mprodelim{1}}
             \end{llproof}

             Otherwise, $\epsilon \neq \V$.  It is given that $\Gamma$ contains no
             $\eovar$-declarations, and we also have $\Gamma \entails \epsilon \eo$.
             It follows that $\epsilon$ cannot be a variable $\eovar$.
             Therefore $\epsilon = \N$.

             \smallskip
             
             \begin{llproof}
             \Hand  \elabPf {\Gamma} {\er e} {\VVAR'} {[\N/\eovar]S_0} {\Proj{2}{M_0}}   {By \Ealleoelim}
            \Hand   \typeoftPf {\ctxtrans{\Gamma}}
                                {\Proj{2}{M_0}}
                                {\tytrans{[\N/\eovar]S_0}}
                                {By \Mprodelim{2}}
             \end{llproof}

        \DerivationProofCase{\Rsuspintro \text{\normalfont (first conclusion)}}
              {\chktypee{\Gamma}{e}{\VVAR}{S_0}
              }
              {\chktypee{\Gamma}{e}{\VVAR}{\susp{\epsilon} S_0}}

              \begin{llproof}
                  \chktypeePf{\Gamma}{e}{\VVAR}{S_0}   {Subderivation}
                  \elabPf {\Gamma} {\er e} {\VVAR'} {S_0} {M_0}   {By i.h.}
                  \valleqPf{\VVAR'}{\VVAR}   {\ditto}
                  \typeoftPf {\ctxtrans{\Gamma}} {M_0} {\tytrans{S_0}}   {\ditto}
              \end{llproof}

              By similar reasoning as in the \Ralleoelim case, either $\epsilon = \V$
              or $\epsilon = \N$.

              If $\epsilon = \V$:

              \begin{llproof}
                        \eqPf{\tytrans{S_0}}{\tytrans{\susp{\V} S_0}}
                                {By def.\ of $\tytrans{-}$}
                        \LetPf{M}{M_0} {}
              \Hand      \elabPf {\Gamma} {\er e} {\VVAR'} {\susp{\V} S_0} {M}   {By \Esuspintro (first conclusion)}
              \Hand \valleqPf{\VVAR'}{\VVAR}   {Above}
              \Hand      \typeoftPf {\ctxtrans{\Gamma}}
                              {M}
                              {\tytrans{\susp{\V} S_0}}
                              {By above equality}
              \end{llproof}

              If $\epsilon = \N$:

              \begin{llproof}
                        \eqPf{\thunkty \tytrans{S_0}}{\tytrans{\susp{\N} S_0}}
                                {By def.\ of $\tytrans{-}$}
                        \LetPf{M}{\Thunk M_0} {}
              \Hand      \elabPf {\Gamma} {\er e} {\VAL} {\susp{\N} S_0} {\Thunk M_0}   {By \Esuspintro (second conclusion)}
              \Hand   \valleqPf{\VAL}{\VVAR}   {By def.\ of $\valleq$}
                   \typeoftPf {\ctxtrans{\Gamma}}
                              {\Thunk M_0}
                              {\thunkty \tytrans{S_0}}
                              {By \Marrintro}
              \Hand
                   \typeoftPf {\ctxtrans{\Gamma}}
                              {M}
                              {\tytrans{\susp{\N} S_0}}
                              {By above equalities}
              \end{llproof}

        \DerivationProofCase{\Rsuspintro \text{\normalfont (second conclusion)}}
              {\chktypee{\Gamma}{e}{\VVAR'}{S_0}
              }
              {\chktypee{\Gamma}{e}{\VAL}{\susp{\N} S_0}}

              \begin{llproof}
                  \chktypeePf{\Gamma}{e}{\VVAR'}{S_0}   {Subderivation}
                  \elabPf {\Gamma} {\er e} {\VVAR'} {S_0} {M_0}   {By i.h.}
                  \valleqPf{\VVAR'}{\VVAR}   {\ditto}
                  \typeoftPf {\ctxtrans{\Gamma}} {M_0} {\tytrans{S_0}}   {\ditto}
                  \eqPf{\thunkty \tytrans{S_0}}{\tytrans{\susp{\N} S_0}}
                                {By def.\ of $\tytrans{-}$}
                  \LetPf{M}{\Thunk M_0} {}
              \Hand      \elabPf {\Gamma} {\er e} {\VAL} {\susp{\N} S_0} {\Thunk M_0}   {By \Esuspintro (second conclusion)}
              \Hand   \valleqPf{\VAL}{\VVAR}   {By def.\ of $\valleq$}
                   \typeoftPf {\ctxtrans{\Gamma}}
                              {\Thunk M_0}
                              {\thunkty \tytrans{S_0}}
                              {By \Marrintro}
              \Hand
                   \typeoftPf {\ctxtrans{\Gamma}}
                              {M}
                              {\tytrans{\susp{\N} S_0}}
                              {By above equalities}
              \end{llproof}

        \DerivationProofCase{\Rsuspelim{\V}}
             {\syntypee{\Gamma}{e}{\VVAR}{\susp{\V} S}}
             {\syntypee{\Gamma}{e}{\VVAR}{S}}

             \begin{llproof}
               \syntypeePf{\Gamma}{e}{\VVAR}{\susp{\V} S}   {Subderivation}
               \elabPf{\Gamma}{\er e}{\VVAR'}{\susp{\V} S}{M_0}   {By i.h.}
               \Hand    \valleqPf{\VVAR'}{\VVAR}   {\ditto}
               \typeoftPf{\ctxtrans{\Gamma}}{M_0}{\tytrans{\susp{\V} S}}   {\ditto}
                        \eqPf{\tytrans{\susp{\V} S}}{\tytrans{S}}
                                {By def.\ of $\tytrans{-}$}
                        \LetPf{M}{M_0} {}
              \Hand      \elabPf {\Gamma} {\er e}{\VVAR'} {S} {M}   {By \Esuspelim{\V}}
              \Hand      \typeoftPf {\ctxtrans{\Gamma}}
                              {M}
                              {\tytrans{S}}
                              {By above equalities}
             \end{llproof}

        \DerivationProofCase{\Rsuspelim{\epsilon}}
             {\syntypee{\Gamma}{e}{\VVAR'}{\susp{\epsilon} S}}
             {\syntypee{\Gamma}{e}{\NONVAL}{S}}

              By similar reasoning as in the \Ralleoelim case, either $\epsilon = \V$
              or $\epsilon = \N$.

              If $\epsilon = \V$, follow the \Rsuspelim{\V} case above.

              If $\epsilon = \N$:

              \begin{llproof}
                        \syntypeePf{\Gamma}{e}{\VVAR'}{\susp{\N} S}   {Subderivation}
                        \elabPf{\Gamma}{\er e}{\VVAR''}{\susp{\N} S}{M_0}   {By i.h.}
                        \typeoftPf{\ctxtrans{\Gamma}}{M_0}{\tytrans{\susp{\N} S}}   {\ditto}
                        \eqPf{\tytrans{\susp{\N} S}}{\thunkty \tytrans{S}}
                                {By def.\ of $\tytrans{-}$}
                        \LetPf{M}{(\Force M_0)} {}
              \Hand     \elabPf {\Gamma} {\er e} {\NONVAL} {S} {\Force M_0}   {By \Esuspelim{\N}}
              \Hand  \valleqPf{\NONVAL}{\NONVAL}   {By def.\ of $\valleq$}
                        \typeoftPf{\ctxtrans{\Gamma}}{M_0}{\thunkty \tytrans{S}}   {Above ($\tytrans{\susp{\N} S} = \thunkty \tytrans{S}$)}
              \Hand
                   \typeoftPf {\ctxtrans{\Gamma}}
                              {\Force M_0}
                              {\tytrans{S}}
                              {By \Mthunkelim}
              \end{llproof}

          \DerivationProofCase{\Rprodintro}
                 {\chktypee{\Gamma}{e_1}{\VVAR_1}{S_1}
                   \\
                   \chktypee{\Gamma}{e_2}{\VVAR_2}{S_2}
                  }
                  {\chktypee{\Gamma}{\Pair{e_1}{e_2}}{\VVAR_1 \join \VVAR_2}{(S_1 * S_2)}}

              \begin{llproof}
                \elabPf{\Gamma}{\er{e_1}}{\VVAR}{S_1}{M_1}   {By i.h.}
                \valleqPf{\VVAR_1'}{\VVAR_1}   {\ditto}
                \typeoftPf{\ctxtrans{\Gamma}}{M_1}{\tytrans{S_1}}   {\ditto}
                \proofsep
                \elabPf{\Gamma}{\er{e_2}}{\VVAR}{S_2}{M_2}   {By i.h.}
                \valleqPf{\VVAR_2'}{\VVAR_2}   {\ditto}
                \typeoftPf{\ctxtrans{\Gamma}}{M_2}{\tytrans{S_2}}   {\ditto}
                \proofsep
             \Hand
                \elabPf{\Gamma}{\Pair{\er{e_1}}{\er{e_2}}}{\VVAR_1' \join \VVAR_2' }{(S_1 * S_2)}{\tpair{M_1}{M_2}}
                        {By \Eprodintro}
               \Hand    \valleqPf{\VVAR_1' \join \VVAR_2'}{\VVAR_1 \join \VVAR_2}   {$\VVAR_1' \valleq \VVAR_1$ and $\VVAR_2' \valleq \VVAR_2$}
                \typeoftPf{\ctxtrans{\Gamma}}{\tpair{M_1}{M_2}}{\tytrans{S_1} * \tytrans{S_2}}
                        {By \Mprodintro}
             \Hand
                \typeoftPf{\ctxtrans{\Gamma}}{\tpair{M_1}{M_2}}{\tytrans{S_1 * S_2}}
                        {By def.\ of $\tytrans{-}$}
              \end{llproof}
           
           \DerivationProofCase{\Rprodelim{k}}
                 {
                   \syntypee{\Gamma}
                           {e_0}
                           {\VVAR_0}
                           {(S_1 * S_2)}
                 }
                 {
                   \syntypee{\Gamma}
                        {(\Proj{k} e_0)}
                        {\NONVAL}
                        {S_k}
                 }

              \begin{llproof}
                \elabPf{\Gamma}{\er{e_0}}{\VVAR_0'}{(S_1 * S_2)}{M_0}   {By i.h.}
                \typeoftPf{\ctxtrans{\Gamma}}{M_0}{\tytrans{S_1 * S_2}}   {\ditto}
                \typeoftPf{\ctxtrans{\Gamma}}{M_0}{\tytrans{S_1} * \tytrans{S_2}}   {By def.\ of $\ctxtrans{-}$}
                \proofsep
             \Hand
                \elabPf{\Gamma}{(\Proj{k} \er{e_0})}{\NONVAL}{S_k}{(\tproj{k}{M_0})}
                        {By \Eprodelim{k}}
             \Hand
                \typeoftPf{\ctxtrans{\Gamma}}{(\tproj{k}{M_0})}{\tytrans{S_k}}
                        {By \Mprodelim{k}}
              \end{llproof}

       \DerivationProofCase{\Rarrintro}
                 {\chktypee{\Gamma, x:S_1}{e_0}{\VVAR_0}{S_2}
                 }
                 {\chktypee{\Gamma}{(\explam{x} e_0)}{\VAL}{(S_1 \arr S_2)}}

              \begin{llproof}
                  \elabPf {\Gamma, \var{x}{S_1}} {\er{e_0}} {\VVAR_0'} {S_2} {M_0}   {By i.h.}
                  \typeoftPf {\ctxtrans{\Gamma, \var{x}{S_1}}} {M_0} {\tytrans{S_2}}    {\ditto}
                  \proofsep
                  \eqPf{\ctxtrans{\Gamma, \var{x}{S_1}}}
                         {(\ctxtrans{\Gamma}, x : \tytrans{S_1})}
                         {By def.\ of $\ctxtrans{-}$}
                  \proofsep
                  \elabPf {\Gamma, \var{x}{S_1}} {\er{e_0}} {\VVAR_0'} {S_2} {M_0}   {Above}
               \Hand   \elabPf {\Gamma} {(\explam{x} e_0)} {\VAL} {(S_1 \arr S_2)} {(\tlam{x} M_0)}   {By \Earrintro}
                  \proofsep
                  \typeoftPf {\ctxtrans{\Gamma}, \var{x}{\tytrans{S_1}}} {M_0} {\tytrans{S_2}}
                             {Above}
                  \typeoftPf {\ctxtrans{\Gamma}, x : \tytrans{S_1}} {(\tlam{x} M_0)} {\tytrans{S_1} \arr \tytrans{S_2}}
                             {By \Marrintro}
             \Hand     \typeoftPf {\ctxtrans{\Gamma}, x : \tytrans{S_1}} {(\tlam{x} M_0)} {\tytrans{S_1 \arr S_2}}
                             {By def.\ of $\tytrans{-}$}
              \end{llproof}

       \DerivationProofCase{\Rarrelim}
              {\syntypee{\Gamma}{e_1}{\VVAR_1}{(S_1 \arr S)}
               \\
               \chktypee{\Gamma}{e_2}{\VVAR_2}{S_1}
              }
              {\syntypee{\Gamma}{(\expapp{e_1}{e_2})}{\NONVAL}{S}
              }

               \begin{llproof}
                 \elabPf {\Gamma} {\er{e_1}} {\VVAR_1'} {(S' \arr S)} {M_1}   {By i.h.}
                 \typeoftPf {\ctxtrans{\Gamma}} {M_1} {\tytrans{S' \arr S}}   {\ditto}
                 \typeoftPf {\ctxtrans{\Gamma}} {M_1} {\tytrans{S'} \arr \tytrans{S}}   {By def.\ of $\tytrans{-}$}
                 \proofsep
                 \elabPf {\Gamma} {\er{e_2}} {\VVAR_2'} {S'} {M_2}   {By i.h.}
                 \typeoftPf {\ctxtrans{\Gamma}} {M_2} {\tytrans{S'}}   {\ditto}
               \Hand  \elabPf {\Gamma}
                              {\er{\expapp{e_1}{e_2}}}
                              {\NONVAL}
                              {(S' \arr S)}
                              {(M_1 \, M_2)}
                              {By \Earrelim}
               \Hand  \typeoftPf{\ctxtrans{\Gamma}} {(M_1 \, M_2)} {\tytrans{S}}
                       {By \Marrelim}
               \end{llproof}

       \DerivationProofCase{\Rallintro}
              { 
                \chktypee{\Gamma, \alpha \type}{e_0}{\VAL}{S_0}
              }
              {\chktypee{\Gamma}{\tylam{\alpha} e_0}{\VAL}{\All{\alpha} S_0}}

             \begin{llproof}
               \chktypeePf{\Gamma, \alpha \type}{e_0}{\VAL}{S_0}  {Subderivation}
               \elabPf {\Gamma, \alpha \type}
                              {\er{e_0}}
                              {\VAL}
                              {S_0}
                              {M_0}
                              {By i.h.}
               \typeoftPf{\ctxtrans{\Gamma, \alpha \type}}{M_0}{\tytrans{S_0}} {\ditto}
               \typeoftPf{\ctxtrans{\Gamma}, \alpha \type}{M_0}{\tytrans{S_0}} {By def.\ of $\ctxtrans{-}$}
               \proofsep
               \elabPf {\Gamma}
                              {\er{e_0}}
                              {\VAL}
                              {\All{\alpha} S_0}
                              {\ttylam M_0}
                              {By \Eallintro}
             \Hand
               \elabPf {\Gamma}
                              {\er{\tylam{\alpha} e_0}}
                              {\VAL}
                              {\All{\alpha} S_0}
                              {\ttylam M_0}
                              {By def.\ of $\er{-}$}
               \typeoftPf{\ctxtrans{\Gamma}} {\ttylam M_0} {\All{\alpha} \tytrans{S_0}}
                       {By \Mallintro}
               \Hand  \typeoftPf{\ctxtrans{\Gamma}} {\ttylam M_0} {\tytrans{\All{\alpha} S_0}}
                       {By def.\ of subst.}
             \end{llproof}

       \DerivationProofCase{\Rallelim}
             {\syntypee{\Gamma}{e_0}{\VVAR}{\All{\alpha} S_0}
              \\
              \Gamma \entails S' \type}
             {\syntypee{\Gamma}{\tyapp{e_0}{S'}}{\VVAR}{[S' / \alpha]S_0}}

            \begin{llproof}
              \syntypeePf{\Gamma}{e_0}{\VVAR}{\All{\alpha} S_0}{Subderivation}
              \elabPf{\Gamma}{\er{e_0}}{\VVAR'}{\All{\alpha} S_0}{M_0}  {By i.h.}
              \Hand  \valleqPf{\VVAR'}{\VVAR} {\ditto}
              \typeoftPf{\ctxtrans{\Gamma}}{M_0}{\tytrans{\All{\alpha} S_0}}  {\ditto}
              \proofsep
              \ePf{\Gamma}{S' \type}{Subderivation}
              \elabPf{\Gamma}{\er{e_0}}{\VVAR'}{[S'/\alpha]S_0}{\ttyapp{M_0}} {By \Eallelim}
            \Hand  \elabPf{\Gamma}{\er{\tyapp{e_0}{S'}}}{\VVAR'}{[S'/\alpha]S_0}{\ttyapp{M_0}} {By def.\ of $\er{-}$}
              \ePf{\ctxtrans{\Gamma}}{\tytrans{S'}}{By \Lemmaref{lem:tytrans-wf}}
              \proofsep
              \typeoftPf{\ctxtrans{\Gamma}}{M_0}{\All{\alpha} \tytrans{S_0}}  {By def.\ of $\tytrans{-}$}
              \typeoftPf{\ctxtrans{\Gamma}} {\ttyapp{M_0}} {\big[\tytrans{S'}/\alpha\big]\tytrans{S_0}}
                       {By \Mallelim}
              \eqPf{\big[\tytrans{S'}/\alpha\big]\tytrans{S_0}}
                   {\tytrans{[S'/\alpha]S_0}}                       {From def.\ of subst.}
            \Hand \typeoftPf{\ctxtrans{\Gamma}} {\ttyapp{M_0}} {\tytrans{[S'/\alpha]S_0}}
                   {By above equality}
            \end{llproof}

       \DerivationProofCase{\Rsumintro{k}}
             {\chktypee{\Gamma}{e_0}{\VVAR}{S_k}
             }
             {\chktypee{\Gamma}{(\Inj{k} e_0)}{\VVAR}{(S_1 + S_2)}}

             \begin{llproof}
               \chktypeePf{\Gamma}{e_0}{\VVAR}{S_k}  {Subderivation}
               \elabPf{\Gamma}{\er{e_0}}{\VVAR'}{S_k}{M_0}  {By i.h.}
              \Hand  \valleqPf{\VVAR'}{\VVAR} {\ditto}
               \typeoftPf{\ctxtrans{\Gamma}}{M_0}{\tytrans{S_k}}  {\ditto}
             \Hand  \elabPf{\Gamma}{\Inj{k} \er{e_0}}{\VVAR'}{(S_1 + S_2)}{\tinj{k} M_0}  {By \Esumintro{k}}
               \typeoftPf{\ctxtrans{\Gamma}}{\tinj{k} M_0}{\tytrans{S_1} + \tytrans{S_2}}  {By \Msumintro{k}}
             \Hand
               \typeoftPf{\ctxtrans{\Gamma}}{\tinj{k} M_0}{\tytrans{S_1 + S_2}}  {By def.\ of $\tytrans{-}$}
             \end{llproof}
       
       \DerivationProofCase{\Rsumelim}
              {   \syntypee{\Gamma}{e_0}{\VVAR_0}{(S_1 + S_2)}
                  \\
                  \arrayenvbl{
                      \chktypee{\Gamma, x_1 : S_1}{e_1}{\VVAR_1}{S}
                      \\
                      \chktypee{\Gamma, x_2 : S_2}{e_2}{\VVAR_2}{S}
                  }
              }
              {\chktypee{\Gamma}{\Casesum{e_0}{x_1}{e_1}{x_2}{e_2}}{\NONVAL}{S}
              }

              \begin{llproof}
                \syntypeePf{\Gamma}{e_0}{\VVAR_0}{S_1 + S_2}  {Subderivation}
                \elabPf{\Gamma}{\er{e_0}}{\VVAR_0'}{(S_1 + S_2)}{M_0}  {By i.h.}
                \typeoftPf{\ctxtrans{\Gamma}}{M_0}{\tytrans{S_1 + S_2}}  {\ditto}
                \typeoftPf{\ctxtrans{\Gamma}}{M_0}{\tytrans{S_1} + \tytrans{S_2}}  {By def.\ of $\tytrans{-}$}
                \proofsep
                \chktypeePf{\Gamma, x_1 : S_1}{e_1}{\VVAR_1}{S}   {Subderivation}
                \elabPf{\Gamma, x_1 : S_1}{\er{e_1}}{\VVAR_1'}{S}{M_1}   {By i.h.}
                \typeoftPf{\ctxtrans{\Gamma, x_1 : S_1}}{M_1}{\tytrans{S}} {\ditto}
                \typeoftPf{\ctxtrans{\Gamma}, x_1 : \tytrans{S_1}}{M_1}{\tytrans{S}}
                        {By def.\ of $\ctxtrans{-}$}
                \proofsep
                \elabPf{\Gamma, x_2 : S_2}{\er{e_2}}{\VVAR_2'}{S}{M_2} {Similar to above}
                \typeoftPf{\ctxtrans{\Gamma}, x_2 : \tytrans{S_2}}{M_2}{\tytrans{S}} {\ditto}
                \decolumnizePf
              \Hand
                \elabPf{\Gamma}
                       {\er{\Casesum{e_0}{x_1}{e_1}{x_2}{e_2}}}
                       {\NONVAL}
                       {S}
                       {\tcase{M_0}{x_1}{M_1}{x_2}{M_2}}
                       {By \Esumelim}
              \Hand
                \typeoftPf{\ctxtrans{\Gamma}}{\tcase{M_0}{x_1}{M_1}{x_2}{M_2}}{\tytrans{S}}
                    {By \Msumelim}
              \end{llproof}

       \DerivationProofCase{\Rrecintro}
             {\chktypee{\Gamma}{e}{\VVAR}{\big[(\Rec{\alpha}S_0)\big/\alpha\big]S_0}
             }
             {\chktypee{\Gamma}{e}{\VVAR}{\Rec{\alpha} S_0}}

             \begin{llproof}
               \chktypeePf{\Gamma}{e}{\VVAR}{\big[(\Rec{\alpha}S_0)/\alpha\big]S_0}
                       {Subderivation}
               \elabPf {\Gamma}
                              {\er e}
                              {\VVAR'}
                              {\big[(\Rec{\alpha}S_0)/\alpha\big]S_0}
                              {M_0}
                              {By i.h.}
              \Hand  \valleqPf{\VVAR'}{\VVAR} {\ditto}
               \typeoftPf{\ctxtrans{\Gamma}}
                         {M_0}
                         {\tytrans{\big[(\Rec{\alpha}S_0)/\alpha\big]S_0}}
                         {\ditto}
             \Hand
               \elabPf {\Gamma}
                              {\er e}
                              {\VVAR'}
                              {(\Rec{\alpha} S_0)}
                              {(\troll M_0)}
                              {By \Erecintro}
               \eqPf{\tytrans{\big[(\Rec{\alpha}S_0)/\alpha\big]S_0}}
                        {\big[\tytrans{\Rec{\alpha}S_0} / \alpha\big]\,\tytrans{S_0}}
                         {From def.\ of $\tytrans{-}$}
               \typeoftPf{\ctxtrans{\Gamma}}
                         {M_0}
                         {\big[\tytrans{\Rec{\alpha}S_0} / \alpha\big]\,\tytrans{S_0}}
                         {By above equality}
               \typeoftPf{\ctxtrans{\Gamma}}
                         {(\troll M_0)}
                         {\Rec{\alpha} \tytrans{S_0}}
                         {By \Mrecintro}
             \Hand
               \typeoftPf{\ctxtrans{\Gamma}}
                         {(\troll M_0)}
                         {\tytrans{\Rec{\alpha} {S_0}}}
                         {By def.\ of subst.}
             \end{llproof}
       
       \DerivationProofCase{\Rrecelim}
              {   \syntypee{\Gamma}{e}{\VVAR}{\Rec{\alpha} S_0}
              }
              {
                \syntypee{\Gamma}{e}{\NONVAL}{\big[(\Rec{\alpha}S_0)\big/\alpha\big]S_0}
              }

              Broadly similar to the \Rrecintro case.
  \qedhere
  \end{itemize}
\end{proof}

\subsection{Consistency}
\Label{apx:consistency}

\leminversion*
\begin{proof}  By induction on the given derivation.

  For some rules, the proof cases are the same for all parts:

  \begin{itemize}
  \ProofCasesRules{\Esuspintro ($\V$ conclusion), \Esuspelim{\V}}

     The result follows by i.h.  In the \Esuspintro case, we apply the i.h.\ with one less $\susp{\V}$;
     in the \Esuspelim{\V} case, we have one more $\susp{\V}$.  
  \end{itemize}

  For part (0):
     
  \begin{itemize}

     \ProofCaseRule{\Earrintro}  The subderivation gives the result.     
  \end{itemize}

  For part (1):

  \begin{itemize}  

     \ProofCaseRule{\Ealleointro}  The subderivations give the result.

    \end{itemize}

  For part (2):

  \begin{itemize}

    \ProofCaseRule{\Esuspintro ($\N$ conclusion)}
      The subderivation gives the result.
  \end{itemize}

  For part (3):

  \begin{itemize}
    \ProofCaseRule{\Eallintro}  The subderivation gives the result.
  \end{itemize}

  For part (4):

  \begin{itemize}

    \ProofCaseRule{\Esumintro{k}}
        The subderivation gives the result.
  \end{itemize}

  For part (5):

  \begin{itemize}

    \ProofCaseRule{\Erecintro}
        The subderivation gives the result.
  \end{itemize}

  For part (6):

  \begin{itemize}

    \ProofCaseRule{\Eprodintro}
      The subderivations give the result.
  \end{itemize}
  All other cases are impossible: either $M$ has the wrong form, or $S$ has the wrong form.
\end{proof}

\lemsyntacticvalues*
\begin{proof}
  By induction on the given derivation.

  \begin{itemize}
  \ProofCasesRules{\Eunitintro, \Evar, \Earrintro}
      Immediate: the rule requires that $e$ is a syntactic value.

  \ProofCasesRules{\Esuspelim{\N}, \Efixvar, \Efix, \Earrelim,
           \Eprodelim{k}, \Esumelim}

      Impossible: these rules require that $\VAL$ be $\NONVAL$.

  \ProofCaseRule{\Esuspintro ($\N$-conclusion)}
      Impossible: $\Thunk M_0$ is not $\N$-free.

  \ProofCaseRule{\Erecelim}
      Impossible: $\tunroll{M_0}$ is not a value $W$.

  \ProofCasesRules{\Eallintro, \Eallelim,
    \Esuspintro ($\V$-conclusion),
    \Esuspelim{\V}}

     Apply the i.h.\ to the subderivation.

  \ProofCasesRules{\Eprodintro, \Esumintro{k}, \Erecintro}

     Apply the i.h.\ to the subderivation(s).

  \ProofCaseRule{\Ealleointro}
     Apply the i.h.\ to the $\Gamma \entails \elab{e}{\VAL}{[W/\eovar]S_0}{W_1}$
     subderivation.

  \ProofCaseRule{\Ealleoelim}
     Imposible: $W$ must be a projection, but projections are not values.
  \qedhere
  \end{itemize}
\end{proof}

\thmconsistency*
\begin{proof}
  By induction on the derivation of $\cdot \entails \elab e \VVAR S M$.
  
  \begin{itemize}
     \ProofCasesRules{\Evar, \Efixvar}
         Impossible, because the typing context is empty.
     
     \DerivationProofCase{\Efix}
            {\cdot, \var u S \entails \elab {e_0} {\VVAR} {S} {M_0}}
            {\cdot \entails \elab{(\Fix{u} e_0)}{\NONVAL}{S}{(\tfix{u} M_0)}}

            \begin{llproof}
              \elabPf{\cdot, \var u S} {e_0} {\VVAR} {S} {M_0}
                     {Subderivation}
              \stepPf{(\tfix{u} M_0)}{M'}  {Given}
              \eqPf{M'}{\big[(\tfix{u} M_0) \big/ u\big] M_0}  {\byinv{\RedFix}}
            \Hand  \srcstepPf{(\Fix{u} e_0)}{\big[(\Fix{u} e_0) \big/ u\big] e_0}
                    {By \SrcRedFixV and \SrcStepContextV}
              \decolumnizePf
              \elabPf{\cdot}{(\Fix{u} e_0)}{\NONVAL}{S}{(\tfix{u} M_0)}
                     {Given}
              \elabPf{\cdot, \var u S} {e_0} {\VVAR}{S} {M_0}
                     {Subderivation}
            \Hand  \elabPf{\cdot}{\big[(\Fix{u} e_0) \big/ u\big] e_0} {\VVAR} {S} {\big[(\tfix{u} M_0) \big/ u\big] M_0}
                     {By \Lemmaref{lem:elab-expr-subst} (2)}
            (1)\Hand  \FreePf{(holds vacuously)} {$\VVAR = \NONVAL$}
            (2)\Hand  \FreePf{Derivation does not use \SrcStepContextN} {}
            \end{llproof}

     \DerivationProofCase{\Eunitintro}
          { }
          {\cdot \entails \elab{\unit}{\VAL}{\unitty}{\tunit}}

          Impossible, since $M = \tunit$ but $\tunit \step M'$ is not derivable.

     \DerivationProofCase{\Earrintro}
         {\cdot, \var{x}{S_1} \entails
           \elab {e_0} {\VVAR} {S_2} {M_0}
         }
         {
              \cdot
              \entails
              \elab 
                   {(\explam {x} e_0)}
                   {\VAL}
                   {(S_1 \arr S_2)}
                   {\tlam{x} M_0}
         }

         Impossible, since $M = \tlam{x} M_0$ but $(\tlam{x} M_0) \step M'$ is not derivable.
     
     \DerivationProofCase{\Earrelim}
          {
            \arrayenvbl{
                \cdot \entails \elab{e_1}{\VVAR_1}{(S_1 \arr S)}{M_1}
                \\
                \cdot \entails \elab{e_2}{\VVAR_2}{S_1}{M_2}
            }
          }
          {\cdot \entails \elab
                {(\expapp{e_1}{e_2})}
                {\NONVAL}
                {S}
                {(M_1 \, M_2)
                }
          }

          First, note that $\VVAR = \NONVAL$ so ``moreover'' part (1) is vacuously satisfied.

          We have $(M_1 \, M_2) \step M'$.
          By inversion on \StepContext, %
          $M = (M_1 \, M_2) = \E[M_0]$ and $M' = \E[M_0']$.
          From $(M_1 \, M_2) = \E[M_0]$
          and the definition of $\E$, either $\E = \hole$, or
          $\E = (\E_1 \, M_2)$,
          or $\E = (M_1 \, \E_2)$ with $M_1$ a value.

          \begin{itemize}
          \item           If $\E = \hole$, then $M = M_0$ and $M' = M_0'$.
            By inversion on \RedBeta with
            $(M_1 \, M_2) \stepR M'$, we have
            $M_1 = (\tlam{x} Mbody)$
            and $M_2 = W$
            and $M' = [W/x]Mbody$.

            If $M_1 \, M_2$ is \emph{not} $\N$-free, then:

            \begin{llproof}
              \elabPf{\cdot}{e_1}{\VVAR_1}{(S_1 \arr S)}{(\tlam{x}Mbody)}   {Subderivation}
              \eqPf{e_1}{(\Lam{x} ebody)}   {By \Lemmaref{lem:inversion} (0)}
              \elabPf{\cdot, x : S_1}{ebody}{\VVAR''}{S}{Mbody}   {\ditto}
              \proofsep
              \elabPf{\cdot}{e_2}{\VVAR_2}{S_1}{W}   {Subderivation ($M_2 = W$)}
            \Hand
              \elabPf{\cdot}{[e_2/x]ebody}{\VVAR'}{S}{[W/x]Mbody}   {By \Lemmaref{lem:elab-expr-subst} (1)}
            \Hand  \valleqPf{\VVAR'}{\VVAR}   {\ditto}
              \srcredNPf{\expapp{(\Lam{x} ebody)}{e_2}}{[e_2/x]ebody}   {By \SrcRedBetaN}
            \Hand
              \srcstepsPf{\expapp{(\Lam{x} ebody)}{e_2}}{[e_2/x]ebody}   {By \SrcStepContextN}
            \end{llproof}

            \smallskip
            
            If $M_1 \, M_2$ is $\N$-free, then:

            \begin{llproof}
              \elabPf{\cdot}{e_1}{\VVAR}{(S_1 \arr S)}{(\tlam{x}Mbody)}   {Subderivation}
             \eqPf{e_1}{(\Lam{x} ebody)}   {By \Lemmaref{lem:inversion} (0)}
              \elabPf{\cdot, x : S_1}{ebody}{\VVAR''}{S}{Mbody}   {\ditto}
              \proofsep
              \elabPf{\cdot}{e_2}{\VVAR_2}{S_1}{W}   {Subderivation ($M_2 = W$)}
              \elabPf{\cdot}{e_2}{\VAL}{S_1}{W}   {By \Lemmaref{lem:value-mono}}
              \FreePfC{$W$ is $\N$-free} {$M_1 \, W$ is $\N$-free}
               \elabPf{\cdot}{v}{\VAL}{S_1}{W}   {By \Lemmaref{lem:syntactic-value}}
            \Hand
              \elabPf{\cdot}{[v/x]ebody}{\VVAR'}{S}{[W/x]Mbody}   {By \Lemmaref{lem:elab-expr-subst} (1)}
            \Hand  \valleqPf{\VVAR'}{\VVAR}   {\ditto}
              \srcredNPf{\expapp{(\Lam{x} ebody)}{v}}{[v/x]ebody}   {By \SrcRedBetaV}
            \Hand
              \srcstepsPf{\expapp{(\Lam{x} ebody)}{v}}{[v/x]ebody}   {By \SrcStepContextV}
            \end{llproof}

            \medskip
            
          \item If $\E = (\E_1 \, M_2)$, then:

            \begin{llproof}
              \stepPf{M_1 \, M_2}{M'} {Given}
              \stepPf{\underbrace{\E_1[M_R]}_{M_1} \, M_2}{\underbrace{\E_1[M_R']}_{M_1'} \, M_2}  {\byinv{\StepContext}}
              \stepRPf{M_R}{M_R'}  {\byinv{\StepContext}}
              \stepPf{\E_1[M_R]}{\E_1[M_R']}  {By \StepContext}
              \stepPf{M_1}{M_1'}  {By known equalities}
              \decolumnizePf
              \elabPf{\cdot}{e_1}{\VVAR_1}{(S_1 \arr S)}{M_1}   {Subderivation}
              \srcstepsPf{e_1}{e_1'}   {By i.h.}
              \elabPf{\cdot}{e_1'}{\VVAR_1'}{(S_1 \arr S)}{M_1'}  {\ditto}
          \Hand  \srcstepsPf{\expapp{e_1}{e_2}}{\expapp{e_1'}{e_2}}  {By \SrcStepContextV}
          \Hand    \elabPf{\cdot}{\expapp{e_1'}{e_2}}{\NONVAL}{S}{M_1' \, M_2}  {By \Earrelim}
            \end{llproof}

            If $M$ is $\N$-free, then $M_1$ is $\N$-free and the i.h.\ is sufficient for
            ``moreover'' part (2).
            
          \medskip

          \item If $\E = (M_1 \, \E_2)$ where $M_1$ is a value, then
            we have $M_2 \step M_2'$.

            If $M$ is not $\N$-free, then:

            \begin{llproof}
              \elabPf{\cdot}{e_2}{\VVAR_2}{S_1}{M_2}  {Subderivation}
              \srcstepsPf{e_2}{e_2'}   {By i.h.}
              \elabPf{\cdot}{e_2'}{\VVAR_2'}{S_1}{M_2'}  {\ditto}
          \Hand  \srcstepsPf{\expapp{e_1}{e_2}}{\expapp{e_1}{e_2'}}  {By \SrcStepContextN}
          \Hand  \elabPf{\cdot}{\expapp{e_1}{e_2'}}{\NONVAL}{S}{M_1 \, M_2'}  {By \Earrelim}
            \end{llproof}

            If $M$ \emph{is} $\N$-free, then:

            \begin{llproof}
              \elabPf{\cdot}{e_1}{\VVAR_1}{(S_1 \arr S)}{M_1}   {Subderivation}
              \elabPf{\cdot}{e_1}{\VAL}{(S_1 \arr S)}{M_1}   {By \Lemmaref{lem:value-mono}}
              \eqPf{e_1}{v_1}   {By \Lemmaref{lem:syntactic-value}}
              \elabPf{\cdot}{e_2}{\VVAR_2}{S_1}{M_2}  {Subderivation}
              \srcstepsPf{e_2}{e_2'}   {By i.h.}
              \elabPf{\cdot}{e_2'}{\VVAR_2'}{S_1}{M_2'}  {\ditto}
          \Hand  \srcstepsPf{\expapp{v_1}{e_2}}{\expapp{v_1}{e_2'}}  {By \SrcStepContextV}
          \Hand    \elabPf{\cdot}{\expapp{v_1}{e_2'}}{\NONVAL}{(S_1 \arr S)}{M_1' \, M_2}  {By \Earrelim}
            \end{llproof}
          \end{itemize}

    \DerivationProofCase{\Ealleointro}
          {\arrayenvbl{
              \cdot \entails \elab{e}{\VAL}{[\V/\eovar]S_0}{M_1}
              \\
              \cdot \entails \elab{e}{\VAL}{[\N/\eovar]S_0}{M_2}
            }
          }
          {\cdot \entails \elab{e}{\VAL}{(\alleo{\eovar} S_0)}{\tpair{M_1}{M_2}}
          }

          By inversion on $\tpair{M_1}{M_2} \step M'$, either
          $M' = \tpair{M_1'}{M_2}$ and $M_1 \step M_1'$,
          or $M' = \tpair{M_1}{M_2}$ and $M_2 \step M_2'$.
          
          In the first case:
          
          \begin{llproof}
            \elabPf{\cdot}{e}{\VAL}{[\V/\eovar]S_0}{M_1}   {Subderivation}
            \stepPf{M_1}{M_1'}  {Above}
            \elabPf{\cdot}{e}{\VAL}{[\V/\eovar]S_0}{M_1'}   {By i.h.\ ($\VVAR = \VAL$ so $e' = e$)}
            \proofsep
            \elabPf{\cdot}{e}{\VAL}{[\N/\eovar]S_0}{M_2}   {Subderivation}
            \proofsep
          \Hand  \elabPf{\cdot}{e}{\VAL}{(\alleo{\eovar} S_0)}{\tpair{M_1'}{M_2}}
                {By \Ealleointro}
          \Hand  $\Dee \derives~$ \srcstepsPf{e}{e}   {}
          (1)\Hand \eqPf{e'}{e}    {Above}
          (2)\Hand \FreePf{$\Dee$ does not use \SrcStepContextN}    {Zero steps in $e \srcsteps e$}
          \end{llproof}

          The second case is similar.
   
    \DerivationProofCase{\Ealleoelim}
         {\cdot \entails \elab{e}{\VVAR}{(\alleo{\eovar} S_0)}{M_0}}
         {
           \arrayenvbl{
                \cdot
                \entails
                \elab{e}
                     {\VVAR}
                     {[\V/\eovar]S_0}
                     {
                       (\tproj{1} M_0)
                     }
                \\
                \cdot
                \entails
                \elab{e}
                     {\VVAR}
                     {[\N/\eovar]S_0}
                     {(\tproj{2} M_0)}
           }
         }

         First conclusion:

         \begin{llproof}
           \stepPf{(\tproj{1} M_0)}{M'}   {Given}
         \end{llproof}

         Either $M' = \tproj{1} M_0'$ where $M_0 \step M_0'$,
         or $M' = W_1$ and $M_0 = \tpair{W_1}{W_2}$.
         
         \begin{itemize}
         \item In the first case:

           \begin{llproof}
             \elabPf{\cdot}{e}{\VVAR}{(\alleo{\eovar} S_0)}{M_0}   {Subderivation}
             \stepPf{M_0}{M_0'}  {Above}
             \elabPf{\cdot}{e'}{\VVAR}{(\alleo{\eovar} S_0)}{M_0'}   {By i.h.}
           \Hand  $\Dee \derives~$ \srcstepsPf{e}{e'}  {\ditto}
           (1)\Hand    \FreePf{If $\VVAR = \VAL$ then $e = e'$} {\ditto}
           \FreePf {If $M_0$ is $\N$-free then $\Dee$ does not use \SrcStepContextN} {\ditto}
          (2)\Hand    \FreePf {If $(\tproj{1} M_0)$ is $\N$-free then $\Dee$ does not use \SrcStepContextN} {Definition of $\N$-free}
           \Hand  \elabPf{\cdot}{e'}{\VVAR}{[\V/\eovar]S_0}{M_0'}   {By \Ealleoelim}
           \end{llproof}

         \item In the second case:

           \begin{llproof}
             \elabPf{\cdot}{e}{\VVAR}{(\alleo{\eovar} S_0)}{\tpair{W_1}{W_2}}   {Subderivation}
           \Hand  \elabPf{\cdot}{e}{\VVAR}{[\V/\eovar]S_0}{W_1}    {By \Lemmaref{lem:inversion} (1)}
             \stepPf{\tproj{1} \tpair{W_1}{W_2}}{W_1}   {Given}
           (1)\Hand  \LetPf{e'}{e}  {}
           \Hand   \srcstepsPf{\Dee \derives~~e}{e'}  {$e' = e$}
           (2)\Hand  \FreePf{$\Dee$ does not use \SrcStepContextN}  {Zero steps in $e \srcsteps e'$}
           \end{llproof}
         \end{itemize}

         Second conclusion:

         Either $M' = \tproj{2} M_0'$ where $M_0 \step M_0'$,
         or $M' = W_1$ and $M_0 = \tpair{W_1}{W_2}$.

         \begin{itemize}
         \item In the first case: similar to the first subcase of the $[\V/\eovar]$ part above.
         \item In the second case: similar to the second subcase of the $[\V/\eovar]$ part above.
         \end{itemize}

    \DerivationProofCase{\Eallintro}
          {\cdot, \alpha \entails \elab e \VAL S M
          }
          {\cdot \entails \elab{e}{\VAL}{\All{\alpha} S}{\ttylam M}}
          
          This case is impossible, because $(\ttylam M) \step M'$ is not derivable.

    \DerivationProofCase{\Eallelim}
         {\cdot \entails \elab{e}{\VVAR}{\All{\alpha} S_0}{M_0}
          \\
          \cdot \entails S' \type}
         {\cdot \entails \elab{e}{\VVAR}{[S' / \alpha]S_0}{\ttyapp{M_0}}}

         \begin{llproof}
           \stepPf{(\ttyapp{M_0})}{M'}   {Given}
           \eqPf{M_0}{(\ttylam M')}  {By inversion}
           \elabPf{\cdot}{e}{\VVAR}{\All{\alpha} S_0}{M_0}   {Subderivation}
           \elabPf{\cdot}{e}{\VVAR}{\All{\alpha} S_0}{(\ttylam M')}   {By above equality}
           \elabPf{\cdot, \alpha \type}{e}{\VVAR}{S_0}{M'}   {By \Lemmaref{lem:inversion} (3)}
          \Hand
           \elabPf{\cdot}{e}{\VVAR}{[S'/\alpha]S_0}{M'}   {By \Lemmaref{lem:elab-type-subst}}
         \Hand  \srcstepsPf{e}{e}  {Zero steps}
         \end{llproof}

         ``Moreover'' parts (1) and (2) are immediately satisfied, because $e' = e$.

    \DerivationProofCase{\Esuspintro}
          {\cdot \entails \elab{e}{\VVAR}{S_0}{M_0}
          }
          {\arrayenvbl{
            \cdot \entails \elab{e}{\VVAR}{\susp{\V} S_0}{M_0}
            \\
            \cdot \entails \elab{e}{\VAL}{\susp{\N} S_0}{\Thunk M_0}
          }}

        The second conclusion is not possible, because $(\Thunk M_0) \step M'$
        is not derivable.

        For the first conclusion:  We have $M_0 = M$.

        \begin{llproof}
          \elabPf{\cdot}{e}{\VVAR}{S_0}{M}   {Subderivation}
        \Hand  $\Dee \derives~$ \srcstepsPf{e}{e'}  {By i.h.}
          \elabPf{\cdot}{e'}{\VVAR'}{S_0}{M'}  {\ditto}
        \Hand \valleqPf{\VVAR'}{\VVAR}  {\ditto}
        (1)\Hand    \FreePf{If $\VVAR = \VAL$ then $e = e'$} {\ditto}
        (2)\Hand   \FreePf {If $M$ is $\N$-free then $\Dee$ does not use \SrcStepContextN} {\ditto}
        \Hand  \elabPf{\cdot}{e'}{\VVAR'}{\susp{\V} S_0}{M'}  {By \Esuspintro}
        \end{llproof}

    \DerivationProofCase{\Esuspelim{\V}}
         {\cdot \entails \elab{e}{\VVAR}{\susp{\V} S}{M}}
         {\cdot \entails \elab{e}{\VVAR}{S}{M}}

         By i.h.\ and \Esuspelim{\V}.

    \DerivationProofCase{\Esuspelim{\N}}
         {\cdot \entails \elab{e}{\VVAR_0}{\susp{\N} S}{M_0}}
         {\cdot \entails \elab{e}{\NONVAL}{S}{(\Force M_0)}}

         We have $(\Force M_0) \step M'$.  If $M_0 \step M_0'$,
         use the i.h.\ and then apply \Esuspelim{\N}.
         Otherwise, $M_0 = \Thunk M'$.

         \begin{llproof}
           \elabPf{\cdot}{e}{\VVAR_0}{\susp{\N} S}{\Thunk M'}   {Subderivation}
         \Hand  \elabPf{\cdot}{e}{\VVAR_0'}{S}{M'}   {By \Lemmaref{lem:inversion} (2)}
         \Hand  \valleqPf{\VVAR_0'}{\NONVAL}  {By def.\ of $\valleq$}
           \decolumnizePf
         \Hand  \srcstepsPf{e}{e}   {Zero steps}
            (1)\Hand  \FreePf{(holds vacuously)} {$\VVAR = \NONVAL$}
            (2)\Hand  \FreePf{Derivation does not use \SrcStepContextN} {Zero steps}
         \end{llproof}

    \DerivationProofCase{\Eprodintro}
             {
                    \cdot \entails \elab{e_1}{\VVAR}{S_1}{M_1}
                    \\
                    \cdot \entails \elab{e_2}{\VVAR}{S_2}{M_2}
             }
             {
                \cdot \entails \elab{\Pair{e_1}{e_2}}{\VVAR}{(S_1 * S_2)}{\tpair{M_1}{M_2}}
             }

             Apply the i.h.\ to the appropriate subderivation, then apply \Eprodintro
             and \SrcStepContextV.

             ``Moreover'' part (1): \\
             If $\VVAR = \VAL$,
             the i.h.\ shows that $e_1' = e_1$ (or $e_2' = e_2$ if $M_2 \step M_2'$);
             thus, $\Pair{e_1'}{e_2} = \Pair{e_1}{e_2}$ (or $\Pair{e_1}{e_2'} = \Pair{e_1}{e_2}$).

             ``Moreover'' part (2): \\
             If $\tpair{M_1}{M_2}$ is $\N$-free, then $M_1$ and $M_2$ are $\N$-free,
             and the i.h.\ shows that $\Dee_0 \derives e_k \srcsteps e_k'$ does not use
             \SrcStepContextN.  Therefore $\Pair{e_1}{e_2} \srcsteps \dots$ does not use 
             \SrcStepContextN.

       \DerivationProofCase{\Eprodelim{k}}
              {   
                \cdot \entails \elab{e_0}{\VVAR_0}{(S_1 * S_2)}{M_0}
              }
              {\cdot
                \entails
                \elab{(\Proj{k} e_0)}{\NONVAL}{S_k}{(\tproj{k} M_0)}
              }

             We have $(\tproj{k} M_0) \step M'$.

             If $M_0 \step M_0'$
             then use the i.h.\ and apply \Eprodelim{k}.

             Otherwise, $M_0 = \tpair{W_1}{W_2}$ and $M' = W_k$.

             \begin{itemize}
               \item 
                  If $M$ is \emph{not} $\N$-free, we can use \SrcRedProjN:

                     \begin{llproof}
                     \Hand  \elabPf{\cdot}{e_k}{\VVAR_k}{S_k}{W_k}   {By \Lemmaref{lem:inversion} (6)}
                       \eqPf{e_0}{\Pair{e_1}{e_2}}  {\ditto}
                       \proofsep
                     \Hand  \srcstepPf{\Proj{k} \Pair{e_1}{e_2}}{e_k}   {By \SrcRedProjN}
                     \end{llproof}

                     ``Moreover'' part (2): $M$ is not $\N$-free.

               \item
                 If $M$ is $\N$-free, we have the obligation not to use \SrcRedProjN.

                 \begin{llproof}
                   \elabPf{\cdot}{e_0}{\VVAR_0}{(S_1 * S_2)}{\tpair{W_1}{W_2}}   {Subderivation}
                   \elabPf{\cdot}{e_0}{\VAL}{(S_1 * S_2)}{\tpair{W_1}{W_2}}   {By \Lemmaref{lem:value-mono}}
                   \elabPf{\cdot}{v}{\VAL}{(S_1 * S_2)}{\tpair{W_1}{W_2}}   {By \Lemmaref{lem:syntactic-value}}
                   \elabPf{\cdot}{\Pair{v_1}{v_2}}{\VAL}{(S_1 * S_2)}{\tpair{W_1}{W_2}}   {By \Lemmaref{lem:inversion} (6)}
                 \Hand  \elabPf{\cdot}{v_k}{\VAL}{S_k}{W_k}   {\ditto}
                 \decolumnizePf
                 \Hand  \srcstepsPf{\Proj{k} \Pair{v_1}{v_2}}{v_k}   {By \SrcRedProjV and \SrcStepContextV}
                 \end{llproof}

                 ``Moreover'' part (2): we did not use \SrcStepContextN.
             \end{itemize}

          ``Moreover'' part (1): $\VVAR = \NONVAL$.

       \DerivationProofCase{\Esumintro{k}}
             {\cdot \entails \elab{e_0}{\VVAR}{S_k}{M_0}
             }
             {\cdot \entails \elab{(\Inj{k} e_0)}{\VVAR}{(S_1 + S_2)}{(\tinj{k} M_0)}}

             \begin{llproof}
               \stepPf{(\tinj{k} M_0)}{M'}   {Given}
               \eqPf{M'}{(\tinj{k} M_0') ~\AND~ M_0 \step M_0'}   {By inversion}
               \elabPf{\cdot}{e_0}{\VVAR}{S_k}{M_0}   {Subderivation}
               \elabPf{\cdot}{e_0'}{\VVAR'}{S_k}{M_0'}   {By i.h.}
            \Hand \valleqPf{\VVAR'}{\VVAR}  {\ditto}
               \srcstepsPf{e_0}{e_0'}  {\ditto}
             \Hand  \srcstepsPf{(\Inj{k} e_0)}{(\Inj{k} e_0')}  {}
             \Hand  \elabPf{\cdot}{(\Inj{k} e_0')}{\VVAR'}{(S_1 + S_2)}{(\tinj{k} M_0')}   {By \Esumintro{k}}
             \end{llproof}

             ``Moreover'' part (1) follows from the i.h.

             ``Moreover'' part (2) follows from the i.h.:  If $\tinj{k} M_0$ is $\N$-free,
             then $M_0$ is $\N$-free; if $e_0 \srcsteps e_0'$ does not use \SrcStepContextN,
             we can derive $(\Inj{k} e_0) \srcsteps (\Inj{k} e_0')$ without \SrcStepContextN.

       \DerivationProofCase{\Esumelim}
              {   
                  \cdot \entails \elab{e_0}{\VVAR_0}{(S_1 + S_2)}{M_0} \\
                  \arrayenvbl{
                      \cdot, x_1 : S_1 \entails \elab{e_1}{\VVAR_1}{S}{M_1}
                      \\
                      \cdot, x_2 : S_2 \entails \elab{e_2}{\VVAR_2}{S}{M_2}
                  }
              }
              {\cdot
                \entails
                \arrayenvl{
                    {\Casesum{e_0}{x_1}{e_1}{x_2}{e_2}}
                    \xetasym{\NONVAL}
                    {S}
                    ~\elabsymbol~
                    {\tcase{M_0}{x_1}{M_1}{x_2}{M_2}}
                }
              }

              First note that ``Moreover'' part (1) is vacuously satisfied, since $\VVAR = \NONVAL$.

             We have $\tcase{M_0}{x_1}{M_1}{x_2}{M_2} \step M'$.
             Either (1) $M_0 \step M_0'$ and $M' = \tcase{M_0'}{x_1}{M_1}{x_2}{M_2}$
             or (2) $M_0 = (\tinj{k} W)$ and $M' = [W/x_k]M_k$.
             
             For (1), apply the i.h.\ to $\cdot \entails \elab{e_0}{\VVAR}{(S_1 + S_2)}{M_0}$
             and apply \Esumelim.  ``Moreover'' part (2) follows from the i.h.

             For (2) if $M$ is \emph{not} $\N$-free, we can use
             \SrcStepContextN:

             \begin{llproof}
               \elabPf{\cdot}{e_0}{\VVAR_0}{(S_1 + S_2)}{(\tinj{k} W)}  {Subderivation}
               \eqPf{e_0}{\Inj{k} e_0'}   {By \Lemmaref{lem:inversion} (4)}
               \elabPf{\cdot}{e_0'}{\VVAR_0'}{S_k}{W}  {\ditto}
               \elabPf{\cdot, x_k : S_k}{e_k}{\VVAR_k}{S}{M_k}  {Subderivation}
            \Hand   \elabPf{\cdot}{[e_0'/x_k]e_k}{\VVAR_k'}{S}{[W/x_k]M_k}  {By \Lemmaref{lem:elab-expr-subst} (1)}
               \decolumnizePf
               \eqPf{e_0}{\Inj{k} e_0'}   {Above}
               \srcredNPf{\Casesum{\Inj{k}e_0'}{x_1}{e_1}{x_2}{e_2}}
                          {[e_0'/x_k]e_k}   {By \SrcRedCaseN}
            \Hand   \srcstepsPf{\Casesum{e_0}{x_1}{e_1}{x_2}{e_2}}
                          {[e_0'/x_k]e_k}   {By \SrcStepContextN}
             \end{llproof}

             For (2) if $M$ is $\N$-free, we can show
             $\cdot \entails \elab{[e_0'/x_k]e_k}{\VVAR_k'}{S}{[W/x_k]M_k}$
             as in the case when $M$ is not $\N$-free, but
             we have an obligation (``Moreover'' part (2)) not to use \SrcRedCaseN.

                 \begin{llproof}
                   \elabPf{\cdot}{e_0}{\VVAR_0}{(S_1 + S_2)}{\tinj{k} W}   {Subderivation}
                   \elabPf{\cdot}{e_0}{\VAL}{(S_1 + S_2)}{\tinj{k} W}   {By \Lemmaref{lem:value-mono}}
                   \eqPf{e_0}{v}   {By \Lemmaref{lem:syntactic-value}}
                   \elabPf{\cdot}{v}{\VAL}{(S_1 + S_2)}{\tinj{k} W}   {By above equality}
                   \eqPf{v_0}{\Inj{k} v_0'}   {By \Lemmaref{lem:inversion} (4)}
                 \Hand  \srcstepPf{e}{[v_0'/x]e_k}   {By \SrcRedCaseV and \SrcStepContextV}
                 \end{llproof}

       \DerivationProofCase{\Erecintro}
             {\cdot \entails \elab{e}{\VVAR}{\big[(\Rec{\alpha}S_0)/\alpha\big]S_0}{M_0}
             }
             {\cdot \entails \elab{e}{\VVAR}{\Rec{\alpha} S_0}{(\troll M_0)}}

             By inversion, $M_0 \step M_0'$ and $M' = (\troll M_0')$.

             \begin{llproof}
               \elabPf{\cdot}{e}{\VVAR}{\big[(\Rec{\alpha}S_0)/\alpha\big]S_0}{M_0}  {Subderivation}
               \elabPf{\cdot}{e'}{\VVAR}{\big[(\Rec{\alpha}S_0)/\alpha\big]S_0}{M_0'}  {By i.h.}
             \Hand  \srcstepsPf{e}{e'} {\ditto}
             \Hand  \elabPf{\cdot}{e'}{\VVAR}{\Rec{\alpha} S_0}{(\troll M_0')}   {By \Erecintro}
             \end{llproof}

             ``Moreover'' parts (1) and (2) follow from the i.h.

       \DerivationProofCase{\Erecelim}
              {   \cdot \entails \elab{e}{\VVAR_0}{\Rec{\alpha} S_0}{M_0}
              }
              {
                \cdot \entails \elab{e}{\NONVAL}{\big[(\Rec{\alpha}S_0)/\alpha\big]S_0}{(\tunroll M_0)}
              }

              We have $(\tunroll M_0) \step M'$.
              Either (1) $M' = (\tunroll M_0')$ and $M_0 \step M_0'$
              or
              (2) $M_0 = (\troll W)$ and $M' = W$.

              If (1), similar to the \Erecintro case.  %

              If (2):
              
              \begin{llproof}
                \elabPf{\cdot}{e}{\VVAR_0}{\Rec{\alpha} S_0}{(\troll W)}   {Subderivation}
             \Hand   \elabPf{\cdot}{e'}{\VVAR'}{\big[(\Rec{\alpha}S_0)/\alpha\big]S_0}{W}   {By \Lemmaref{lem:inversion} (5)}
              \Hand
                \srcstepsPf{e}{e'}   {\ditto}
              \end{llproof}

              ``Moreover'' part (1) is vacuously satisfied; part (2) follows from the i.h.
  \qedhere
  \end{itemize}
\end{proof}

\thmconsistencystar*
\begin{proof}
  By induction on the derivation of $M \steps W$.

  If $M = W$ then let $e'$ be $e$.  By \Lemmaref{lem:value-mono},
  $\cdot \entails \elab {e'} {\VAL} {S} {W}$.  The source expression $e$ steps
  to itself in zero steps, so $e \srcsteps e$, \ie $e \srcsteps e'$.
  We did not use \SrcStepContextN.
  
  Otherwise, we have $M \step M'$ and $M' \steps W$ for some $M'$.
  By \Theoremref{thm:consistency}, $\cdot \entails \elab {e_1} {\VVAR} {S} {M'}$,
  where $e \srcsteps e_1$; also, if $M$ is $\N$-free, then \Theoremref{thm:consistency}
  showed that we did not use \SrcStepContextN.
  If $M$ is $\N$-free, then by \Lemmaref{lem:step-n-freeness},
  $M'$ is $\N$-free.
  By i.h., there exists $e'$ such that
  $e_1 \srcsteps e'$ and $\cdot \entails \elab {e'} {\VAL} {S} {W}$.
  It follows that $e \srcsteps e'$.
\end{proof}

If a source type, economical typing judgment, or target term is not $\N$-free,
we say it is \emph{$\N$-tainted}.

\econbisubformula*
\begin{proof}
  By induction on the given derivation.
  
  \begin{itemize}
  \ProofCaseRule{\Rallelim}
     If $S'$ is not $\N$-free, then $e = \tyapp{e_0}{S'}$ is not $\N$-free.
     Otherwise, we have that $S = [S'/\alpha]S_0$ is not $\N$-free;
     since $S'$ is $\N$-free, $S_0$ must not be $\N$-free, which lets us
     apply the i.h., giving the resut.
  
  \ProofCasesRules{\Ralleoelim, \Rsuspelim{\V}, \Rsuspelim{\epsilon}}
     The i.h.\ gives the result.

  \ProofCasesRules{\Rvar, \Rfixvar}  The type $S$ appears in $\Gamma$,
      so $\Gamma$ is $\N$-tainted.

  \ProofCaseRule{\Ranno}  The type $S$ appears in $e = \Anno{e_0}{S}$,
      so $e$ is $\N$-tainted.

  \ProofCaseRule{\Rarrelim}  If $S$ is $\N$-tainted then $S_1 \arr S$
      is $\N$-tainted, and the result follows by i.h.

  \ProofCasesRules{\Rprodelim, \Rrecelim} Similar to the \Rarrelim case.
  \qedhere
  \end{itemize}
\end{proof}

\econpreservesnfreeness*
\begin{proof}
  By induction on the given derivation.
  We can simply follow the proof of \Theoremref{thm:econ}, observing that if the given
  impartial judgment is $\N$-free, the resulting economical judgment is $\N$-free.
  For example, in the \Iarrintro case, we have $\tau = (\tau_1 \garr{\epsilon} \tau_2)$.
  Since we know that $\tau$ is $\N$-free, $\epsilon = \V$, so the translation
  of $\tau$ is $\big(\susp{\V} \econtrans{\tau_1}\big) \arr \econtrans{\tau_2}$,
  which is $\N$-free.  Note that \Definitionref{def:n-free-impartial} (2)(b) bars
  $\xsyn x \NONVAL \tau$ declarations---which would result in
  $x : \susp{\N} \cdots$---from $\gamma$.
\end{proof}

\elabpreservesnfreeness*
\begin{proof}
  By induction on the given derivation.
  
  \begin{itemize}
    \ProofCaseRule{\Runitintro}  Apply \Eunitintro.

    \ProofCaseRule{\Ralleointro}
       Impossible: $S = \alleo{\eovar}S_0$, which is not $\N$-free (\Definitionref{def:n-free-econ} (1)(ii)).
       
    \ProofCaseRule{\Ralleoelim}

       We have $\syntypee{\Gamma}{e}{\VVAR}{\alleo{\eovar} S_0}$,
       where $S = [S'/\eovar]S_0$.

       By \Definitionref{def:n-free-econ} (1)(ii), the type
       $\alleo{\eovar} S_0$ is $\N$-tainted.
       So, by \Lemmaref{lem:econ-bi-subformula},
       at least one of $\Gamma$ and $e$ is $\N$-tainted.
       But it was given that the judgment $\syntypee{\Gamma}{e}{\VVAR}{S}$
       is $\N$-free, which means that $\Gamma$ and $e$ are $\N$-free.
       We have a contradiction: this case is impossible.

    \ProofCaseRule{\Rsuspintro (first conclusion)}
        Use the i.h.\ and apply rule \Esuspintro (first conclusion).

    \ProofCaseRule{\Rsuspintro (second conclusion)}
       Impossible: $S = \susp{\N}S_0$, which is not $\N$-free.

    \ProofCaseRule{\Rsuspelim{\V}}
       Use the i.h.\ and apply rule \Esuspelim{\V}.

     \ProofCaseRule{\Rsuspelim{\epsilon}}

       We have $\syntypee{\Gamma}{e}{\VVAR}{\susp{\epsilon} S}$.
       
       If $\epsilon = \V$ then use the i.h., apply rule \Esuspelim{\V}. %
       
       Otherwise, $\susp{\epsilon} S$ is not $\N$-free.
       As in the \Ralleoelim case, we can use \Lemmaref{lem:econ-bi-subformula}
       to reach a contradiction.
  
     \ProofCasesRules{\Rvar, \Rfixvar, \Rfix,
       \Rallintro, \Rallelim,
       \Rarrintro, \Rarrelim,
       \Rprodelim,
        \Rsumintro{k}, \Rsumelim,
        \Rrecintro, \Rrecelim}

        Use the i.h.\ on all subderivations (if any) and apply the corresponding elaboration rule,
        \eg in the \Rfix case, apply \Efix.

     \ProofCasesRules{\Rsub, \Ranno}  Use the i.h.

     \ProofCaseRule{\Rprodintro}  Use the i.h.\ on each subderivation,
     and apply \Eprodintro.
  \qedhere
  \end{itemize}
\end{proof}

\fi
\end{document}